\pdfoutput=1
\documentclass[runningheads,orivec]{llncs}
\setcounter{secnumdepth}{3}
\usepackage{graphicx} % Required for inserting images
\usepackage{lineno}
\usepackage[T1]{fontenc}
\usepackage{amssymb}
\usepackage{tikz-cd}
\usepackage{array}
\usepackage{mdframed}
\usepackage{thm-restate}
\usepackage{pgf-pie}
 \usepackage{enumitem}

\usepackage{underscore}
\usepackage{pgfornament}
\usepackage{scalerel}
\usepackage{MnSymbol}
\usepackage{csquotes}
\usepackage{colortbl}
\usepackage{hyperref}
\usepackage{cleveref}
\usepackage{color}

\urlstyle{rm}

\usepackage{afterpage} 
\usepackage{hhline}
\usepackage{multirow}
\usepackage{longtable}
\usepackage{adjustbox}
\usepackage[tableposition=below]{caption}
%\captionsetup[longtable]{skip=2em}

% \usepackage{lineno}
% \linenumbers

% \theoremstyle{definition}
% \newtheorem{definition}{Definition}[section]
% \newtheorem{lemma}{Lemma}
% \newtheorem{theorem}{Theorem}
% \newtheorem{proposition}{Proposition}
% \newtheorem{example}{Example}
\usepackage[T1]{fontenc}
\usepackage[ttdefault=true]{AnonymousPro}

\newcommand{\A}{\ensuremath{\mathcal{A}}}

\newcommand{\B}{\ensuremath{\mathcal{B}}}

\newcommand{\C}{\ensuremath{\mathcal{C}}}

\newcommand{\T}{\ensuremath{\mathcal{T}}}

\newcommand{\s}{\ensuremath{\sigma}}

\renewcommand{\P}{\ensuremath{\mathcal{P}}}

\newcommand{\N}{\ensuremath{\mathbb{N}_{\omega}}}

\newcommand{\qf}[1]{\ensuremath{QF(#1)}}

\newcommand{\eq}[1]{\ensuremath{Eq(#1)}}

\newcommand{\ax}[1]{\ensuremath{Ax(#1)}}

\newcommand{\minmod}{\ensuremath{\textbf{minmod}}}

\newcommand{\Forall}[1]{\ensuremath{\forall\,#1.\:}}

\newcommand{\Exists}[1]{\ensuremath{\exists\,#1.\:}}

\newcommand{\Teven}{\ensuremath{\T_{even}^{\infty}}}

\newcommand{\Tninfty}{\ensuremath{\T_{n,\infty}}}

\newcommand{\bb}{\ensuremath{\varsigma}}

\newcommand{\Tbb}{\ensuremath{\T_{\bb}}}

\newcommand{\Tone}{\ensuremath{\T_{\mathrm{I}}}}

\newcommand{\overarrow}[1]{\ensuremath{\overrightarrow{#1}}}

\newcommand{\vars}{\ensuremath{vars}}

\newcommand{\Tgeqn}{\ensuremath{\T_{\geq n}}}

\newcommand{\Tinfty}{\ensuremath{\T_{\infty}}}

\newcommand{\Tmn}{\ensuremath{\T_{\langle m,n\rangle}}}

\newcommand{\Tbbs}{\ensuremath{\T^{s}_{\bb}}}

\newcommand{\wit}{\ensuremath{wit}}

\newcolumntype{P}[1]{>{\centering\arraybackslash}p{#1}}

\newcommand{\F}{\ensuremath{\mathcal{F}}}

\renewcommand{\P}{\ensuremath{\mathcal{P}}}

\newcommand{\NNEQ}[1]{\ensuremath{\neq(#1_{1},\ldots,#1_{n})}}

\newcommand{\NNNEQ}[2]{\ensuremath{\neq(#1_{1},\ldots,#1_{#2})}}

\newcommand{\nocontentsline}[3]{}
\newcommand{\toclesslab}[3]{\bgroup\let\addcontentsline=\nocontentsline#1{#2\label{#3}}\egroup}

\newcommand{\LIM}{\ensuremath{\lim_{n\rightarrow\infty}}}

\newcommand{\An}{\ensuremath{A_{n}}}

\renewcommand{\AA}[1]{\ensuremath{A_{#1}}}

\newcommand{\fr}[2]{\ensuremath{#1/#2}}

\newcommand{\No}{\ensuremath{\mathbb{N}^{*}}}

\newcommand{\spec}[1]{\ensuremath{\textit{Spec}(#1)}}

\newcommand{\specn}[1]{\ensuremath{\textit{Spec}_{n}(#1)}}

\newcommand{\Spec}{\ensuremath{\textit{Spec}}}

\newcommand{\bv}{\ensuremath{\textbf{BV}[n]}}

\newcommand{\dom}[1]{\ensuremath{\textit{dom}(#1)}}

\newcommand{\Tpt}{\ensuremath{\T_{=2^{i}}}}

\newcommand{\Tnpt}{\ensuremath{\T_{\neq 2^{i}}}}

\newcommand{\mf}{\ensuremath{G}}

\let\tp\texorpdfstring

%\newcommand{\mytitle}{Number theory and theory combination: applications of density to SMT}

%\newcommand{\mytitle}{Number Theory Combination: from density to SMT and back}

% \title{Number theory combination: natural density and SMT}
% \author{Anonymous}{Anonymous}{}{}{}

% \titlerunning{Number theory and theory combination}

% \authorrunning{Anonymous}

% \keywords{Satisfiability Modulo Theories, Natural Density, Theory Combination}
% \ccsdesc[100]{\textcolor{red}{Theory of Computation - Logic}}

\begin{document}
%\linenumbers

\title{Number theory combination: \\ natural density and SMT}
%\titlerunning{Number theory and theory combination}

\author{Guilherme Toledo and Yoni Zohar}
\institute{Bar-Ilan University, Israel}

\maketitle              % typeset the header of the contribution
\begin{abstract}
The study of theory combination in Satisfiability Modulo Theories (SMT) involves various model theoretic properties (e.g., stable infiniteness, smoothness, etc.).
%finite witnessability and the computability of a minimal model function. 
We show that 
such properties 
%unusual theories may be found, whose behavior 
can be partly captured by the natural density of the spectrum of the studied theories, which is the set of sizes of their finite models. %, as well as the computability of this quantity as a real number. 
This enriches the toolbox of the theory combination researcher, by providing
new tools to determine the possibility of combining theories.
It also reveals interesting and surprising connections between theory combination 
and number theory.

\end{abstract}

\section{Introduction}

Imagine this:
you are a researcher in Satisfiability Modulo Theories (SMT)~\cite{BSST21}, 
studying a theory $\T$, which is the combination of 
theories $\T_{1}$ and $\T_{2}$ 
(in the same way that, say, the theory of lists of integers is the combination of the theories of lists and integers). 
Given algorithms for $\T_{1}$ and $\T_{2}$, 
you can plug them together using theory combination methods, such as
% each requiring its own model theoretic property as a prerequisite.
% Examples include:
Nelson and Oppen's method~\cite{NelsonOppen}, 
polite combination \cite{ranise:inria-00000570}, 
shiny combination \cite{DBLP:journals/jar/TinelliZ05}, 
or gentle combination~\cite{gentle}. 

But, before you 
can produce a decision procedure for $\T$,
 you must test certain properties of $\T_{1}$ and $\T_{2}$, or their absence, 
to determine their applicability to the combination method.
% to show that the combination is or is not possible with some specific method. 
For example, using the Nelson-Oppen method requires 
that both theories are stably infinite, while using
the polite combination method requires that one of 
them is strongly polite.
 % While typically one wants to prove that a property holds, it is often useful to prove when a property does not hold,
 % because then we know we should switch to another method.
 % %
The obvious way of doing so is by directly applying the definitions of these properties, what can be highly non-trivial (for example, to prove a theory is strongly polite, one needs to construct a computable function satisfying an involved set of conditions).
%Constructing such a function is often difficult, and proving these properties is then tedious. 

In this paper, we give you alternative tests, based on
number theoretic
 {\em natural densities}~\cite{Tenenbaum},
computed over the {\em spectrum} of the theory~\cite{descriptivecomplexity}.
When testing whether a theory admits or lacks a theory combination property,
you can now use these tests.
We provide examples for cases where this is simpler
to do, compared to the direct application of the definitions.
Beyond the introduction of such tools,
the results of this paper relate number theory and theory combination
in surprising and insightful ways.
We focus on one-sorted theories, leaving many-sorted ones for future work.

\Cref{preliminaries} surveys relevant notions.  
\Cref{computability and density} contains our main results:
sufficient and necessary conditions
for theory combination properties, in terms of the natural density.
In \Cref{non empty} we provide  generalizations to non-empty signatures.
%and to other (non-natural) densities. 
\Cref{conclusion} summarizes, and gives directions for future research.

%{\it Due to lack of space, some proofs are deferred to the appendix.}

\section{Preliminaries}\label{preliminaries}

If $X$ is a set, $|X|$ denotes its cardinality.
We denote by $\aleph_{0}$ the cardinality of $\mathbb{N}$, which for us contains $0$;
the set $\mathbb{N}\setminus\{0\}$ is denoted by $\No$.

\subsection{First-order logic}\label{many sorted}

A first-order {\em signature} $\Sigma$ is a pair $(\F_{\Sigma}, \P_{\Sigma})$, where: $\F_{\Sigma}$ is a countable set of function symbols, each with an arity $n\in\mathbb{N}$; and $\P_{\Sigma}$ is a countable set of predicate symbols, each with an arity $n\in\mathbb{N}$, containing at least the equality $=$ of arity $2$. 
We denote by $\Sigma_{1}$ the signature with no function or predicate symbols other than $=$, which is therefore called {\em empty}. Assuming countably many variables, we define by structural induction {\em terms}, {\em literals}, {\em formulas}, and {\em sentences} (formulas without free variables) in the usual way. 
The set of all quantifier-free $\Sigma$-formulas is denoted by $\qf{\Sigma}$; the set of all variables in $\varphi$ shall be written as $\vars(\varphi)$. 

A {\em $\Sigma$-interpretation} $\A$ consists of: a non-empty set $\dom{\A}$, called the {\em domain} of $\A$; for each function symbol $f$ of arity $n$, a function $f^{\A}:\dom{\A}^{n}\rightarrow\dom{\A}$; for each predicate symbol $P$ of arity $n$, a subset $P^{\A}$ of $\dom{\A}^{n}$, where $=^{\A}$ is the identity; and, for every variable $x$, an element $x^{\A}$ of $\dom{\A}$. The value of a term $\alpha$ in $\A$ is denoted by $\alpha^{\A}$, while for a set of terms $\Gamma$ we make $\Gamma^{\A}=\{\alpha^{\A} : \alpha\in \Gamma\}$; if $\A$ satisfies the formula $\varphi$, we write $\A\vDash \varphi$. Recurrent formulas include those in \Cref{card-formulas}, that are satisfied by an interpretation $\A$ iff: $\A$ has at least $n$ elements, in the case of $\psi_{\geq n}$; $\A$ has at most $n$ elements, in the case of $\psi_{\leq n}$; and $\A$ has precisely $n$ elements, in the case of $\psi_{=n}$. 

\begin{figure}[t]
\begin{mdframed}
%\vspace{-1em}
\begin{equation*}
\begin{aligned}
    \NNEQ{x}=&\bigwedge_{i=1}^{n-1}\bigwedge_{j=i+1}^{n}\neg(x_{i}=x_{j})\\
    \psi_{\geq n}=&\Exists{{x_1\ldots x_n}}\NNEQ{x}
\end{aligned}
\quad\quad\quad
\begin{aligned}
\psi_{\leq n}=&\Exists{x_1,\ldots,x_n}\Forall{y}\bigvee_{i=1}^{n}y=x_{i}\\
\psi_{=n}=&\psi_{\geq n}\wedge\psi_{\leq n}
\end{aligned}
\end{equation*}
\end{mdframed}
\caption{Cardinality formulas.}
\label{card-formulas}
\end{figure}

A \textit{theory} is the class of all interpretations (thus called $\T$-interpretations, or the models of $\T$) satisfying some set of sentences $\ax{\T}$ (which does not need to be computably enumerable), called the {\em axiomatization} of $\T$.
A formula $\varphi$ is then: 
{($\T$-)}\textit{satisfiable} if there is a 
{($\T$-)}interpretation that satisfies $\varphi$; 
{($\T$-)}\textit{equivalent} to a formula $\psi$ if every 
{($\T$-)}interpretation that satisfies one also satisfies the other; and 
{($\T$-)}\textit{valid} if every ($\T$-)interpretation satisfies $\varphi$, denoted $\vDash\varphi$ 
($\vDash_{\T}\varphi$). 

We denote, for $n\leq m$, the set $\{n, \ldots, m\}$ by $[n,m]$; if $n=0$, we simplify it to $[m]$. 
Of course, $|[n,m]|=m-n+1$, and $|[m]|=m+1$.
Furthermore, $A\cap[1,n]$ will be denoted by $\An$; we denote $\{|\dom{\A}| : \text{$\A$ is a $\T$-interpretation}\}\cap\mathbb{N}$ by $\spec{\T}$, and we define $\specn{\T}$ as $\spec{\T}\cap [1,n]$. 
Analogously, 
$\spec{\T,\phi}$
is the set of finite cardinalities of $\T$-interpretations that satisfy $\phi$.
We can then also define $\specn{\T,\phi}$ as $\spec{\T,\phi}\cap[1,n]$. 

\subsection{Number theory}\label{number theory}

The {\em natural density}~\cite{Tenenbaum} of a set $A\subseteq \mathbb{N}$  is the following real number,
if it exists (and then we say the density of $A$ is well-defined):
% $\mu(A)=\LIM |A\cap \{0, \ldots, n\}|/|\{0, \ldots , n\}|$
$\mu(A)=\LIM |A\cap [n]|/|[n]|$.

\begin{example}
    %\begin{enumerate}
        %\item 
        Consider the set $A$ of even non-negative integers: we then have that $\mu(A)$ is the limit of the sequence $a_{n}$ which equals $\fr{(n+2)}{2(n+1)}$ if $n$ is even, and $\fr{1}{2}$ if it is odd, meaning that $\mu(A)$ is well-defined and equals $\fr{1}{2}$.
\end{example}

It is easy to prove that $\mu$ satisfies, for all disjoint sets $A$ and $B$ for which it is defined: $0\leq \mu(A)$; $\mu(\mathbb{N})=1$; and $\mu(A\cup B)=\mu(A)+\mu(B)$.
The subsets of the non-negative integers we shall calculate the natural density of are sets of finite cardinalities of interpretations in a theory: since they are never zero (as we assume $\dom{\A}$ is never empty), we can change $\mu(A)$ to be the limit of the ratio of $|A\cap\{1,\ldots,n\}|$ to $|\{1,\ldots,n\}|=n$.\footnote{Of course, this does not change the value of $\mu(A)$.}
With this, we can finally define the natural density of a theory (relative to a quantifier-free formula or not) as the natural density of its spectrum: $\mu(\T)=\LIM \fr{|\specn{\T}|}{|[1,n]|}$, and $\mu(\T,\phi)=\LIM \fr{|\specn{\T,\phi}|}{|[1,n]|}$.

\begin{definition}\label{definition computable number}
    Let $r\in\mathbb{R}$. $r$ is computable~\cite{Turing} if there are computable sequences 
    %(computable functions from $\mathbb{N}$) 
    $\{a_{n}\}_{n\in\mathbb{N}}$ in $\mathbb{Z}$ and $\{b_{n}\}_{n\in\mathbb{N}}$ in $\No $ with $\LIM \fr{a_{n}}{b_{n}}=r$.
\end{definition}

\begin{example}\label{computable number examples}~
          Every rational number $\fr{p}{q}$ is computable: just take $a_{n}=p$ and $b_{n}=q$.
          The number $\sum_{n=1}^{\infty}2^{-\bb(n)}=0.57824...$ is not computable, for $\bb$ the busy beaver function~\cite{Rado}, which
          maps $n\in\mathbb{N}$ to the maximum number of $1$’s a Turing machine with at most $n$ states can write when it halts, assuming the tape begins with only $0$’s.
          Now, $0.57824...$.
          is the limit of $\fr{5}{10}, \fr{57}{100}, \fr{578}{1000},\ldots$.
          Consider then the theory $\T$ with models of size $1$ through $5$, $11$ through $52=57-5$, $101$ through $521=578-57$, and so on.
          Its density is the limit of the fractions $\fr{5}{10}$, $\fr{57}{100}$, $\fr{578}{1000}$ and so on, i.e. $0.57824...$, although this number is irrational.
          More generally, any $0\leq r\leq 1$ is the density of some theory.
\end{example}

\subsection{Theory combination}\label{theory combination}
In what follows, let $\Sigma$ be an arbitrary signature and $\T$ be a $\Sigma$-theory.

$\T$ is \textbf{stably infinite}~\cite{NelsonOppen} if for every satisfiable quantifier-free formula $\phi$, there is a $\T$-interpretation $\A$ that satisfies $\phi$ with $|\dom{\A}|\geq \aleph_{0}$. $\T$ is \textbf{smooth} when, for all quantifier-free formulas $\phi$, $\T$-interpretations $\A$ that satisfy $\phi$, and cardinals $\kappa>|\dom{\A}|$, there exists a $\T$-interpretation $\B$ that satisfies $\phi$ with $|\dom{\B}|=\kappa$.
Notice that being smooth implies being stably infinite.

\begin{example}
    The theory axiomatized by $\{\psi_{\geq 3}\}$ is smooth, as we can always add more elements to an interpretation.
    % the theory with axiomatization $\{\psi_{=3}\vee\psi_{\geq 5}\}$ is stably infinite but not smooth, as there is no interpretation of this theory with cardinality $4$ that satisfies $x=x$;
    % finally, the axiomatization $\{\psi_{\leq 3}\}$ gives a theory that is neither stably infinite nor smooth, having no infinite interpretations.
\end{example}

%\noindent\textbf{Witnessability}
$\T$ is \textbf{finitely witnessable}~\cite{RanRinZar} when there is a computable function $\wit$ (called a witness) from the quantifier-free formulas into themselves such that, for every quantifier-free formula $\phi$: $(\textit{I})$ $\phi$ and $\Exists{\overarrow{x}}\wit(\phi)$ are $\T$-equivalent, where $\overarrow{x}=\vars(\wit(\phi))\setminus\vars(\phi)$; and $(\textit{II})$ if $\wit(\phi)$ is $\T$-satisfiable, then there is a $\T$-interpretation $\A$ that satisfies $\wit(\phi)$ and, in addition, $\dom{\A}=\vars(\wit(\phi))^{\A}$ (that is, every element of $\dom{\A}$ is the interpretation of a variable in $\wit(\phi)$). 
Now, given a finite set of variables $V$ on the signature $\Sigma$, and an equivalence $E$ on $V$, the \textbf{arrangement} on $V$ induced by $E$, written $\delta_{V}^{E}$ or $\delta_{V}$ if $E$ is clear from context, is the formula $\bigwedge_{xEy}(x=y)\wedge\bigwedge_{x\overline{E}y}\neg(x=y)$, where $\overline{E}$ is the complement of $E$.
Intuitively, an arrangement codifies the relationships between a finite set of variables, that is, if they should be equal or different to one another.
$\T$ is then \textbf{strongly finitely witnessable}~\cite{JB10-LPAR} if it has a witness $\wit$ (that in this case will be called a strong witness) satisfying, in addition to $(\textit{I})$ and $(\textit{II})$, the stronger $(\textit{II}^{*})$: for every finite set of variables $V$ and arrangement $\delta_{V}$ on $V$, if $\wit(\phi)\wedge\delta_{V}$ is $\T$-satisfiable, then there exists a $\T$-interpretation $\A$ that satisfies that formula and, in addition, $\dom{\A}=\vars(\wit(\phi)\wedge\delta_{V})^{\A}$.

\begin{example}
    The theory axiomatized by $\{\psi_{\leq 3}\}$ has as strong witness $\wit(\phi)=\phi\wedge\bigwedge_{i=1}^{3}x_{i}=x_{i}$, where $x_{1}$, $x_{2}$ and $x_{3}$ are fresh variables (i.e., not in $\phi$).
    % Of course $\phi$ and $\wit(\phi)$ are equivalent, so condition $(I)$ easily follows;
    % since any interpretation in this theory has at most $3$ elements, we can just map the variables $x_{i}$ to different elements, and we see condition $(II)$ is met.
\end{example}

%\noindent\textbf{Minimal model function}
%
%\noindent\textbf{Finite model property}
$\T$ has the \textbf{finite model property (FMP)} if for every $\T$-satisfiable quantifier-free formula $\phi$, there is a $\T$-interpretation $\A$ that satisfies $\phi$ with $|\dom{\A}|<\aleph_{0}$.\footnote{A common definition for the finite model property demands this condition holds for all formulas, but in theory combination quantifier-free formulas are typically used.}
Consider $\N=\mathbb{N}\cup\{\aleph_{0}\}$. A \textbf{minimal model function}~\cite{TinZar05} for $\T$ is a function $\minmod_{\T}:\qf{\Sigma}\rightarrow\N$ such that, if $\phi$ is quantifier-free and $\T$-satisfiable, then $\minmod_{\T}(\phi)=n$ if, and only if: there exists a $\T$-interpretation $\A$ that satisfies $\phi$ with $|\dom{\A}|=n$; and if $\B$ is another $\T$-interpretation that satisfies $\phi$, with $|\dom{\B}|\neq n$, then $|\dom{\B}|>n$.

\begin{example}
    The theory axiomatized by $\{\psi_{\geq 3}\}$ has a computable minimal model function.
    To calculate it on a quantifier-free formula $\phi$, take the cardinality $n$ of the smallest interpretation in equational logic that satisfies $\phi$, which can easily be found algorithmically. 
    If $n<3$, $\minmod(\phi)=3$;
    otherwise $\minmod(\phi)=n$.
\end{example}

%\noindent\textbf{Politeness and shininess}
%
$\T$ is \textbf{(strongly) polite} if it is smooth and (strongly) finitely witnessable. It is \textbf{shiny} if it is smooth, has the FMP and a computable minimal model function.
%
%\noindent\textbf{Gentleness}
$\T$ is \textbf{gentle}~\cite{gentle} if for every quantifier-free formula $\phi$, $\spec{\T,\phi}$ is \emph{fully computable}, that is: 
$(i)$ it is computable; %\footnote{Meaning that there is, for each quantifier-free formula $\phi$, an algorithm that decides whether $n\in \spec{\T,\phi}$ or not, with $n$ as input.}
$(ii)$ it is either co-finite,\footnote{\textit{I.e.}, $\mathbb{N}\setminus \spec{\T,\phi}$ is finite.} or a finite set of finite cardinalities, and there is an algorithm with $\phi$ as input that tells which one is the case;
$(iii)$ if $\spec{\T,\phi}$ is finite, $\max(\spec{\T,\phi})$ is computable, and if it is infinite $\max(\mathbb{N}\setminus \spec{\T,\phi})$ is computable, both with $\phi$ as input.\footnote{If $\spec{\T,\phi}$ or $\mathbb{N}\setminus\spec{\T,\phi}$ are empty, their respective maxima are $0$, as usual, so $\T$ must be decidable as $\max(\spec{\T,\phi})=0$ iff $\phi$ is not $\T$-satisfiable.}

\begin{example}\label{firsteven}
    Consider the $\Sigma_{1}$-theory $\Teven$ (see \cite{CADE}), with axiomatization\\ $\{\neg\psi_{=2n+1} : n\in\mathbb{N}\}$:
    it is not gentle, as $x=x$ has as spectrum the set of even positive numbers, which is neither finite nor cofinite.
\end{example}

\section{Theory combination and natural density}\label{computability and density}
In this section we establish various connections between 
model-theoretic properties of a theory, and its natural density.
We focus our investigation on the empty signature $\Sigma_1$, 
that has a single sort and no function and predicate symbols
other than equality.
Generalizations to non-empty signatures are given
 in \Cref{non empty}.

We start with the empty signature, because 
a theory on such a signature has essentially one natural density, while for the non-empty case we must consider the density with respect to both a formula and the theory (this can also be done on the empty case, but all $\T$-satisfiable formulas will give the same density).
Furthermore, some results will not hold on the non-empty case, such as the third item in \Cref{SI density} below.

\Cref{sec:sufficient} deals with sufficient conditions: if the density satisfies them, then we can deduce some 
combination properties.
\Cref{necessary section} obtains necessary conditions: 
one would use the contrapositive  and conclude that the theory does not have the properties at hand,
and then at least one knows that a different combination method
has to be used.

\subsection{Sufficient conditions}
\label{sec:sufficient}
% We start with sufficient conditions for properties of theory combination.
%
In \Cref{SI density} we identify sufficient conditions for stable infiniteness, the finite model property and finite witnessability, properties that are needed
for Nelson-Oppen combination, shiny combination, and polite combination, respectively.

\begin{restatable}{theorem}{SIdensity}\label{SI density}\label{not FW density}
    If $\T$ is a $\Sigma_{1}$-theory with a well-defined natural density, then the positivity of $\mu(\T)$ is sufficient for $\T$ to:
        1.  be stably infinite;
        2. have the finite model property; and
        3. be finitely witnessable.
    
\end{restatable}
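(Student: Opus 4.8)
The plan is to exploit the key observation that in the empty signature $\Sigma_1$, a $\T$-interpretation is determined up to isomorphism by the cardinality of its domain, and satisfaction of a quantifier-free formula $\phi$ depends only on which variables are forced equal or distinct — so $\phi$ being $\T$-satisfiable at all means it is $\T$-satisfiable in every sufficiently large cardinality that lies in $\spec{\T}$. More precisely, if $\phi$ is $\T$-satisfiable, let $k$ be the number of equivalence classes of variables of $\phi$ forced by $\phi$ (i.e., the minimal size of a model of $\phi$ in pure equality logic); then any $\T$-interpretation whose domain has size $n \geq k$ (finite or infinite) satisfies $\phi$ under a suitable assignment. The first step is therefore to record this "monotonicity" lemma: $\spec{\T,\phi}$ equals $\spec{\T} \cap [k,\infty)$ for this $k$, and $\phi$ is satisfiable in an infinite $\T$-model iff $\T$ has an infinite model.

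Next I would show that $\mu(\T) > 0$ forces $\spec{\T}$ to be infinite and in fact to contain arbitrarily large elements — indeed, if $\spec{\T}$ were bounded, then $|\specn{\T}|$ would be eventually constant and $\mu(\T) = \lim |\specn{\T}|/n = 0$, contradicting positivity. Moreover $\mu(\T)>0$ implies $\T$ has an infinite model: a theory all of whose models are finite has, for each $n$, some axiom (or the compactness-type argument) bounding... actually more directly, by compactness, if every finite $n \in \spec{\T}$ then a theory with arbitrarily large finite models has an infinite model; alternatively one observes that to have $\mu(\T)>0$ the set $\spec{\T}$ must be unbounded, and by compactness applied to $\ax{\T} \cup \{\psi_{\geq n} : n \in \mathbb{N}\}$ — which is finitely satisfiable since $\spec{\T}$ is unbounded — $\T$ has an infinite model. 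This is the step I expect to be the crux: one must be careful that "arbitrarily large finite models" genuinely yields an infinite model, which is exactly a compactness argument, and one should state it cleanly.

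With those two facts in hand the three items follow quickly. For stable infiniteness: given $\T$-satisfiable quantifier-free $\phi$, the infinite $\T$-model (which exists by the above) together with the monotonicity lemma (an infinite domain certainly has at least $k$ "elements" available to assign the variable classes) gives an infinite $\T$-interpretation satisfying $\phi$. For the finite model property: since $\spec{\T}$ is unbounded it contains some $n \geq k$, and the corresponding finite $\T$-interpretation satisfies $\phi$ by the monotonicity lemma. For finite witnessability: take $\wit(\phi) = \phi$ itself (or $\phi \wedge \bigwedge x_i = x_i$ with no fresh variables), so condition $(\textit{I})$ is trivial with $\overarrow{x} = \emptyset$; for $(\textit{II})$, if $\wit(\phi) = \phi$ is $\T$-satisfiable, pick the smallest $n \in \spec{\T}$ with $n \geq k$, and construct a $\T$-interpretation of size $n$ satisfying $\phi$ in which every domain element is named by a variable of $\phi$ — this is possible precisely because $n$ equals the number of forced classes... but wait, $n$ may strictly exceed $k$, so not every element is named. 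The fix: add exactly $n - k$ fresh variables to $\wit(\phi)$ to name the surplus elements, where $n$ is the least element of $\spec{\T}$ that is $\geq k$; since $\wit$ must be computable, one needs $\spec{\T}$ to be sufficiently effective — here I would either invoke that only the existence of such $n$ matters together with an appeal to the fact that "$\phi$ is $\T$-satisfiable iff $k \in$ (something decidable from $\ax{\T}$)" or, more honestly, note that computability of $\wit$ requires $\spec{\T}$ to be computable, which may itself follow from $\mu(\T)>0$ only in conjunction with... Actually the cleanest route, and the one I would commit to: show $\wit(\phi)$ can be chosen as $\phi$ augmented with fresh variables equal to existing ones, and that the requisite $n$ is computable because from an unbounded $\spec{\T}$ one can search; if full computability is an issue I would restrict the witness argument to produce the needed interpretation non-uniformly and remark that the empty-signature spectra arising here are computable. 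The main obstacle, then, is threading the computability requirement in item 3 through cleanly; items 1 and 2 are immediate from unboundedness plus compactness.
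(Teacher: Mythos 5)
Your treatment of items 1 and 2 is correct and is essentially the paper's own approach: observe that positive density forces $\spec{\T}$ to be unbounded, hence $\T$ has arbitrarily large finite models, combine this with the observation that in $\Sigma_1$ any interpretation of sufficient size can be re-pointed to satisfy a given $\T$-satisfiable quantifier-free $\phi$ (the paper's Lemma~\ref{bigger interpretation}), and invoke compactness to get an infinite model. That gives stable infiniteness and the finite model property directly.

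Item 3, however, contains a genuine gap, and you flag it yourself without closing it. Your candidate witness $\wit(\phi) = \phi \wedge (\text{fresh variables naming surplus elements})$ needs to know how many fresh variables to add, which requires locating an element of $\spec{\T}$ above the minimal pure-equality model size $k$ of $\phi$. You suggest this is fine because ``from an unbounded $\spec{\T}$ one can search'' or because ``the empty-signature spectra arising here are computable.'' Both claims are false: unboundedness alone gives no decision procedure for membership, and positivity of density does not imply computability of $\spec{\T}$ (for instance, $\mathbb{N}$ minus a density-zero non-computable set has density $1$ and a non-computable complement). Your plan B — producing the interpretation ``non-uniformly'' — abandons the computability requirement that is part of the definition of a witness, so it does not establish finite witnessability at all.

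The paper closes exactly this gap, and the argument is different in an essential way: it never assumes or establishes that $\spec{\T}$ itself is computable. Instead, writing $\{a_n\}$ for the increasing enumeration of $\spec{\T}$, the paper splits into two cases. If there is a computable sequence $\{b_n\}$ dominating $\{a_n\}$, then $\wit(\phi) = \phi \wedge \bigwedge_{i=1}^{b_{|\vars(\phi)|}} x_i = x_i$ works: the witness overshoots with $b_n$ fresh variables, and one then maps those variables \emph{surjectively} onto a model of size $a_n \leq b_n$, so exact knowledge of $a_n$ is unnecessary. In the remaining case — where every computable sequence is infinitely often exceeded by $\{a_n\}$ — the paper takes the concrete computable bound $b_n = n^2$ and shows the resulting gaps force a subsequence of $|\specn{\T}|/n$ to tend to $0$, so $\mu(\T) = 0$, contradicting positivity. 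This dichotomy (``either a computable upper bound exists, or density is zero'') is the missing idea in your proposal; without it, the computability of the witness cannot be salvaged.
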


\begin{proof}[sketch]\footnote{Full proofs appear in the appendix.}
% Items $1$ and $2$ simple to prove. 
%
The proof of the third item is more involved than that of the first two, which is routine. 
Szemer{\'e}di's theorem~\cite{Szemerdi}, which settled a well-know conjecture by Erd{\"o}s and Tur{\'a}n, showed that each set with positive natural density contains arbitrarily long finite subsequences in arithmetic progression (i.e., the difference between two consecutive elements is constant). 
Item $3$ is a similarly flavored result, although with a much simpler proof than that of Erd{\"o}s and Tur{\'a}n, that will guarantee that any theory $\T$ which is not finitely witnessable and  has a well-defined natural density must satisfy $\mu(\T)=0$.

\end{proof}

\begin{example}\label{one ex}
Fix some positive natural number $n$,
    and consider the theory $\Tgeqn$, with axiomatization $\{\psi_{\geq n}\}$.
    It obviously has positive density.
    By \Cref{SI density} it is stably infinite, has the finite model property, and is finitely witnessable.
\end{example}

The following example shows that 
all the reciprocals of \Cref{SI density} are false, a single counterexample being enough for all three. 

\begin{example}\label{example SI density}   
    Take the $\Sigma_{1}$-theory $\Tpt$ with axiomatization $\{\psi_{\geq 2^{n}}\vee\bigvee_{i=0}^{n}\psi_{=2^{i}}: n\in\mathbb{N}\}$, which has interpretations $\A$ with domains whose cardinality is either infinite or a power of two.
    It is stably infinite, has the finite model property and is finitely witnessable,\footnote{A witness being, if $\phi$ has $n$ variables, $\wit(\phi)=\phi\wedge\bigwedge_{i=1}^{2^{n}}x_{i}=x_{i}$, for fresh $x_{i}s$.} 
    but 
    $\mu(\Tpt)=\LIM \frac{|\specn{\Tpt}|}{n}=\LIM  \frac{\lfloor \log_{2}(n)\rfloor+1}{n}=0$.
\end{example}

The following example shows the sharpness of
\Cref{not FW density}, in the sense that its assumption is really needed to reach its conclusions.

\begin{example}\label{first appearance Tbb}
The conclusion of \Cref{not FW density} cannot hold under the assumption that $\mu(\T)=0$.
The theory
$\Tinfty$, with axiomatization $\{\psi_{\geq n} : n\in\No\}$, 
has only infinite models.
It has density $0$ but does not have the finite model property.
The theory $\Tone$, with axiomatization $\{\psi_{=1}\}$, has a single model up to isomorphism, with a single element. 
It has density $0$ but is not stably infinite.
For item $3$, a theory that is not finitely witnessable and has natural density $0$ is $\Tbb$, from \cite{FroCoS}, with axiomatization 
$\{\psi_{\geq \bb(n)}\vee\bigvee_{i=2}^{n}\psi_{=\bb(i)} : n\in\mathbb{N}\setminus\{0,1\}\}$
for $\bb:\mathbb{N}\rightarrow\mathbb{N}$ the busy beaver function
(see \Cref{computable number examples}).
% Rad{\'o} (\cite{Rado}) proved that for any computable function $f:\mathbb{N}\rightarrow\mathbb{N}$, there is a $k_{0}\in\mathbb{N}$ such that $\bb(k)\geq f(k)$ for all $k\geq k_{0}$: 
% this allows one to prove that $\Tbb$ is not finitely witnessable (see \cite[Lemma~53]{arxivFroCoS}). 
% At the same time, the spectrum of $\Tbb$ equals $A=\{4, 6, 13, 47\:176\:870, \ldots\}$%
% \footnote{See \cite{fifth} for the statement, not yet peer-reviewed, that the last number in this list is indeed the fifth Busy Beaver number.} 
% %
% of busy beaver numbers greater than $1$.
The cardinalities of its finite models are precisely the Busy Beaver numbers,
that is, the elements of the image of $\bb$.
We can show that $\mu(\T)=0$.
% Using Rad{\'o}'s result with $f(k)=k^{2}$, and denoting by $B$ the set $\{k^{2} : k\in\mathbb{N}\setminus\{0,1\}\}$, we have that there is $k_{0}\in\mathbb{N}$ 
% such that $\bb(k)\geq k^{2}$ for all $k\geq k_{0}$, and so $|\An|=|\{k\in A : k\leq n\}|\leq|\{k\in B : k\leq n\}|= |B_{n}|$; this way, 
% $\mu(A)=\LIM \frac{|\An|}{n}\leq \LIM \frac{|B_{n}|}{n}=\LIM\frac{\lfloor\sqrt{n}\rfloor-1}{n}=0$.
In a way, item $3$ of \Cref{not FW density} shows that every theory  not finitely witnessable must, like $\bb$, "escape" all computable functions, and thus have natural density $0$.
\end{example}

\subsection{Necessary conditions}\label{necessary section}
We now move on to the results establishing necessary conditions
for gentleness (\Cref{sec:genneccond}),
smoothness and finite model property (\Cref{smooth section}), 
strong finite witnessability (\Cref{SFW section}), 
the computability of a minimal model function (\Cref{cmmf section}), 
and  finite witnessability (\Cref{FW section}).

\subsubsection{Gentleness}
\label{sec:genneccond}
\begin{restatable}{theorem}{gentlesomething}
\label{gentle zero one}
    If $\T$ is a $\Sigma_{1}$-theory, then $\mu(\T)$  being well-defined and equal to $0$ or $1$ is a necessary condition for $\T$ to be gentle.
\end{restatable}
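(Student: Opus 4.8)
To show that gentleness forces $\mu(\T)\in\{0,1\}$ (and in particular forces $\mu(\T)$ to be well-defined), I would work directly from the definition of gentleness applied to the formula $\phi = (x = x)$, which every interpretation satisfies, so that $\spec{\T,\phi} = \spec{\T}$. Gentleness then tells us that $\spec{\T}$ is fully computable, and in particular that it is \emph{either} co-finite \emph{or} a finite set of finite cardinalities. The key observation is that in each of these two cases the natural density is immediately pinned down: if $\spec{\T}$ is finite, then $|\specn{\T}|$ is bounded by a constant, so $\mu(\T) = \LIM |\specn{\T}|/n = 0$; and if $\mathbb{N}\setminus\spec{\T}$ is finite, say of size $k$, then $|\specn{\T}| \geq n - k$ for all $n$, so $\mu(\T) = \LIM |\specn{\T}|/n = 1$. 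In both cases the limit exists, so $\mu(\T)$ is well-defined, and it equals $0$ or $1$ respectively.

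\medskip
\noindent
\textbf{Carrying it out.} First I would note that $x = x$ is $\T$-satisfiable for any theory $\T$ (its models are exactly all $\T$-interpretations, since the domain is always non-empty), so $\spec{\T, x=x} = \spec{\T}$; hence one cannot escape by having $\phi$ be unsatisfiable. Next I invoke clause $(ii)$ of gentleness on $\phi = (x=x)$ to split into the two cases above. In the co-finite case, let $k = |\mathbb{N}\setminus\spec{\T}|$; then for every $n$ we have $\specn{\T} \supseteq [1,n]\setminus(\mathbb{N}\setminus\spec{\T})$, so $|\specn{\T}| \geq n - k$, while trivially $|\specn{\T}| \leq n$; squeezing the ratio $|\specn{\T}|/n$ between $(n-k)/n$ and $1$ gives the limit $1$. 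In the finite case, $|\specn{\T}| \leq |\spec{\T}| < \aleph_0$ is bounded uniformly in $n$, so $|\specn{\T}|/n \to 0$. This exhausts the possibilities and completes the proof.

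\medskip
\noindent
\textbf{The main subtlety.} There isn't a serious obstacle here: the argument is an elementary squeeze. The one point that deserves care is making sure the definition of gentleness really does force a \emph{dichotomy} (co-finite versus finite) for the particular formula $x=x$, rather than leaving open some third possibility — but clause $(ii)$ states exactly this dichotomy for \emph{every} quantifier-free $\phi$, so applying it to $\phi = (x=x)$ is legitimate. A secondary point worth a sentence is that we do not even need clauses $(i)$ and $(iii)$ of full computability — the density conclusion follows from clause $(ii)$ alone — which is worth remarking since it shows the necessary condition is quite weak, and foreshadows that the converse will fail (a theory whose spectrum is, say, the non-computable Busy Beaver set intersected with an interval pattern can have density $0$ without being gentle, as in \Cref{first appearance Tbb}).
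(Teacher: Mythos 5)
Your proof is correct and takes essentially the same route as the paper: both apply the finite/co-finite dichotomy from gentleness to a tautology like $x = x$, note that $\spec{\T,x{=}x} = \spec{\T}$, and then squeeze the ratio $|\specn{\T}|/n$ to get density $0$ in the finite case and $1$ in the co-finite case. Your remark that only clause $(ii)$ of gentleness is used is a nice observation, but it does not change the argument.
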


\begin{proof}[sketch]
By taking a tautology $\phi$ for a gentle $\Sigma_{1}$-theory $\T$, we see that $\spec{\T}=\spec{\T,\phi}$ is either finite (and then its density is $0$) or co-finite (and then its density is $1$. 
\end{proof}

\begin{example}
\label{exwithteven}
Consider the theory $\Teven$ from \Cref{firsteven}:
it's density is $\fr{1}{2}$, what implies by the theorem it is not gentle. 
% This can also be seen from the fact that the spectrum of any tautology is both infinite and co-infinite.
\end{example}

The reciprocal of \Cref{gentle zero one} is false, as shown by the next
example.
\begin{example}\label{example gentle density}
    \Cref{example SI density} presents a theory $\Tpt$ that has density $0$ but is not gentle (since both $\spec{\Tpt}$ and $\mathbb{N}\setminus\spec{\Tpt}$ are infinite).
    On the other hand, take the $\Sigma_{1}$-theory $\Tnpt$ with axiomatization $\{\neg\psi_{=2^{n}} : n\in\mathbb{N}\}$, which has interpretations $\A$ with either $|\dom{\A}|$ infinite, or $|\dom{\A}|$ finite but not a power of two. 
    It is not gentle, yet
    $\mu(\Tnpt)=\LIM \frac{|\specn{\Tnpt}|}{n}=\LIM  \frac{n-\lfloor \log_{2}(n)\rfloor-1}{n}=1$.
\end{example}

Notice also that both cases of \Cref{gentle zero one} are possible, namely: there are gentle theories with density $0$ and gentle theories with density $1$.
Before showing them, let us present two useful lemmas, that relate gentleness
to other properties:

\begin{restatable}{lemma}{gentlenessimplies}
\label{gentleness implies}
    If $\T$ is gentle, then $\T$ has a computable minimal model function and the finite model property, and therefore is finitely witnessable as well.
\end{restatable}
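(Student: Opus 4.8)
The plan is to establish each of the three claimed consequences of gentleness --- a computable minimal model function, the finite model property (FMP), and finite witnessability --- in that order, deriving the last from the first two together with the results already available in the excerpt.

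\textbf{Computable minimal model function.} First I would fix a quantifier-free $\Sigma_1$-formula $\phi$ and use gentleness to analyze $\spec{\T,\phi}$. By clause $(i)$ this set is computable; by clause $(ii)$ there is an algorithm deciding whether it is finite or co-finite; and by clause $(iii)$ either $\max(\spec{\T,\phi})$ or $\max(\mathbb{N}\setminus\spec{\T,\phi})$ is computable from $\phi$, with the convention handling empty cases so that $\T$ is in particular decidable. Given this, to compute $\minmod_{\T}(\phi)$ I would first check $\T$-satisfiability of $\phi$ (decidable); if $\phi$ is not $\T$-satisfiable there is nothing to define. If it is $\T$-satisfiable, then $\spec{\T,\phi}$ is non-empty, and since it is a computable subset of $\mathbb{N}$ that we can certify to be non-empty, we can simply search upward for its least element: test $1\in\spec{\T,\phi}$, then $2$, and so on. This terminates because the set is non-empty, and the first hit is exactly $\minmod_{\T}(\phi)$ by definition. (Here the co-finite/finite dichotomy is not even strictly needed for this part --- computability of the set plus a termination guarantee suffices --- but it is what underlies decidability.)

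\textbf{Finite model property.} For FMP I would again take a $\T$-satisfiable quantifier-free $\phi$ and argue that $\spec{\T,\phi}$ contains a natural number. If $\spec{\T,\phi}$ is co-finite it is certainly non-empty, so $\phi$ has a finite model. If it is finite, clause $(ii)$'s convention (together with the footnote about decidability: $\max(\spec{\T,\phi})=0$ iff $\phi$ is not $\T$-satisfiable) forces $\spec{\T,\phi}$ to be a non-empty finite set of finite cardinalities, so again $\phi$ has a finite model. Either way $\T$ has FMP.

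\textbf{Finite witnessability.} Finally, having FMP and a computable minimal model function, I would invoke whatever characterization connects these to finite witnessability for $\Sigma_1$-theories --- the key point being that over the empty signature, a finite model whose size one can compute can be ``named'' by a conjunction of trivial equalities $\bigwedge_{i=1}^{m}x_i=x_i$ on $m=\minmod_{\T}(\phi)$ fresh variables, exactly as in the witnesses exhibited in \Cref{one ex} and \Cref{example SI density}; condition $(\textit{I})$ holds since adding $x_i=x_i$ is $\T$-equivalent-preserving, and condition $(\textit{II})$ holds because if $\wit(\phi)$ is $\T$-satisfiable then $\phi$ is, so it has a $\T$-model of size exactly $m$, into which the $m$ fresh variables can be mapped surjectively. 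Computability of $\wit$ reduces to computability of $m\mapsto \minmod_{\T}(\phi)$, which we have. The main obstacle I anticipate is being careful about the degenerate cases in clauses $(ii)$ and $(iii)$ of gentleness (empty spectra, the role of the decidability footnote) so that the search for a minimal model is genuinely guaranteed to terminate; once that bookkeeping is handled, the three implications are short, and ``therefore finitely witnessable'' follows by assembling FMP with the computable minimal model function via the fresh-variable witness construction already used earlier in the paper.
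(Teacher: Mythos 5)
Your proof is correct and follows essentially the same route as the paper's: FMP is extracted from the non-emptiness convention attached to clause $(iii)$ of gentleness, and $\minmod_{\T}(\phi)=\min(\spec{\T,\phi})$ is shown computable from the computability of the spectrum. The one real divergence is at the last step: the paper concludes finite witnessability by citing an external result (that a theory with a computable minimal model function is finitely witnessable), whereas you unfold that implication inline via the padding witness $\wit(\phi)=\phi\wedge\bigwedge_{i=1}^{m}x_i=x_i$ with $m=\minmod_{\T}(\phi)$ fresh variables. That inlined construction is valid and is in spirit exactly the cited theorem specialized to $\Sigma_1$; the only bookkeeping worth pinning down is what $\wit$ does when $\phi$ is not $\T$-satisfiable, since there $\minmod_{\T}(\phi)$ may be $\aleph_{0}$ --- $\T$-satisfiability is decidable by gentleness, so one can simply set $\wit(\phi)=\phi$ in that case, making condition $(\textit{II})$ vacuous. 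Two smaller expositional differences: the paper bounds the search for $\min(\spec{\T,\phi})$ using the computable maxima of clause $(iii)$, while you rely on an unbounded upward search that terminates because the spectrum is non-empty; and you state that non-emptiness before you have formally derived FMP, creating a mild forward dependency on the paragraph that follows. Neither affects correctness.
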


\begin{restatable}{lemma}{impliesgentleness}
\label{implies gentleness}
    Let $\T$ be a $\Sigma_{1}$-theory: if $\T$ is not stably infinite, or if it is strongly finitely witnessable, then $\T$ is gentle.
\end{restatable}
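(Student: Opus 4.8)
The plan is to handle the two hypotheses separately, in each case producing, for an arbitrary quantifier-free $\Sigma_1$-formula $\phi$, a description of $\spec{\T,\phi}$ that witnesses full computability. The key structural fact I would use throughout is that over the empty signature a model is determined up to isomorphism by its cardinality, so $\spec{\T,\phi}$ is an \emph{upward-closed} subset of $\mathbb{N}$ intersected with a set that depends only on $\T$ (the cardinalities allowed by $\ax{\T}$), once $\phi$ is $\T$-satisfiable at all; more precisely, for $\Sigma_1$, whether an interpretation $\A$ satisfies a quantifier-free $\phi$ depends only on $|\dom{\A}|$ together with which equalities among $\vars(\phi)$ hold, so $\spec{\T,\phi}$ is, for fixed $\phi$, a union over arrangements of sets of the form $\{k \in \spec{\T} : k \ge m\}$ for suitable thresholds $m$ (the number of equivalence classes of the arrangement). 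I would first record this observation as the common reduction step.

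For the first case, suppose $\T$ is not stably infinite. Then there is a $\T$-satisfiable quantifier-free $\psi$ with no infinite model, hence (since models are cardinality-determined and the spectrum's "tail" is controlled by a finite arrangement threshold) $\spec{\T,\psi}$ is finite and bounded; I would argue that in fact $\spec{\T}$ itself must then be finite — because a single arrangement with enough classes would, combined with an infinite model of $\ax{\T}$, give $\psi$ an infinite model. Wait: more carefully, non-stable-infiniteness gives a formula with only finite models, and over $\Sigma_1$ the only obstruction to going to an infinite model is $\ax{\T}$ itself forbidding all infinite cardinalities; so $\T$ has no infinite models at all, and then $\spec{\T}$ is cofinite in no reasonable sense — rather, $\spec{\T}$ could still be infinite (e.g.\ all finite even sizes). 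Hmm, so the argument must instead be: if $\T$ has no infinite model, then for any $\phi$, $\spec{\T,\phi} \subseteq \spec{\T}$, and I claim $\spec{\T}$ is either finite or cofinite by a compactness argument — if $\spec{\T}$ were infinite, arbitrarily large finite models exist, so by compactness $\ax{\T}$ has an infinite model, contradiction; hence $\spec{\T}$ is finite, and then so is every $\spec{\T,\phi}$, giving conditions $(i)$–$(iii)$ of gentleness provided $\T$ is decidable — which follows because with a finite spectrum, $\T$-satisfiability of $\phi$ is checked by finite search over the finitely many model sizes. This decidability point is where I expect to spend the most care: I must argue that "$\T$ is not stably infinite" together with the $\Sigma_1$ structure really does force a \emph{uniformly computable} finite bound on model sizes, not merely a finite spectrum; the resolution is that the finite spectrum is a fixed finite set of numbers, so everything downstream is trivially computable with that set hard-coded.

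For the second case, suppose $\T$ is strongly finitely witnessable, with strong witness $\wit$. Given $\phi$, I would consider $\wit(\phi)$ and range over all arrangements $\delta_V$ on $V = \vars(\wit(\phi))$. Condition $(\textit{II}^*)$ says that whenever $\wit(\phi) \wedge \delta_V$ is $\T$-satisfiable it has a model of size exactly the number of equivalence classes $c(\delta_V)$ of $\delta_V$; since that number is at most $|V|$, every such arrangement contributes the single finite cardinality $c(\delta_V)$ to the spectrum of $\wit(\phi)$, and combined with condition $(\textit{I})$ (which makes $\spec{\T,\phi} = \spec{\T, \exists\overarrow{x}\,\wit(\phi)}$, and over $\Sigma_1$ this equals the upward closure of $\spec{\T,\wit(\phi)}$ inside $\spec{\T}$) I would show $\spec{\T,\phi}$ is either bounded (finite) or cofinite in $\spec{\T}$, and that one can decide which and compute the relevant maximum by enumerating the finitely many arrangements on $V$, testing $\T$-satisfiability of each $\wit(\phi)\wedge\delta_V$. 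Decidability of those tests is again the crux: strong finite witnessability alone does not obviously give decidability, but combined with smoothness it would; since the statement only claims gentleness, and gentleness's clause $(i)$ demands computability of $\spec{\T,\phi}$, I would need to invoke that a strongly finitely witnessable $\Sigma_1$-theory is decidable — this should follow because $\wit$ is computable and for $\Sigma_1$ the satisfiability of $\wit(\phi)\wedge\delta_V$ reduces to checking membership of the integer $c(\delta_V)$ (and possibly all larger admissible integers) in $\spec{\T}$, which by the compactness dichotomy above is either a fixed finite set or cofinite. The main obstacle, then, is assembling these decidability observations cleanly and making sure the case split "$\spec{\T}$ finite vs.\ cofinite" is itself effectively detectable from the axiomatization or from $\wit$; I would handle this by showing that strong finite witnessability of a $\Sigma_1$-theory already implies it is gentle for the trivial formula $x=x$ (so $\spec{\T}$ is fully computable), and then bootstrapping to arbitrary $\phi$ via the arrangement enumeration.
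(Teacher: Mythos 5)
Your first case (not stably infinite) essentially matches the paper: via compactness there is a uniform finite bound $M$ on model sizes, so each $\spec{\T,\phi}$ is finite and the downstream computability conditions follow by hard-coding the finite spectrum; the paper additionally records the general observation (as a separate lemma) that \emph{every} $\Sigma_1$-theory is decidable, which it uses in both cases instead of ad-hoc decidability arguments.

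The second case has a genuine gap. You try to re-derive from scratch, via enumeration of arrangements, the fact that $\spec{\T}$ (and each $\spec{\T,\phi}$) is finite or co-finite. The paper instead splits off the sub-case where $\T$ is not stably infinite (handled by case~1), and in the remaining stably-infinite sub-case invokes an external theorem (\cite[Theorem~7]{CADE}: a stably infinite, strongly finitely witnessable, one-sorted theory is smooth), from which co-finiteness of $\spec{\T}$ follows at once. Your blind reconstruction of this fact breaks down at the step where you assert that condition $(\textit{II}^*)$ produces a model of size \emph{exactly} $c(\delta_V)$: that is not what $(\textit{II}^*)$ gives — it produces a model whose domain is the image of $\vars(\wit(\phi)\wedge\delta_V)$, whose size is bounded above by $|\vars(\wit(\phi))\cup V|$ but need not equal $c(\delta_V)$ (variables of $\wit(\phi)$ outside $V$ also contribute). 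More importantly, the dichotomy ``$\spec{\T}$ finite or co-finite'' does not follow from this local cardinality bookkeeping alone; the actual argument (which the paper proves for the non-empty case as its appendix lemma on strong finite witnessability, and which underlies the cited CADE theorem) requires starting from an \emph{infinite} model, collapsing it through the strong witness, and then re-inflating it with a fresh block $U'$ of distinguished variables to realize every cardinality $\geq M$. You gesture at ``bootstrapping from $x=x$ to arbitrary $\phi$,'' but this presupposes precisely the finite-vs-co-finite dichotomy for $\spec{\T}$ that you have not established. To close the proof you would either need to reproduce the smoothness argument in full, or — as the paper does — cite the known theorem and reduce to the already-handled non-stably-infinite case.
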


\begin{example}\label{line eight}~
\begin{enumerate} 
\item The trivial $\Sigma_{1}$-theory $\T_{\geq 1}$, with axiomatization $\{\psi_{\geq 1}\}$, consists of all $\Sigma_1$-interpretations.
It is strongly finitely witnessable (given its axiomatization is given by an universal formula, this is proven in \cite{Sheng2022}), and of course $\specn{\T_{\geq 1}}=[1,n]$ so $\mu(\T_{\geq 1})=1$. 
\item The $\Sigma_{1}$-theory $\Tone$ from \Cref{first appearance Tbb}  is also strongly finitely witnessable and thus gentle (\Cref{implies gentleness}), but $\specn{\Tone}=\{1\}$ so $\mu(\Tone)=0$.
\item An example of a $\Sigma_{1}$-theory that is gentle and has density $0$, but is not strongly finitely witnessable,
is denoted by $\Tmn$, for any fixed $m,n\in\No$.
It has axiomatization $\{\psi_{=m}\vee\psi_{=n}\}$, and its
models have cardinalities $m$ or $n$.
\end{enumerate}
\end{example}

\subsubsection{Smoothness and finite model property}\label{smooth section}

The next result involves both smoothness and the finite model property.

\begin{restatable}{theorem}{smoothdensitytheorem}
\label{smooth density}
    If $\T$ is a $\Sigma_{1}$-theory, $\mu(\T)$ being well-defined and equal to $1$ is necessary  for $\T$ to simultaneously admit smoothness and the finite model property.
\end{restatable}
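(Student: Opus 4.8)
The plan is to argue that if $\T$ is a $\Sigma_1$-theory that is simultaneously smooth and has the finite model property, then $\spec{\T}$ must be cofinite, from which $\mu(\T) = 1$ follows immediately (a cofinite set has natural density $1$, and in particular the density is well-defined). The key is to show that $\spec{\T}$ contains every sufficiently large natural number. Since $\T$ has the FMP, the formula $x = x$ (which is trivially $\T$-satisfiable, as $\dom{\A} \neq \emptyset$ for every $\T$-interpretation) is satisfied by some finite $\T$-interpretation $\A$; let $k = |\dom{\A}| \in \spec{\T}$. I would then invoke smoothness: for every cardinal $\kappa > k$ there is a $\T$-interpretation $\B$ satisfying $x = x$ with $|\dom{\B}| = \kappa$. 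Taking $\kappa$ to range over all finite cardinals greater than $k$, we get $\{k+1, k+2, \ldots\} \subseteq \spec{\T}$, hence $\spec{\T} \supseteq \{k, k+1, \ldots\}$ is cofinite.

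The remaining step is the routine density computation: if $A \subseteq \mathbb{N}$ is cofinite, say $\mathbb{N} \setminus A$ has $c$ elements all below some bound $N$, then for $n \geq N$ we have $|A \cap [1,n]| \geq n - c$, so $|\specn{\T}|/n \to 1$; since also $|\specn{\T}|/n \leq 1$, the limit exists and equals $1$. This gives both that $\mu(\T)$ is well-defined and that it equals $1$, which is exactly the conclusion.

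I expect the only subtlety — and it is minor — to be making sure the definitions are applied to a legitimately $\T$-satisfiable quantifier-free formula: one must note that $x = x$ is $\T$-satisfiable precisely because theories in this paper are nonempty classes of interpretations with nonempty domains, so both the FMP hypothesis and the smoothness hypothesis have something to act on. There is no deep combinatorial obstacle here, in contrast to Theorem \ref{not FW density}; the content is entirely that smoothness lets one push a single finite model up to all larger finite cardinalities, and the FMP guarantees such a finite model exists to start from. It is worth remarking for context that neither hypothesis alone suffices: $\Tinfty$ from \Cref{first appearance Tbb} is smooth (above its finite models — vacuously, as it has none) yet has density $0$ and no FMP, while $\Tpt$ from \Cref{example SI density} has the FMP but, lacking smoothness, has density $0$.
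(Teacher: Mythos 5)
Your proposal is correct and matches the paper's proof in all essentials: both pick a tautology, use the FMP to obtain a finite model of some cardinality $k$, invoke smoothness to show every $n \geq k$ lies in $\spec{\T}$, and conclude $\mu(\T)=1$ from cofiniteness. The only cosmetic difference is that you phrase the final step as ``cofinite implies density $1$'' while the paper writes out the ratio $|\specn{\T}|/n \geq m/(m+M)$ explicitly; these are the same calculation.
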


\begin{proof}[sketch]
 The proof is dual to that of \Cref{SI density}: if a theory is smooth and has the finite model property, it has all sufficiently large numbers as cardinalities of its models, and its density is therefore $1$.
\end{proof}

The following example not only allows one to visualize the use of \Cref{smooth density}, but will also help later in providing examples for each and all possible combination of the properties under consideration.

\begin{example}\label{application of smooth density}
Consider again $\Teven$ from \Cref{exwithteven}, with density $\fr{1}{2}$.
It was already shown in \cite{SZRRBT-21} that $\Teven$ has the finite model property without being smooth, but notice that \Cref{smooth density} perfectly encapsulates an intuition for why that is:
    as the theory has the finite model property, it has a finite model;
    were it smooth, it would have models of all larger cardinalities, and thus density $1$.
\end{example}

\begin{example}\label{example smooth density}
    The reciprocal of \Cref{smooth density} is false, as we can see from the theory $\Tnpt$ defined in \Cref{example gentle density}, which is not smooth. 
\end{example}

Now, \Cref{application of smooth density} shows an example of a theory that has the finite model property but is not smooth. 
But all three other Boolean combinations of these two properties are possible, as seen below.

\begin{example}\label{SM-FMP ex}~\begin{enumerate}
        \item The theory $\T_{\geq 1}$ from \Cref{line eight} is smooth and has the finite model property.
        \item One example of a smooth theory without the finite model property is the $\Sigma_{1}$-theory $\Tinfty$  
        from \Cref{first appearance Tbb}.
        It has density $0$, as it has no finite models.
        \item To see one of a theory that is neither smooth nor has the finite model property, which by \Cref{SI density} must have density $0$, fix an $n\in\No$ and consider the $\Sigma_{1}$-theory $\Tninfty$, defined in \cite{CADE} by the axiomatization $\{\psi_{=n}\vee\psi_{\geq m} : m\in\No\}$. Its finite models must have cardinality $n$.
    \end{enumerate}
\end{example}

\subsubsection{Strong finite witnessability}\label{SFW section}

%Although a complicated property, the result involving  strong finite witnessability is rather simple.
%Although being a result about necessity, and thus more useful to test for the absence of a property (in this case, strong finite witnessability), \Cref{SFW density} is still 
The following result, which is a corollary of earlier ones, is specially useful: proving a theory is not strongly finitely witnessable is quite challenging;
it involves finding a quantifier-free formula, a set of variables, and an arrangement on that set which fail the conditions to be a strong witness, for every candidate for a strong witness.
Checking whether the theory's density is $0$ or $1$ can be fairly easier.

\begin{restatable}{theorem}{sfwdensitytheorem}
\label{SFW density}
    If $\T$ is a $\Sigma_{1}$-theory, then $\mu(\T)$  being well-defined and equal to $0$ or $1$ is a necessary condition for $\T$ to be strongly finitely witnessable.
\end{restatable}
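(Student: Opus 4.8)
The plan is to derive Theorem~\ref{SFW density} as a corollary of the two lemmas we have already stated, \Cref{implies gentleness} and \Cref{gentle zero one}, together with \Cref{SI density}. The key observation is that strong finite witnessability interacts with stable infiniteness in a dichotomy: either the theory is stably infinite, or it is not. We handle the two cases separately, and in each case we conclude that $\mu(\T)$ is well-defined and lands in $\{0,1\}$.

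First I would treat the case where $\T$ is \emph{not} stably infinite. By \Cref{implies gentleness}, a $\Sigma_1$-theory that is not stably infinite is gentle; then \Cref{gentle zero one} immediately gives that $\mu(\T)$ is well-defined and equal to $0$ or $1$, which is exactly the conclusion we want. Next I would treat the case where $\T$ \emph{is} stably infinite. If moreover $\T$ is strongly finitely witnessable, then by \Cref{implies gentleness} it is gentle (the second disjunct of that lemma's hypothesis applies), so again \Cref{gentle zero one} yields that $\mu(\T)$ is well-defined and in $\{0,1\}$. So in fact both branches reduce, via \Cref{implies gentleness}, to gentleness, and then to \Cref{gentle zero one}. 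Since every $\Sigma_1$-theory either is or is not stably infinite, the two cases are exhaustive, and in either case the conclusion holds; this completes the argument.

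Actually the case split is not even needed: strong finite witnessability alone already puts us in the second disjunct of the hypothesis of \Cref{implies gentleness}, so a strongly finitely witnessable $\Sigma_1$-theory is gentle outright, and \Cref{gentle zero one} finishes it. I would therefore present the proof in its shortest form: assume $\T$ is strongly finitely witnessable; by \Cref{implies gentleness}, $\T$ is gentle; by \Cref{gentle zero one}, $\mu(\T)$ is well-defined and equals $0$ or $1$.

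The main obstacle is not in this corollary itself — it is genuinely a one-line consequence once \Cref{implies gentleness} and \Cref{gentle zero one} are in hand. The real work lives upstream, in proving \Cref{implies gentleness}: showing that strong finite witnessability of a $\Sigma_1$-theory forces the spectrum to be fully computable and eventually periodic enough to be either finite or co-finite. That is where arrangements, the strong witness function, and the structure of $\Sigma_1$-interpretations (which are determined up to isomorphism purely by cardinality) must be combined carefully. But for the statement as written, I would simply cite the two prior results and be done; the only thing to double-check is that the hypotheses of \Cref{implies gentleness} are stated as a disjunction that strong finite witnessability satisfies directly, so that no appeal to \Cref{SI density} or to a case analysis is actually required.
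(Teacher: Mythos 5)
Your final one-line argument (strongly finitely witnessable $\Rightarrow$ gentle by \Cref{implies gentleness}, then $\mu(\T)\in\{0,1\}$ by \Cref{gentle zero one}) is exactly the proof given in the main text, and your observation that the case split is superfluous is correct. For reference, the appendix also offers an alternative proof that \emph{does} case-split on stable infiniteness (using that strong finite witnessability implies the finite model property, and that stably infinite plus strongly finitely witnessable implies smooth, hence density $1$ by \Cref{smooth density}; while non-stably-infinite gives density $0$ by \Cref{SI density}), but the route you settled on is the one the paper presents alongside the theorem.
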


\begin{proof}
 By \Cref{implies gentleness} and \Cref{gentle zero one}.
\end{proof}

\begin{example}
The theory $\Teven$ from \Cref{exwithteven} is not strongly finitely witnessable, as proven in \cite{SZRRBT-21}, but the proof found there is quite involved, demanding careful use of arrangements.
    Here, we only need to point to the fact that $\Teven$ has natural density $\fr{1}{2}$.
\end{example}

\begin{example}
    The reciprocal of \Cref{SFW density} is false: the theories 
    $\Tpt$ and $\Tnpt$ from
    \Cref{example SI density,example gentle density} have, respectively, densities $0$ and $1$, but neither is strongly finitely witnessable, which follows from the fact that both are stably infinite without being smooth, together with \cite[Theorem~7]{CADE}, which shows
    stably infinite, one-sorted theories that are strongly finitely witnessable are smooth.
\end{example}

\subsubsection{Computability of minimal model functions}\label{cmmf section}

We move now to the question of computability of a minimal model function. 
For this, we first establish in \Cref{cmmf and computability} a connection between this  and the computability of the spectra.

\begin{restatable}{proposition}{cmmfcomputability}\label{cmmf and computability}
    $\T$ is a $\Sigma_{1}$-theory with a computable minimal model function if, and only if, $\spec{\T}$ is computable.
\end{restatable}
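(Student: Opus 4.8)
The plan is to prove both directions of the equivalence, exploiting the fact that in the empty signature $\Sigma_1$ every quantifier-free formula behaves in a very controlled way.

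\medskip

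\noindent\textbf{Direction ($\Rightarrow$): computable minimal model function implies computable spectrum.}
First I would observe that $\spec{\T} = \spec{\T,\top}$ where $\top$ is a fixed tautology such as $x = x$, and that $n \in \spec{\T}$ iff there is a $\T$-interpretation of size exactly $n$. Given a computable $\minmod_{\T}$, I would compute $\minmod_{\T}(\top)$; call it $m_0$. By definition this tells us the smallest cardinality of a $\T$-model (of $\top$), so $[1,m_0-1]$ contains no spectrum elements and $m_0 \in \spec{\T}$ (taking the convention that $\minmod_{\T}(\top)=\aleph_0$ would mean $\T$ has no finite models, so $\spec{\T}=\emptyset$, trivially computable). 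To decide membership of an arbitrary $n$, the idea is to iterate: given that we know the spectrum below some threshold $k$, we ask whether there is a $\T$-model of size $> k$ at all, and if so what the least such size is. The subtlety is that $\minmod_{\T}$ only gives the \emph{global} minimum, not the minimum above a threshold. To get around this, I would use formulas that force a lower bound: in $\Sigma_1$, the formula $\psi_{\geq k+1}$ (or rather, a quantifier-free formula equivalent to it after witnessing --- but $\psi_{\geq k+1}$ itself is not quantifier-free, so instead use a quantifier-free formula with $k+1$ distinct free variables, $\bigwedge_{i<j\leq k+1}\neg(x_i=x_j)$) is satisfied exactly by interpretations with at least $k+1$ elements. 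So $\minmod_{\T}$ applied to $\neq(x_1,\dots,x_{k+1})$ returns the least spectrum element that is $\geq k+1$, or $\aleph_0$ if there is none. Iterating this from $k=0$ enumerates $\spec{\T}$ in increasing order; since it is enumerated in increasing order, membership of any fixed $n$ is decidable (enumerate until we pass $n$ or hit $\aleph_0$). Hence $\spec{\T}$ is computable.

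\medskip

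\noindent\textbf{Direction ($\Leftarrow$): computable spectrum implies computable minimal model function.}
Here the key point is that in the empty signature, the $\T$-satisfiability of a quantifier-free formula $\phi$ and the cardinalities of its $\T$-models are determined entirely by the arrangements of $\vars(\phi)$ compatible with $\phi$ together with $\spec{\T}$. Concretely, a quantifier-free $\Sigma_1$-formula $\phi$ with $k$ variables is $\T$-satisfiable in a model of size $n$ iff there is an arrangement $\delta$ on $\vars(\phi)$ such that $\delta \models \phi$ (propositionally, treating equalities as atoms) and $\delta$ partitions $\vars(\phi)$ into at most $\min(n,k)$ classes --- i.e., iff $n \geq (\text{number of classes of }\delta)$, because having a model of size $n$ with that arrangement is possible exactly when $n$ is at least the number of equivalence classes and $n \in \spec{\T}$. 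So to compute $\minmod_{\T}(\phi)$: enumerate all (finitely many) arrangements $\delta$ on $\vars(\phi)$; keep those with $\delta \models \phi$; for each such $\delta$ let $c_\delta$ be its number of classes; then $\minmod_{\T}(\phi) = \min\{ n \in \spec{\T} : n \geq \min_\delta c_\delta \}$, which is computable since $\spec{\T}$ is computable (search $n = \min_\delta c_\delta, \min_\delta c_\delta + 1, \dots$; if this search never terminates then $\spec{\T}$ is bounded and finite above $\min_\delta c_\delta$, but $\spec{\T}$ computable lets us decide that and return $\aleph_0$). If no arrangement satisfies $\phi$, then $\phi$ is $\T$-unsatisfiable and $\minmod_{\T}$ is unconstrained there.

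\medskip

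\noindent\textbf{Main obstacle.}
The routine parts are the combinatorial bookkeeping with arrangements. The step that needs the most care is handling the distinction between a finite spectrum and an infinite one in both directions: when iterating $\minmod_{\T}$ or searching through $\spec{\T}$, a naive search may not halt, and one must argue that computability of the relevant object (either $\minmod_{\T}$, which returns values in $\N$ including $\aleph_0$, or $\spec{\T}$ as a decidable set) is exactly what lets us detect the "no larger model" / "no model above threshold" case and return $\aleph_0$ appropriately. I would make sure the argument that $\minmod_{\T}$ is genuinely $\N$-valued (allowed to output $\aleph_0$) is used to close this gap cleanly; this is the only place where the precise definitions really bite.
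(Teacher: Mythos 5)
Your overall strategy mirrors the paper's: iterate $\minmod_\T$ on the formulas $\neq(x_1,\dots,x_{k+1})$ to enumerate the spectrum for ($\Rightarrow$), and reduce $\minmod_\T(\phi)$ to the minimum number of classes of a compatible arrangement plus a search in $\spec{\T}$ for ($\Leftarrow$). The ($\Leftarrow$) direction is essentially the paper's argument; your parenthetical that "$\spec{\T}$ computable lets us decide" whether the search terminates is imprecise (computability of a set does not let you decide whether it is bounded), but the non-uniform fix you gesture at — if $\spec{\T}$ is bounded, hard-code the bound; if not, the search always halts — is exactly what the paper does, and is legitimate because existence of \emph{some} algorithm is all that is required.

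There is, however, a genuine gap in ($\Rightarrow$). You assert that $\minmod_\T(\neq(x_1,\dots,x_{k+1}))$ "returns the least spectrum element that is $\geq k+1$, or $\aleph_0$ if there is none." This is false when $\neq(x_1,\dots,x_{k+1})$ is not $\T$-satisfiable at all, i.e.\ when $\T$ has \emph{no} model (finite or infinite) of cardinality $\geq k+1$. By the definition of a minimal model function, the constraint on $\minmod_\T(\phi)$ only applies to $\T$-satisfiable $\phi$; on $\T$-unsatisfiable inputs the value is completely unconstrained, so it can be an arbitrary finite number $\geq k+1$, which your enumeration would then wrongly add to $\spec{\T}$. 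This situation really does arise: take $\T = \T_{\mathrm I}$ with $\spec{\T}=\{1\}$ and no infinite models; then already $\neq(x_1,x_2)$ is $\T$-unsatisfiable. Your "Main obstacle" paragraph correctly senses that the finite-spectrum case is where "the precise definitions really bite," but you attribute the subtlety to $\minmod_\T$ returning $\aleph_0$, when the real issue is that it need \emph{not} return $\aleph_0$. The paper sidesteps this with a case split: if $\spec{\T}$ is finite, computability is immediate (non-constructively, hard-code the finitely many values); if $\spec{\T}$ is infinite then by compactness $\T$ has an infinite model, so every $\neq(x_1,\dots,x_{k+1})$ is $\T$-satisfiable and the iteration is sound. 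Alternatively you could repair the argument by first deciding $\T$-satisfiability of each $\neq(x_1,\dots,x_{k+1})$ — $\Sigma_1$-theories are decidable, as shown in the paper's \Cref{sigma one decidable} — but some such repair is needed; the uniform iteration as you wrote it is incorrect.
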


This proposition plays an important role in the proof of the theorem below:

\begin{restatable}{theorem}{cmmfdensity}\label{cmmf density}
    If $\T$ is a $\Sigma_{1}$-theory with a well-defined density, the fact that $\mu(\T)$ is a computable number is a necessary condition for $\T$ to have a computable minimal model function.
    Furthermore, for every computable number $0\leq r\leq 1$, there exists a $\Sigma_{1}$-theory $\T$ with $\mu(\T)=r$ that has a computable minimal model function and the finite model property, but is not smooth.
\end{restatable}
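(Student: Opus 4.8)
The first part is the harder direction, and the plan is to leverage Proposition~\ref{cmmf and computability}. Suppose $\T$ has a computable minimal model function; then $\spec{\T}$ is a computable subset of $\mathbb{N}$. I want to conclude that $\mu(\T)$, which is assumed to exist, is a computable real number in the sense of Definition~\ref{definition computable number}. The natural candidate sequences are $a_{n} = |\specn{\T}|$ and $b_{n} = n$: since $\spec{\T}$ is computable, the map $n \mapsto |\spec{\T}\cap[1,n]|$ is a computable sequence of integers, $b_n = n$ is a computable sequence in $\No$, and by definition $\LIM a_n/b_n = \mu(\T)$. Hence $\mu(\T)$ is computable. This is essentially immediate once Proposition~\ref{cmmf and computability} is in hand, so the first paragraph of the proof should just spell this out.

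For the second part, the task is: given a computable $r$ with $0 \leq r \leq 1$, build a $\Sigma_1$-theory $\T$ with $\mu(\T) = r$, a computable minimal model function, the finite model property, but which is \emph{not} smooth. The plan is to mimic the construction sketched in Example~\ref{computable number examples}, but to carry it out with enough care that the resulting spectrum is computable and bounded in a way that kills smoothness. Since $r$ is computable, fix computable sequences $\{a_n\}$ in $\mathbb{Z}$, $\{b_n\}$ in $\No$ with $\LIM a_n/b_n = r$; by standard manipulations one may assume $0 \le a_n \le b_n$ and that $b_n \to \infty$ (e.g. passing to $a_n' = \max(0,\min(a_n,b_n))$ and then to a subsequence, or replacing $b_n$ by $n\cdot b_n$ and $a_n$ accordingly — the key point is that these operations preserve computability of the sequences and the limit). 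The idea is to design, blockwise, a set $S \subseteq \No$ whose counting function tracks the ratios $a_n/b_n$: on a window of length proportional to $b_n$ one places roughly $a_n$ elements of $S$, arranged so that the running density at the window endpoints equals $a_n/b_n$, exactly as in the $0.57824\ldots$ example. One then lets $\T$ be the $\Sigma_1$-theory whose models have exactly the finite cardinalities in $S$ (and no infinite models), axiomatized by a family of cardinality sentences $\{\psi_{=k_1} \vee \cdots \vee \psi_{=k_j} \vee \psi_{\geq ?}\}$-style disjunctions analogous to those used for $\Teven$, $\Tbb$, and $\Tpt$ earlier — but crucially ruling out infinite models, which is what forces non-smoothness. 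Then $\mu(\T) = \LIM |\specn{\T}|/n = r$ by construction, the FMP holds trivially (every satisfiable quantifier-free $\Sigma_1$-formula, being over the empty signature, pins down only a lower bound on the model size, and $S$ is infinite so some finite model above that bound exists — one must ensure $S$ is infinite, which follows from $b_n \to \infty$ and $r > 0$, with the $r = 0$ case handled separately by a cofinal but sparse $S$ such as $\{2^i\}$ as in $\Tpt$), and non-smoothness follows because $\T$ has a finite model but no infinite one. Computability of the minimal model function follows from Proposition~\ref{cmmf and computability}, since $S = \spec{\T}$ is computable: membership $k \in S$ is decided by computing far enough into the block structure, which is governed by the computable sequences $\{a_n\},\{b_n\}$.

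The main obstacle I anticipate is the bookkeeping in the block construction: one needs the counting function of $S$ to converge to $r$, which requires controlling not just the densities at block boundaries but also the oscillation \emph{within} each block (a block of length $\sim b_n$ contributes an error of size $O(b_n/n)$ at intermediate points, so one needs the blocks to grow slowly relative to their cumulative position — this is exactly why one replaces $b_n$ by something like $n b_n$ or interleaves, and it is the one place where the argument is more than routine). The $r=0$ boundary case also needs a separate remark, since there one cannot use "roughly $r$ fraction of each block" directly; a sparse infinite set like $\{2^i : i \in \mathbb{N}\}$ works and is visibly computable with density $0$, FMP, and no infinite models. Everything else — that the axiomatization genuinely produces a theory with spectrum exactly $S$ and no infinite models, that FMP and non-smoothness follow, that computability of $\spec{\T}$ gives a computable minimal model function — is bookkeeping that parallels the earlier examples, so the write-up should cite Example~\ref{computable number examples}, Example~\ref{first appearance Tbb}, and Proposition~\ref{cmmf and computability} rather than redo it.
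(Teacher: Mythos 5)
Your first part is fine and matches the paper: apply Proposition~\ref{cmmf and computability} to get computability of $\spec{\T}$, then observe that $n\mapsto|\specn{\T}|$ is a computable integer sequence and $\mu(\T)=\LIM|\specn{\T}|/n$ is therefore a computable real. (Guard the $n=0$ index, but that is cosmetic.)

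Your second part contains a genuine gap, concentrated in one idea: you want to build a $\Sigma_{1}$-theory whose finite models have exactly the cardinalities in your set $S$, \emph{and no infinite models}, and you want to deduce non-smoothness from the absence of infinite models. Neither half of this survives. If $r>0$ then $S$ must be infinite (you note this yourself), so $\T$ has arbitrarily large finite models; by compactness — the very argument used in \Cref{-SI implies bound} — any such first-order theory automatically has infinite models, so ``ruling them out'' is simply not available in first-order logic. Moreover, even if you could, the stated reason for non-smoothness would still be the wrong one: smoothness is a statement about cardinals $\kappa>|\dom{\A}|$, and what actually kills it here is that $S$ has \emph{gaps} — there is a cardinality $k\in S$ and some $k'>k$ with $k'\notin S$, so the model of size $k$ satisfying a tautology cannot be grown to size $k'$. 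That is exactly what the paper's construction via \Cref{definition of f} produces: a computable, step-shaped spectrum with gaps (and, unavoidably, infinite models as well). You should drop the ``no infinite models'' plan entirely and argue non-smoothness from the gaps in the spectrum. Two further, smaller points: you do not handle $r=1$, which the paper treats separately with $\Tnpt$ (the blockwise tracking of $a_n/b_n$ can degenerate when $a_n=b_n$); and your worry about intra-block oscillation is real but is already resolved by the mediant argument of \Cref{mediant}, so the ad hoc rescaling to $nb_n$ is unnecessary. The underlying block construction and the appeal to \Cref{cmmf and computability} for CMMF are sound; it is the non-smoothness mechanism and the ``no infinite models'' ambition that must be corrected.
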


The proof of the first part takes a theory $\T$
with a computable minimal model function, and from \Cref{cmmf and computability} one sees that $\spec{\T}$ is computable; we then prove 
that this implies $\mu(\T)$
% as the natural density of the computable set $\spec{\T}$, 
is itself a computable real number, ruling out non-computable numbers. 
Indeed, if $A$ is a computable set, the sequence $\{|\An|\}_{n\in\mathbb{N}}=\{|\{k\in A : k\leq n\}|\}_{n\in\mathbb{N}}$ is computable (and so is $\{n\}_{n\in\mathbb{N}}$, but that is obvious); if $\mu(A)=r$, we have $r=\LIM\fr{|\An|}{n}$, proving $r$ is computable. 

% In addition, \Cref{cmmf density} includes a second part,
% which proves that
%  any computable number between $0$ and $1$ is the natural density of a corresponding theory with a computable minimal model function, so this is the best we can hope for. 

We prove the second part  by constructing in \Cref{definition of f} below, from two sequences $\{a_{n}\}_{n\in\mathbb{N}}$ and $\{b_{n}\}_{n\in\mathbb{N}}$, a function $f$ whose image (which will also equal the spectrum of the theory to be constructed) will be a computable set and have a density associated to the mediants of the ratios $\fr{a_{n}}{b_{n}}$, where the mediant of the fractions $\fr{a}{b}$ and $\fr{c}{d}$ is the fraction $\fr{(a+c)}{(b+d)}$. Although tedious to prove, it is true that the limit of the mediants of the ratios between two sequences equals the limit of the ration, and this guarantees that the natural density of the image of $f$ will be the limit of $\fr{a_{n}}{b_{n}}$. 

\begin{definition}\label{definition of f}
    Given  sequences $\{a_{n}\}_{n\in\mathbb{N}}$ and $\{b_{n}\}_{n\in\mathbb{N}}$ with $0<a_{n}<b_{n}$ and $a_{n},b_{n}\in\mathbb{N}$, for all $n\in\mathbb{N}$, we define an associated function $f:\No \rightarrow\No $ inductively as follows: $f(n)=n$ for $1\leq n\leq a_{0}$, and $f(n)=a_{0}$ for $a_{0}+1\leq n\leq b_{0}$; and, assuming $f(n)$ defined for all $1\leq n\leq M=\sum_{i=0}^{m}b_{i}$, for any $m\geq 0$, we make $f(n)=n$ for $M+1\leq n\leq M+a_{m+1}$, and $f(n)=M+a_{m+1}$ for $M+a_{m+1}+1\leq n\leq M+b_{m+1}$.
\end{definition}

The construction defined in \Cref{definition of f} is outlined  in \Cref{graphs} $(A)$.
Notice the step-shape of the function, that can be computed by induction on $n$.

\begin{figure}[t]
\centering
\begin{tabular}{cc}
\begin{tikzpicture}[scale=0.2]
\draw[-latex] (0,0) -- (21,0) ; %edit here for the axis
\foreach \x in  {0,1,2,3,4,5,6,8,9,10,11,12,13,14,15,16,17,18,19,20} % edit here for the vertical lines
\draw[shift={(\x,0)},color=black] (0pt,3pt) -- (0pt,-3pt);
\foreach \x in {0,,,,,5,,,,,10,,,,,15,,,,,20} % edit here for the numbers
\draw[shift={(\x,0)},color=black] (0pt,0pt) -- (0pt,-3pt) node[below] 
{$\scriptscriptstyle{\x}$};
\draw[-latex] (0,0) -- (0,21) ; %edit here for the axis
\foreach \y in  {0,1,2,3,4,5,6,8,9,10,11,12,13,14,15,16,17,18,19,20} % edit here for the vertical lines
\draw[shift={(0,\y)},color=black] (3pt,0pt) -- (-3pt,0pt);
\foreach \y in {0,,,,,5,,,,,10,,,,,15,,,,,20} % edit here for the numbers
\draw[shift={(0,\y)},color=black] (0pt,0pt) -- (-3pt,0pt) node[left] 
{$\scriptscriptstyle{\y}$};
\draw[step=1.0,gray!40,very thin] (0,0) grid (20,20);
\draw[dashed,red] (1,1) -- (20,20);
\draw (1,1) -- (4,4) -- (6,4) -- (7,7) -- (9,9) -- (10,9) -- (11,11) -- (17,17) -- (20,17);
\node at (1,1)[circle,fill,inner sep=1.5pt]{};
\node at (2,2)[circle,fill,inner sep=1.5pt]{};
\node at (3,3)[circle,fill,inner sep=1.5pt]{};
\node at (4,4)[circle,fill,inner sep=1.5pt]{};
\node at (5,4)[circle,fill,inner sep=1.5pt]{};
\node at (6,4)[circle,fill,inner sep=1.5pt]{};
\node at (7,7)[circle,fill,inner sep=1.5pt]{};
\node at (8,8)[circle,fill,inner sep=1.5pt]{};
\node at (9,9)[circle,fill,inner sep=1.5pt]{};
\node at (10,9)[circle,fill,inner sep=1.5pt]{};
\node at (11,11)[circle,fill,inner sep=1.5pt]{};
\node at (12,12)[circle,fill,inner sep=1.5pt]{};
\node at (13,13)[circle,fill,inner sep=1.5pt]{};
\node at (14,14)[circle,fill,inner sep=1.5pt]{};
\node at (15,15)[circle,fill,inner sep=1.5pt]{};
\node at (16,16)[circle,fill,inner sep=1.5pt]{};
\node at (17,17)[circle,fill,inner sep=1.5pt]{};
\node at (18,17)[circle,fill,inner sep=1.5pt]{};
\node at (19,17)[circle,fill,inner sep=1.5pt]{};
\node at (20,17)[circle,fill,inner sep=1.5pt]{};
\end{tikzpicture}
&
\begin{tikzpicture}[scale=0.2]
\draw[-latex] (0,0) -- (21,0) ; %edit here for the axis
\foreach \x in  {0,1,2,3,4,5,6,8,9,10,11,12,13,14,15,16,17,18,19,20} % edit here for the vertical lines
\draw[shift={(\x,0)},color=black] (0pt,3pt) -- (0pt,-3pt);
\foreach \x in {0,,,,,5,,,,,10,,,,,15,,,,,20} % edit here for the numbers
\draw[shift={(\x,0)},color=black] (0pt,0pt) -- (0pt,-3pt) node[below] 
{$\scriptscriptstyle{\x}$};
\draw[-latex] (0,-1) -- (0,21) ; %edit here for the axis
\foreach \y in  {0,1,2,3,4,5,6,8,9,10,11,12,13,14,15,16,17,18,19,20} % edit here for the vertical lines
\draw[shift={(0,\y)},color=black] (3pt,0pt) -- (-3pt,0pt);
\foreach \y in {0,,,,,5,,,,,10,,,,,15,,,,,20} % edit here for the numbers
\draw[shift={(0,\y)},color=black] (0pt,0pt) -- (-3pt,0pt) node[left] 
{$\scriptscriptstyle{\y}$};
\draw[step=1.0,gray!40,very thin] (0,0) grid (20,20);
\draw[dashed,red] (1,1) -- (20,20);
\draw (1,1) -- (3,1) -- (4,4) -- (5,5) -- (7,5) -- (8,8) -- (10,10) -- (14,10) -- (15,15) -- (20,20);
\node at (1,1)[circle,fill,inner sep=1.5pt]{};
\node at (2,1)[circle,fill,inner sep=1.5pt]{};
\node at (3,1)[circle,fill,inner sep=1.5pt]{};
\node at (4,4)[circle,fill,inner sep=1.5pt]{};
\node at (5,5)[circle,fill,inner sep=1.5pt]{};
\node at (6,5)[circle,fill,inner sep=1.5pt]{};
\node at (7,5)[circle,fill,inner sep=1.5pt]{};
\node at (8,8)[circle,fill,inner sep=1.5pt]{};
\node at (9,9)[circle,fill,inner sep=1.5pt]{};
\node at (10,10)[circle,fill,inner sep=1.5pt]{};
\node at (11,10)[circle,fill,inner sep=1.5pt]{};
\node at (12,10)[circle,fill,inner sep=1.5pt]{};
\node at (13,10)[circle,fill,inner sep=1.5pt]{};
\node at (14,10)[circle,fill,inner sep=1.5pt]{};
\node at (15,15)[circle,fill,inner sep=1.5pt]{};
\node at (16,16)[circle,fill,inner sep=1.5pt]{};
\node at (17,17)[circle,fill,inner sep=1.5pt]{};
\node at (18,18)[circle,fill,inner sep=1.5pt]{};
\node at (19,19)[circle,fill,inner sep=1.5pt]{};
\node at (20,20)[circle,fill,inner sep=1.5pt]{};
\node at (1,1)[circle,fill=red,inner sep=1pt]{};
\node at (2,0)[circle,fill=red,inner sep=1pt]{};
\node at (3,0)[circle,fill=red,inner sep=1pt]{};
\node at (4,1)[circle,fill=red,inner sep=1pt]{};
\node at (5,1)[circle,fill=red,inner sep=1pt]{};
\node at (6,0)[circle,fill=red,inner sep=1pt]{};
\node at (7,0)[circle,fill=red,inner sep=1pt]{};
\node at (8,1)[circle,fill=red,inner sep=1pt]{};
\node at (9,1)[circle,fill=red,inner sep=1pt]{};
\node at (10,1)[circle,fill=red,inner sep=1pt]{};
\node at (11,0)[circle,fill=red,inner sep=1pt]{};
\node at (12,0)[circle,fill=red,inner sep=1pt]{};
\node at (13,0)[circle,fill=red,inner sep=1pt]{};
\node at (14,0)[circle,fill=red,inner sep=1pt]{};
\node at (15,1)[circle,fill=red,inner sep=1pt]{};
\node at (16,1)[circle,fill=red,inner sep=1pt]{};
\node at (17,1)[circle,fill=red,inner sep=1pt]{};
\node at (18,1)[circle,fill=red,inner sep=1pt]{};
\node at (19,1)[circle,fill=red,inner sep=1pt]{};
\node at (20,1)[circle,fill=red,inner sep=1pt]{};
\end{tikzpicture} \\
$(A)$ & $(B)$
\end{tabular}
\caption{$(A)$~Initial values of the function $f$ from \Cref{definition of f}, for $a_{0}=4$, $a_{1}=3$, $a_{2}=7$, $b_{0}=6$, $b_{1}=4$ and $b_{2}=10$.~
$(B)$~Initial values of the functions $\mf$ (in red) and $f$ (in black) from the case with computable $r$ of \Cref{FW density general case}, for $a_{0}=1$, $a_{1}=2$, $a_{2}=3$, $b_{0}=2$, $b_{1}=3$, $b_{2}=5$, $g(1)=1$, $g(2)=1$ and $g(3)=0$.}\label{graphs}
\end{figure}
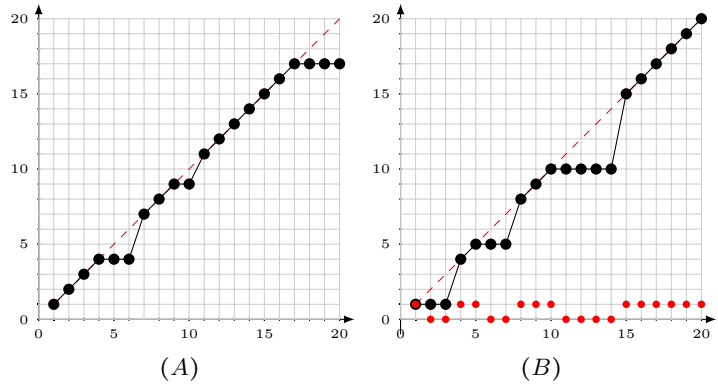

We next show an application of \Cref{cmmf density},
by identifying a theory that according to this theorem
has a computable minimal model function, but this fact seems
difficult to prove without using this theorem.

\begin{example}
    Take $\Omega=0.57824\ldots$ from \Cref{computable number examples}.
    Take the sequence of fractions $\fr{5}{10},\fr{57}{100},\ldots$ converging to $\Omega$, and define a function $f$ as in \Cref{definition of f}:
    so $f(n)=n$ for $1\leq n\leq 4$, and $f(n)=5$ for $5\leq n\leq 10$;
    $f(n)=n$ for $11\leq n\leq 66$, and $f(n)=67$ for $67\leq n\leq 110$,
    and so on.
    Define then a theory $\T_{\Omega}$ with axiomatization $\{\psi_{\geq f(n+1)}\vee\bigvee_{i=1}^{n}\psi_{=f(n)} : n\in \No\}$:
    it has models of size $1$ through $5$, $11$ through $67$, and so on.\footnote{Notice $\T_{\Omega}$ is not the same as the theory from \Cref{computable number examples}: indeed, the construction from \Cref{definition of f} is more general.}
    We can prove that it has natural density $\Omega$, and thus does not have a computable minimal model function.
    It is, however unclear how one would prove this without resorting to \Cref{cmmf density} and \Cref{cmmf and computability}.
\end{example}

The reciprocal of \Cref{cmmf density} is false, in the sense that a theory without a computable minimal model function can still have a computable natural density.

\begin{example}\label{bb}
    Consider again the theory $\Tbb$ from \Cref{first appearance Tbb}, whose natural density we have shown to be the computable number $0$, despite the fact it does not have a computable minimal model function (see \cite[Lemma~128]{LPAR-arXiv}), what by \Cref{cmmf and computability} means $\spec{\Tbb}$ is not computable.
\end{example}

Notice that, from \cite[Lemma~7]{CADE} and \cite[Theorem~4]{LPAR}, the theories in \Cref{cmmf density} are also finitely witnessable without being strongly finitely witnessable. 
They are also not gentle in the case that $0<\mu(\T)<1$.
It is still possible to come up with gentle examples for $\mu(\T)=1$ or $\mu(\T)=0$, as the next example shows.

\begin{example}\label{three ex}
    To obtain $\Sigma_{1}$-theories $\T$, with $\mu(\T)=1$ or $\mu(\T)=0$, that have a computable minimal model function and are gentle but not strongly finitely witnessable it is enough to consider, for the 
    first case, $\T$ with axiomatization $\{\psi_{=1}\vee\psi_{\geq 3}\}$; for the second, $\T$ with axiomatization $\{\psi_{=1}\vee\psi_{=3}\}$.
\end{example}

% To see that this last one is not strongly finitely witnessable, suppose it is:
% take $\phi$ as equal to $x=x$, satisfied by the $\T$-interpretation with one element;
% it follows that this interpretation also satisfies $\wit(\phi)\wedge\delta_{V}$, for $\delta_{V}$ the arrangement on $V=\vars(\wit(\phi))$ where all variables are equal;
% add a variable to $V$ in order to obtain $U$, and the arrangement $\delta_{U}$ extending $\delta_{V}$ with an extra equivalence class just for the new variable; 
% and then $\wit(\phi)\wedge\delta_{U}$ is satisfied by the $\T$-interpretation with three elements, but cannot be witnessed by any model of $\T$.

\subsubsection{Finite witnessability}\label{FW section}

The theorems so far have provided necessary conditions for a theory to be gentle, smooth strongly finitely witnessable, or have a computable minimal model function. 
We now show that this is as far as this goes: namely, we cannot achieve necessary conditions using natural densities for finite witnessability alone.

In fact, any real number $r$ is the natural density of a finitely witnessable theory.
If $r$ is computable then \Cref{cmmf density} already 
constructs a finitely witnessable theory $\T$ with $\mu(\T)=r$,
as \cite[Theorem~4]{LPAR} proved that a theory with a computable minimal model function is finitely witnessable.
However, in the next theorem we construct
such a theory also for non-computable numbers.
In addition, the theorem shows that the generated theory
does not need to have a computable minimal model function even if $r$ is computable.
% But these occur in the third line of \Cref{table}, so in order to populate the fourth one we produce fresh example that, in addition to being finitely witnessable, do not have a computable minimal model function. This result, which is achieved in \Cref{FW density general case}, completes the full picture showcased in \Cref{table}.

\begin{restatable}{theorem}{FWdensitygeneralcase}\label{FW density general case}
        If $\T$ is a $\Sigma_{1}$-theory, nothing can be said about $\mu(\T)$ if $\T$ is only known to be finitely witnessable; that is, for every number $0\leq r\leq 1$, there exists a $\Sigma_{1}$-theory $\T$ with $\mu(\T)=r$ that is finitely witnessable yet doesn't have a computable minimal model function.
\end{restatable}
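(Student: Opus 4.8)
The plan is to build, for each $0\leq r\leq 1$, a $\Sigma_1$-theory whose spectrum has density $r$, is finitely witnessable, but whose spectrum is not computable (so that, by \Cref{cmmf and computability}, there is no computable minimal model function). For $r$ computable, \Cref{cmmf density} already supplies a theory with density $r$, the finite model property, and a computable minimal model function; that theory \emph{does} have a computable minimal model function, so it is the wrong example — we need a new construction that deliberately destroys computability of the spectrum while keeping finite witnessability. The idea is to take the step function $f$ of \Cref{definition of f} associated to sequences $\{a_n\},\{b_n\}$ converging (in the sense of mediants) to $r$, and to \emph{thin out} its image using an auxiliary non-computable predicate $g$: on the $n$-th ``flat'' block of $f$ we either include all $a_n$ new cardinalities or we include only one of them, according to whether $g(n)=1$ or $g(n)=0$, where $g$ is the characteristic function of a non-computable but sufficiently sparse set (e.g. the range of the busy beaver function, as in \Cref{first appearance Tbb}, or any undecidable set of density $0$ inside the blocks). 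This is the construction sketched in \Cref{graphs}$(B)$, with $\mf$ (in red) recording the $g$-values.

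The steps, in order: (1) Fix $\{a_n\},\{b_n\}$ with $0<a_n<b_n$ realising the target density $r$ via mediants, exactly as in the proof of \Cref{cmmf density}; when $r$ is non-computable pick these sequences so that $\lim a_n/b_n = r$ with $a_n,b_n$ merely integer sequences (not necessarily computable), and when $r$ is computable pick them computable. (2) Define the modified image set $S$ block by block: letting $M_m=\sum_{i=0}^m b_i$, on the block of ``new'' values $\{M_m+1,\dots,M_m+a_{m+1}\}$ put all of them into $S$ if $g(m+1)=1$ and only the single value $M_m+1$ into $S$ if $g(m+1)=0$; include the base values $1,\dots,a_0$ (suitably thinned by $g(0)$) as well. (3) Let $\T$ be the $\Sigma_1$-theory axiomatised, as in \Cref{definition of f}'s companion theory, by sentences forcing every finite model to have cardinality in $S$ while always allowing infinite models — concretely $\{\psi_{\geq s_{k+1}}\vee\bigvee_{i\leq k}\psi_{=s_i} : k\in\No\}$ where $s_0<s_1<\cdots$ enumerates $S$; then $\spec{\T}=S$. (4) Verify $\mu(\T)=r$: since each ``flat'' block contributes at most $b_n-1$ non-spectrum values whether or not it was thinned, and the thinning only \emph{removes} at most $a_n-1$ further values from a block of length $b_n$, the density computation is dominated by the same mediant limit as in \Cref{cmmf density}; the thinned-out pieces form a set of density $0$ relative to $[1,n]$ because the blocks grow and the removed fraction within each block is bounded, so the perturbation vanishes in the limit — this needs the ratios $a_n/b_n$ to stay bounded away from $0$, which we arrange. (5) Verify finite witnessability: since $\T$ always admits infinite models and its finite spectrum is an infinite set, for a quantifier-free $\phi$ with $k$ variables take $\wit(\phi)=\phi\wedge\bigwedge_{i=1}^{N}x_i=x_i$ for fresh $x_i$ with $N$ chosen (computably, from $k$) to be an element of $S$ that is $\geq k$ — the point is that $\wit$ must be a \emph{computable} function, and one can always compute \emph{some} element of $S$ above a given bound because the base value $s_0=1$ and the ``first value of each block'' are produced unconditionally; alternatively, since infinite models are always available, use the standard trick that a theory all of whose satisfiable quantifier-free formulas have an infinite model and which admits arbitrarily large finite models of easily-computable sizes is finitely witnessable. (6) Verify non-computability of $\spec{\T}=S$: from $S$ together with the (computable) sequences/blocks one could read off the values $g(m)$ — $g(m)=1$ iff the $m$-th block is fully contained in $S$ — hence $S$ computable would make $g$ computable, contradicting the choice of $g$; by \Cref{cmmf and computability}, $\T$ has no computable minimal model function.

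The main obstacle is step (5) reconciled with step (6): we need $\wit$ to be a genuinely computable function into quantifier-free formulas, yet the spectrum is non-computable, so $\wit(\phi)$ must be forced to land on cardinalities that we can \emph{certify} lie in $S$ without deciding membership in $S$ in general. The resolution is to route every $\wit(\phi)$ through the ``guaranteed'' elements of $S$ — the value $1$ and the first element $M_m+1$ of each block, which are in $S$ by construction regardless of $g$ — and to note that these guaranteed elements are cofinal in $\mathbb N$ and enumerable by a computable function of the block index; so $\wit(\phi)$ simply pads $\phi$ up to the least guaranteed element of $S$ that is at least $|\vars(\phi)|$. One must double-check that condition $(\textit{II})$ of finite witnessability still holds, i.e. that whenever $\wit(\phi)$ is $\T$-satisfiable there is a $\T$-model of $\wit(\phi)$ whose domain is exactly the variable interpretations — but since we padded to a cardinality that is both in $S$ and $\geq$ the number of variables, and $\T$ imposes no structure beyond cardinality constraints, such a model exists by assigning the fresh variables to fill out the domain. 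A secondary, purely bookkeeping obstacle is making the mediant/density argument of step (4) robust to the thinning; this is the ``tedious but true'' kind of estimate already flagged around \Cref{definition of f}, and amounts to sandwiching $|\specn{\T}|$ between two mediant partial sums that differ by $o(n)$.
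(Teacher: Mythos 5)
Your high-level plan — thin the spectrum of a mediant step function by a non-computable $g$, keep certain block boundaries unconditionally in the spectrum so a computable witness can pad to them, and read $g$ back off the spectrum to get non-computability — matches the paper's strategy for the computable case, and your steps (5) and (6) are sound. But step (4), the density verification, contains a genuine gap, and it is the crux of the construction.

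When $g(m{+}1)=0$ you delete $a_{m+1}-1$ of the $a_{m+1}$ spectrum elements that lie in a block of length $b_{m+1}$; the deleted fraction of that block is $\approx (a_{m+1}-1)/b_{m+1}\approx r$, a constant. Your claim that ``the thinned-out pieces form a set of density $0$ \ldots because the blocks grow and the removed fraction within each block is bounded'' is therefore false: removing a bounded-away-from-zero fraction from infinitely many blocks does not produce a density-zero perturbation, and insisting (as you do) that $a_n/b_n$ stay bounded away from $0$ makes this \emph{worse}, not better. If the zero-set of $g$ has positive density among indices, or if $b_n$ grows and a single $g(n)=0$ hits a large block, then $|\Spec_n(\T)|/n$ falls short of $r$ by a non-vanishing amount. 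For rational $r$ with constant $b_n$ one could rescue the argument by choosing $g$ whose zero-set has density $0$ among indices; but for irrational computable $r$ the denominators $b_n$ must tend to infinity, and then balancing the sparsity of $g$'s zeros against the block growth is delicate and is not addressed.

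The paper's proof avoids this entirely with a \emph{compensated} placement rather than a one-sided thinning: it doubles each block to length $2b_{m+1}$ and, inside it, places a $1$ at the block's first slot iff $g(m{+}2)=1$, places a $1$ at a second designated slot iff $g(m{+}2)=0$, and fills the remaining $2a_{m+1}-1$ ones unconditionally. Thus every block contains exactly $2a_{m+1}$ spectrum elements \emph{independently of $g$}, so the mediant computation from \Cref{cmmf density} goes through verbatim, while the \emph{positions} (not the count) of the ones encode $g$ and yield non-computability. Your construction needs an analogous $g$-invariant per-block count; without it the density claim is simply wrong.

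A smaller issue: in the non-computable-$r$ branch you allow both $\{a_n\}$ and $\{b_n\}$ to be non-computable, but then the block boundaries $\sum_{i\le m}b_i+1$ to which your witness pads are themselves non-computable, so $\wit$ is not a computable function. The paper instead takes $\{b_n\}$ computable (powers of $10$ from the decimal expansion) and lets only $\{a_n\}$ be non-computable; then $f$ itself is already non-computable, no thinning is needed in this branch, and the witness's padding targets stay computable.
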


The proof of \Cref{FW density general case} is divided in two cases: 
when $r$ is computable, and when it is not.
When it is not, we write it in decimal notation, 
% (say, $r=0.3803035$), 
take the obvious series 
of decimal fractions converging to it,
%($\fr{3}{10},\fr{38}{100},\fr{380}{1,000},\ldots$), 
define the 
function $f$ as in \Cref{definition of f} and take the theory whose spectrum is the image of $f$.

If $r$ is computable, 
which is the difficult case, we take computable sequences $\{a_{n}\}_{n\in\mathbb{N}}$ and $\{b_{n}\}_{n\in\mathbb{N}}$ such that $\fr{a_{n}}{b_{n}}$ converges to $r$, and a non-computable function $g:\No\rightarrow\{0,1\}$.
We then define an auxiliary function $\mf:\No\rightarrow\{0,1\}$ by making, for $M=2\sum_{i=0}^{m}b_{i}$:
$\mf(M+1)=g(m+2)$;
$\mf(n)=0$ for $M+2\leq n\leq 2(b_{m+1}-a_{m+1})+M$;
$\mf(2(b_{m+1}-a_{m+1})+M+1)=0$ if $g(m+2)=1$, and otherwise $\mf(2(b_{m+1}-a_{m+1})+M+1)=1$;
and $\mf(n)=1$ for $2(b_{m+1}-a_{m+1})+M+2\leq n\leq M+2b_{m+1}$.
We make $f(n)=\max\{m\leq n : \mf(m)=1\}$, and take the theory whose spectrum is the image of $f$, axiomatized by $\{\psi_{\geq f(n+1)}\vee\bigvee_{i=1}^{n}\psi_{=f(i)} : n\in\No\}$, which will have a density equal to whatever is the limit of $\fr{a_{n}}{b_{n}}$, i.e. $r$.
An example of this construction appears in \Cref{graphs} $(B)$, where the red dots represent $\mf$, and the black ones represent $f$:
for the corresponding theory we will have $\Spec_{20}(\T)=\{1,4,5,8,9,10,15,16,17,18,19,20\}$.

\subsubsection{Summary}
We can now, as this section about necessary conditions for the empty signature comes to an end, summarize its overall arch: 
we have seen what are all possible values for the density of a theory given some of its theory combination properties, for all such combinations of properties.
% it just so happens that considering these combinations of properties is a useful way of visualizing the results here obtained.

%Now, let's do some branch pruning:
We have considered, in total, 7 properties related to theory combination. 
Were we to consider all Boolean combinations of them, we would need to analyze 128 cases;
\cite[Theorems~5,6,7]{LPAR} has shown, however, that for $\Sigma_{1}$ there are only 8 of these possibilities, excluding gentleness.
It may look like we need to analyze 16 possibilities then, but we can cut them down to 9 by using
\Cref{gentleness implies,implies gentleness}.

In \Cref{table},
$\textbf{SI}$ stands for stably infinite;
$\textbf{SM}$ for smooth;
$\textbf{FW}$ for finitely witnessable;
$\textbf{SW}$ for strongly finitely witnessable;
$\textbf{FM}$ for the finite model property;
$\textbf{CF}$ for a computable minimal model function; 
and $\textbf{G}$ for gentle.
$\textbf{REC}$ denotes the set of real computable numbers.

Each line in the table corresponds to a possible combination of properties (that remains possible after \Cref{gentleness implies,implies gentleness}).
For example, the first line corresponds to theories that admit all properties, while the second line correspond to theories that are stably infinite, smooth, have a computable minimal model function, but do not admit any of the other property.

For each possible combination of properties, we list in the table the possible natural densities of theories that admit the corresponding properties.
For example, theories that admit all properties must have density one.\footnote{The theory found in this specific row, $\T_{\geq 1}$ from \Cref{line eight}, is strongly finitely witnessable, and \Cref{implies gentleness} then shows it is also gentle, as implied by the table.}

The column titled "Reference" leads to the result in this paper proving the values are indeed restricted to the mentioned ones;
and the column "Construction" refers to examples of theories having the possible natural densities shown.

\begin{table}[t]
\begin{tabular}{|P{0.6cm}|P{0.6cm}|P{0.6cm}|P{0.6cm}|P{0.6cm}|P{0.6cm}|P{0.6cm}||P{2.8cm}|P{2.0cm}|P{2.0cm}|}
\hline
$\textbf{SI}$ & $\textbf{SM}$ & $\textbf{FW}$ & $\textbf{SW}$ & $\textbf{FM}$ & $\textbf{CF}$ & $\textbf{G}$ & Natural densities & Reference & Construction\\\hline\hline
\multirow{7}{*}{$T$} & \multirow{2}{*}{$T$} & $T$ & $T$ & $T$ & $T$ & $T$ & $1$ & \Cref{smooth density} & \Cref{line eight}\\\hhline{~~--------}
& & $F$ & $F$ & $F$ & $T$ & $F$ & $0$ & \Cref{SI density} & \Cref{SM-FMP ex} \\\hhline{~---------}
& \multirow{5}{*}{$F$} & \multirow{3}{*}{$T$} & \multirow{3}{*}{$F$} & \multirow{2}{*}{$T$} & \multirow{2}{*}{$T$} & $T$ & $\{0,1\}$ & \Cref{gentle zero one} & \Cref{three ex} \\\hhline{~~~~~~----}
& & & & & & $F$ & $\textbf{REC}\cap[0,1]$ & \Cref{cmmf density} & \Cref{cmmf density} \\\hhline{~~~~------}
& & & & $T$ & $F$ & $F$ & $[0,1]$ & \Cref{FW density general case} & \Cref{FW density general case} \\\hhline{~~--------}
& & \multirow{2}{*}{$F$} & \multirow{2}{*}{$F$} & $T$ & $F$ & $F$ & $0$ & \Cref{SI density} & \Cref{first appearance Tbb} \\\hhline{~~~~------}
& & & & $F$ & $T$ & $F$ & $0$ & \Cref{SI density} & \Cref{SM-FMP ex} \\\hline
\multirow{2}{*}{$F$} & \multirow{2}{*}{$F$} & \multirow{2}{*}{$T$} & $T$ & $F$ & $T$ & $T$ & $0$ & \Cref{SI density} & \Cref{line eight} \\\hhline{~~~-------}
 & & & $F$ & $T$ & $T$ & $T$ & $0$ & \Cref{SI density} & \Cref{line eight} \\
\hline
\end{tabular}
\vspace{1em}
\caption{Classification of combinations \textit{vis-{\`a}-vis} their natural densities.}\label{table}
\vspace{-4mm}
\end{table}

\section{Non-empty signatures}\label{non empty}\label{non empty section}

In this section we provide generalizations of the results of \Cref{computability and density} to non-empty signatures.
We are able to do so by considering
$\mu(\T,\phi)$ for all formulas $\phi$, rather than $\mu(\T)$:
this is due to the fact that
in a non-empty signature we can have two quantifier-free satisfiable formulas with distinct densities.

\begin{example}\label{counterexample SFW non empty}
    Take the theory $\T$ on the signature with a unary function $s$, axiomatized by $\{\psi_{=2}\vee\Forall{x}s(x)=x\}$. 
In its models that do not have exactly two elements,
$s$ must be interpreted as the identity. 
    For $\phi_{1}=\neg(s(x)=x)$ and $\phi_{2}=(s(x)=x)$ we have $\mu(\T,\phi_{1})=0$ and $\mu(\T,\phi_{2})=1$.
\end{example}

We start by generalizing items $1$ and $2$ in \Cref{SI density}.
As for the third item of \Cref{SI density},
we show in \Cref{not FW density counterexample} below that it cannot be generalized similarly.

\begin{restatable}{theorem}{SIFMPnonempty}\label{SI FMP non empty}
%Let $\T$ be a theory. 
The positivity of $\mu(\T,\phi)$ for every $\T$-satisfiable quantifier-free formula $\phi$ is   sufficient for $\T$ to 
%\begin{enumerate}
    %\item 
    be stably infinite and
    %\item 
    have the finite model property.
%\end{enumerate}
\end{restatable}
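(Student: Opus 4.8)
The plan is to adapt the empty-signature arguments for items 1 and 2 of \Cref{SI density} to the relativized densities $\mu(\T,\phi)$, handling each $\T$-satisfiable quantifier-free formula $\phi$ separately. The key observation is that the hypothesis $\mu(\T,\phi) > 0$ for \emph{every} $\T$-satisfiable quantifier-free $\phi$ is exactly what lets us run the empty-signature proofs ``inside'' each $\phi$: positivity of the density of $\spec{\T,\phi}$ forces this set to be infinite.

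First I would prove the finite model property. Fix a $\T$-satisfiable quantifier-free formula $\phi$. Since $\mu(\T,\phi)$ is positive (in particular well-defined and nonzero), the set $\spec{\T,\phi}$ cannot be finite --- a finite set of natural numbers has natural density $0$. Hence $\spec{\T,\phi}$ is nonempty, meaning there is a $\T$-interpretation satisfying $\phi$ with finite domain; this is precisely the finite model property. Note we do not even need infinitude here, just non-finiteness, but infinitude comes for free and will be reused.

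Next I would prove stable infiniteness. Again fix a $\T$-satisfiable quantifier-free $\phi$; we must produce a $\T$-interpretation satisfying $\phi$ with domain of cardinality at least $\aleph_0$. As just noted, $\mu(\T,\phi) > 0$ implies $\spec{\T,\phi}$ is infinite, so there are $\T$-interpretations satisfying $\phi$ of arbitrarily large finite cardinality. The standard move is a compactness/L\"owenheim--Skolem argument: take the theory $\ax{\T} \cup \{\phi'\} \cup \{\psi_{\geq n} : n \in \No\}$, where $\phi'$ is the existential closure of $\phi$ over its (finitely many) variables; every finite subset of this set is satisfiable by a sufficiently large finite model of $\T$ satisfying $\phi$, so by compactness the whole set has a model, which is an infinite $\T$-interpretation satisfying $\phi$. (One takes a little care that $\ax{\T}$ need not be computably enumerable, but compactness does not require that.) This yields stable infiniteness.

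The main obstacle I anticipate is not in either argument individually --- both are short --- but in making sure the quantifier-free formula is handled correctly when passing to interpretations: an interpretation of $\T$ ``satisfies'' a quantifier-free $\phi$ via some assignment of its free variables, and when we enlarge the domain via compactness we must keep witnesses for those variables while forcing the domain to be infinite. This is routine but is the one place where the empty-signature proof needs a genuine (if minor) adaptation, since in $\Sigma_1$ the single tautology $x=x$ suffices and here we must argue uniformly over all $\phi$. I would also remark, mirroring \Cref{example SI density} and \Cref{first appearance Tbb}, that \Cref{counterexample SFW non empty} shows the per-formula hypothesis is genuinely necessary: a theory can be stably infinite and have the finite model property while some satisfiable formula has density $0$, so one really does need positivity for \emph{all} formulas, and conversely positivity is not necessary --- exactly as in the empty case.
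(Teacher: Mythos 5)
Your proof is correct and is essentially the paper's proof stated in the forward direction rather than contrapositively: positive density forces $\spec{\T,\phi}$ to be infinite, which immediately gives a finite model (FMP) and, via compactness, an infinite model (stable infiniteness); the existential-closure step is harmless but unnecessary under the paper's conventions, since interpretations carry variable assignments and \Cref{compactness} is stated for formulas rather than sentences. One small caveat on your closing remark: the theory in \Cref{counterexample SFW non empty} is \emph{not} stably infinite, so it illustrates that positivity must hold for all $\phi$ (not merely some), rather than giving a stably-infinite theory with a zero-density formula.
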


In the next example we show how to use \Cref{SI FMP non empty}.

\begin{example}
    Consider a signature $\Sigma$ with only function symbols, and the $\Sigma$-theory $\T$ of uninterpreted functions.
    For every quantifier-free formula $\phi$ and  $\T$-interpretation $\A$ that satisfies it, we can add an element $a$ to its domain, 
    %demand that for any function symbol $f$ one has $f^{\A}$ defined arbitrarily on tuples that include $a$, and $\phi$ is still satisfied:
    from that it follows that $\mu(\T,\phi)=1$.
    Using \Cref{SI FMP non empty}, we conclude that the theory of uninterpreted functions is both stably infinite, and has the finite model property.
\end{example}

Next, we generalize the result concerning gentleness to
non empty signatures.
The proof of the following result is, \textit{mutatis mutandis}, the same as \Cref{gentle zero one}.

\begin{theorem}\label{gentle non empty}
    $\mu(\T,\phi)$ being well-defined and equal to $0$ or $1$ for all quantifier-free $\T$-satisfiable formulas $\phi$, is a necessary condition for $\T$ to be gentle.
\end{theorem}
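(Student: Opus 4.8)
The plan is to mimic the proof of \Cref{gentle zero one}, extracting from gentleness applied to a tautology the required dichotomy for the spectrum. First I would recall that $\T$ being gentle means that for every quantifier-free formula $\phi$, the set $\spec{\T,\phi}$ is fully computable; in particular, it is either co-finite or a finite set of finite cardinalities. I would then fix a $\T$-satisfiable quantifier-free formula $\phi$ and apply this dichotomy to it. If $\spec{\T,\phi}$ is finite, then $\specn{\T,\phi} = \spec{\T,\phi}$ for all sufficiently large $n$, so $|\specn{\T,\phi}|$ is eventually constant and $\mu(\T,\phi) = \LIM |\specn{\T,\phi}|/n = 0$, which is well-defined. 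If instead $\spec{\T,\phi}$ is co-finite, say $\mathbb{N}\setminus\spec{\T,\phi}$ has $k$ elements, then for $n$ large enough $|\specn{\T,\phi}| \geq n - k$, so $(n-k)/n \leq |\specn{\T,\phi}|/n \leq 1$, and by the squeeze theorem $\mu(\T,\phi) = 1$, again well-defined.

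The only subtlety, compared to the empty-signature case, is that there is no longer a single tautology whose spectrum equals $\spec{\T}$; rather, different $\T$-satisfiable formulas can have genuinely different spectra (as \Cref{counterexample SFW non empty} illustrates). But this is exactly why the statement quantifies over all quantifier-free $\T$-satisfiable $\phi$: the argument above shows that each such $\phi$ individually yields $\mu(\T,\phi)\in\{0,1\}$ well-defined, and that is all that is claimed. So the structure is: gentleness gives the co-finite/finite dichotomy on $\spec{\T,\phi}$; each branch of the dichotomy pins down the density to $1$ or $0$ respectively via an elementary limit computation.

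I do not anticipate a real obstacle here — the result is, as the paper notes, the same \textit{mutatis mutandis} as \Cref{gentle zero one}, the only change being that one carries the formula $\phi$ along throughout rather than collapsing to the theory-wide spectrum. The one point worth stating carefully is that gentleness, as defined via full computability in \Cref{theory combination}, indeed delivers condition $(ii)$ — that $\spec{\T,\phi}$ is co-finite or finite — which is the sole ingredient used; the computability aspects of gentleness are not needed for this particular necessary condition.

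\begin{proof}
Let $\T$ be gentle and let $\phi$ be a quantifier-free $\T$-satisfiable formula. By the definition of gentleness, $\spec{\T,\phi}$ is fully computable and hence, by condition $(ii)$, either co-finite or a finite set of finite cardinalities.

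If $\spec{\T,\phi}$ is finite, then for all sufficiently large $n$ we have $\specn{\T,\phi}=\spec{\T,\phi}$, so $|\specn{\T,\phi}|$ is eventually a constant, and therefore $\mu(\T,\phi)=\LIM\fr{|\specn{\T,\phi}|}{n}=0$, which in particular is well-defined.

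If $\spec{\T,\phi}$ is co-finite, let $k=|\mathbb{N}\setminus\spec{\T,\phi}|$. Then for every $n$ we have $n-k\leq |\specn{\T,\phi}|\leq n$, so $\fr{n-k}{n}\leq\fr{|\specn{\T,\phi}|}{n}\leq 1$, and letting $n\rightarrow\infty$ yields $\mu(\T,\phi)=1$, which is well-defined.

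In either case $\mu(\T,\phi)$ is well-defined and equal to $0$ or $1$, as claimed.
\end{proof}
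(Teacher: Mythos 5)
Your proof is correct and follows exactly the route the paper itself indicates — the paper explicitly states \Cref{gentle non empty} is proved \emph{mutatis mutandis} as \Cref{gentle zero one}, and your argument is precisely that adaptation: apply the finite/co-finite dichotomy from gentleness to each $\T$-satisfiable $\phi$ and compute the limit in each branch. The careful squeeze-theorem handling of well-definedness is a minor polish over the paper's terser presentation, but the substance is identical.
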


The following theorem generalizes \Cref{smooth density},
and provides a necessary condition for smoothness and the finite model property for non-empty signatures.
% , and, somewhat surprisingly, the second does not follow from \Cref{gentle non empty}.

\begin{restatable}{theorem}{smoothnonempty}\label{smooth non empty}
Let $\T$ be a theory.
$\mu(\T,\phi)$ being well-defined and equal to $1$ for all $\T$-satisfiable quantifier-free formulas $\phi$ is then a necessary condition for $\T$ to simultaneously be smooth and have the finite model property.
\end{restatable}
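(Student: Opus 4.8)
The plan is to lift the proof of \Cref{smooth density} from the empty signature to the non-empty one, replacing the single density $\mu(\T)$ with the family $\mu(\T,\phi)$ ranging over $\T$-satisfiable quantifier-free formulas $\phi$. Assume $\T$ is both smooth and has the finite model property, and fix an arbitrary $\T$-satisfiable quantifier-free $\phi$; the goal is to show $\mu(\T,\phi)$ is well-defined and equals $1$.

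First I would use the finite model property to obtain a $\T$-interpretation $\A$ satisfying $\phi$ with $|\dom{\A}| = N < \aleph_0$ for some $N \in \No$. Next I would invoke smoothness: for every cardinal $\kappa > N$ there is a $\T$-interpretation satisfying $\phi$ of size $\kappa$; restricting to finite $\kappa$, this gives $\T$-interpretations satisfying $\phi$ of every size $n$ with $n > N$. Hence $[N+1,\infty) \cap \mathbb{N} \subseteq \spec{\T,\phi}$, so $\mathbb{N} \setminus \spec{\T,\phi} \subseteq [1,N]$ is finite, i.e. $\spec{\T,\phi}$ is co-finite. Then for all $n \ge N$ we have $|\specn{\T,\phi}| \ge n - N$, so $|\specn{\T,\phi}|/n \ge (n-N)/n \to 1$, while trivially $|\specn{\T,\phi}|/n \le 1$; by the squeeze, $\LIM |\specn{\T,\phi}|/n$ exists and equals $1$, which is exactly $\mu(\T,\phi) = 1$. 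Since $\phi$ was an arbitrary $\T$-satisfiable quantifier-free formula, the condition holds for all of them.

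There is essentially no new obstacle here compared to the empty-signature case: the only thing to be careful about is that in the non-empty setting the finite model obtained from the FMP and the models of larger sizes obtained from smoothness must all satisfy the \emph{same} formula $\phi$, which is guaranteed because both properties are stated relative to a fixed $\phi$. In other words, the argument never needs to compare spectra for different formulas, so the pathology illustrated in \Cref{counterexample SFW non empty} does not interfere. The mild subtlety worth a sentence in the proof is that smoothness is quantified over all cardinals $\kappa > |\dom{\A}|$, and we simply specialize to finite $\kappa$; I would state this explicitly so the reduction to co-finiteness of $\spec{\T,\phi}$ is transparent. Everything else is the same squeeze computation as in \Cref{smooth density}.
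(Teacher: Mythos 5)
Your proposal matches the paper's proof essentially verbatim: use the FMP to obtain a finite model of $\phi$, invoke smoothness to populate $\spec{\T,\phi}$ with every larger finite cardinality, and conclude co-finiteness forces $\mu(\T,\phi)=1$ by the same squeeze as in \Cref{smooth density}. No gaps and no meaningful divergence from the paper's argument.
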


\Cref{smooth non empty} can be used to show that the third item of \Cref{SI density}
is not generalizable to non-empty signatures. 

\newcommand{\distinct}[1]{\delta_{{#1}}}
\newcommand\smeq{\scaleobj{0.85}{=}}
\begin{example}\label{not FW density counterexample}
    Consider the function $\bb^{-1}:\mathbb{N}\rightarrow\mathbb{N}$ from \cite{FroCoS}, which is a left inverse of $\bb$, 
    and the theory $\Tbbs$ on the signature with only a single unary function $s$, from the same paper.
    Both $\bb^{-1}$ and $\Tbbs$ are given in \Cref{fig-card-s}.
It is  
 smooth, has the finite model property, but is not finitely witnessable (see \cite[Lemmas~71,72,73]{arxivFroCoS}), meaning $\mu(\Tbbs,\phi)=1$ for all quantifier-free $\Tbbs$-satisfiable formulas $\phi$ by \Cref{smooth non empty}. 
Thus, the obvious generalization of item $3$ of \Cref{not FW density} is not valid.
\end{example}

\Cref{smooth non empty} is also useful to show,
for example, that
a variant of the SMT-LIB theory of bit-vectors
is not smooth.

\begin{example}
Fix $n\in\No$, and
    consider the one-sorted fragment of the SMT-LIB theory $\bv$ of bit-vectors~\cite{BarFT-RR-17} of length $n$, with the usual operations (but without concatenation and extraction). 
    The domain of its interpretations  has cardinality $2^{n}$, and so it  has the finite model property.
    By \Cref{smooth non empty} this theory is not smooth, as for any quantifier-free formula $\phi$ one has $\mu(\bv,\phi)=0$.
\end{example}

\begin{figure}[t]
\begin{mdframed}
%\vspace{-4mm}
\[\psi^{\smeq}_{\geq n}=\Exists{x_{1}}\cdots\Exists{x_{n}}\big[\bigwedge_{i=1}^{n} \NNEQ{x}\wedge[s(x_{i})=x_{i}]\big]\]
\[\psi^{\smeq}_{=n}=\Exists{x_{1}}\cdots\Exists{x_{n}}\big[\bigwedge_{i=1}^{n}\NNEQ{x}\wedge[s(x_{i})=x_{i}]\big]\wedge\Forall{x}[[s(x)=x]\rightarrow\bigvee_{i=1}^{n}x=x_{i}]\big]\]
\[\bb^{-1}(k)=\min\{l : \bb(l+1)>\bb(k)\}\]
\[\ax{\Tbbs}=\{(\psi_{\geq k+1}\wedge\psi^{\smeq}_{\geq \bb^{-1}(k+1)})\vee\bigvee_{i=1}^{k+1}(\psi_{=i}\wedge\psi^{\smeq}_{=\bb^{-1}(i)}) : k\in\mathbb{N}\}\]
\end{mdframed}
\vspace{-2mm}
\caption{The theory $\Tbbs$.}
\vspace{-4mm}
\label{fig-card-s}
\end{figure}

Next, we generalize \Cref{SFW density} to non-empty signatures.

\begin{restatable}{theorem}{SFWnonempty}\label{SFW non empty}
    % Let $\T$ be a theory.
    $\mu(\T,\phi)$ being well-defined and equal to $0$ or $1$ for every quantifier-free $\T$-satisfiable $\phi$  is necessary  for $\T$ to be strongly finitely witnessable.
\end{restatable}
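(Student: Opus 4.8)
The plan is to reduce \Cref{SFW non empty} to the already-proven \Cref{SFW density} (equivalently \Cref{gentle zero one}) by a relativization argument, exactly as the corresponding empty-signature corollary was obtained. First I would recall the non-empty analogue of the chain of implications used in \Cref{SFW density}: I want to show that a strongly finitely witnessable $\T$ is ``gentle enough'' that the argument of \Cref{gentle non empty} applies. Concretely, fix a quantifier-free $\T$-satisfiable formula $\phi$. Applying the strong witness $\wit$ to $\phi$, and then, for a sufficiently large set of fresh variables $V$ together with the arrangement that declares all of them pairwise distinct (and distinct from the variables of $\phi$), condition $(\textit{II}^{*})$ gives, whenever $\wit(\phi)\wedge\delta_V$ is $\T$-satisfiable, a $\T$-model whose domain is exactly the interpretations of the finitely many variables occurring. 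This pins the size of such a model to a value in a controlled finite range, and lets one show that $\spec{\T,\phi}$, once it contains some cardinality $k_0$, contains every cardinality up to the number of variables we are allowed to add — and by iterating the witness-plus-fresh-variables construction we get that $\spec{\T,\phi}$ is either finite or co-finite. That is precisely the dichotomy that forces $\mu(\T,\phi)\in\{0,1\}$: in the finite case the density is $0$, in the co-finite case it is $1$, and in either case it is well-defined.

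The key steps, in order, are: (1) state and use the non-empty version of the fact (already available via \cite[Theorem~7]{CADE} in spirit, and the reasoning behind \Cref{implies gentleness}) that for each $\T$-satisfiable quantifier-free $\phi$, strong finite witnessability of $\T$ implies $\spec{\T,\phi}$ is finite or co-finite; (2) observe that a finite set has natural density $0$ and a co-finite subset of $\mathbb{N}$ has natural density $1$, so in both cases $\mu(\T,\phi)$ is well-defined and lies in $\{0,1\}$; (3) conclude by quantifying over all $\T$-satisfiable quantifier-free $\phi$. For step (1) I would argue directly with the strong witness: given $\phi$, let $n=|\vars(\wit(\phi))|$; for each $j$ with $1\le j\le n$ consider the arrangement $\delta_V^{E_j}$ on $V=\vars(\wit(\phi))$ having exactly $j$ equivalence classes; if $\wit(\phi)\wedge\delta_V^{E_j}$ is $\T$-satisfiable, $(\textit{II}^{*})$ yields a $\T$-model of $\phi$ of size exactly $j$. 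Adding arbitrarily many fresh variables (each equal to itself, contributing nothing new to satisfiability by condition $(\textit{I})$) and then choosing arrangements on the enlarged variable set shows that if $\T$ has a model of $\phi$ of some finite size $\ge$ some threshold, it has models of $\phi$ of all sufficiently large finite sizes; combined with the bounded small-size analysis this gives the finite/co-finite dichotomy.

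Alternatively — and this is cleaner — I would simply cite \Cref{gentle non empty} together with the non-empty generalization of \Cref{implies gentleness}: if $\T$ is strongly finitely witnessable then it is gentle (relative to every $\phi$), hence by \Cref{gentle non empty} $\mu(\T,\phi)$ is well-defined and equal to $0$ or $1$ for all $\T$-satisfiable quantifier-free $\phi$. The proof then collapses to ``\emph{By the non-empty analogue of \Cref{implies gentleness} and \Cref{gentle non empty}}'', mirroring the one-line proof of \Cref{SFW density}. I would present it this way, possibly with a footnote or a sentence indicating that the relevant non-empty version of \Cref{implies gentleness} follows from \cite[Theorem~7]{CADE} (stably infinite + strongly finitely witnessable $\Rightarrow$ smooth) plus \Cref{gentleness implies} in the stably-infinite case, and is immediate in the non-stably-infinite case.

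The main obstacle I anticipate is making precise the non-empty version of ``strongly finitely witnessable $\Rightarrow$ gentle (relative to $\phi$)''. In the empty signature, \Cref{implies gentleness} is stated and available; for non-empty signatures one must check that the witness-based cardinality-control argument still delivers the finite/co-finite spectrum $\spec{\T,\phi}$, which requires care because adding fresh variables interacts with the function and predicate symbols of $\Sigma$ — one has to ensure that a model realizing a chosen arrangement on $V\cup\{\text{fresh}\}$ and whose domain equals the set of variable-interpretations actually exists, i.e. that $\wit(\phi)\wedge\delta_{V'}$ remains $\T$-satisfiable as we grow $V'$. This is exactly the kind of bookkeeping done in \cite[Theorem~7]{CADE} and in the analogous arguments cited for \Cref{smooth non empty}, so I expect it to go through, but it is the step that needs the most attention rather than a routine invocation.
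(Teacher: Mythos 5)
Your high-level strategy is right — establish the dichotomy ``$\spec{\T,\phi}$ is finite or co-finite'' and read off the density — and this is indeed what the paper does, via its Lemma~\ref{CADE like result}. But there are two genuine gaps in how you propose to get there.

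First, the ``cleaner alternative'' you favor does not go through. You propose to use a non-empty analogue of \Cref{implies gentleness} to conclude that a strongly finitely witnessable theory is gentle relative to every $\phi$. But no such analogue is available: gentleness requires, among other things, that $\spec{\T,\phi}$ be computable and that one can algorithmically decide the finite/co-finite case, and the paper's proof of \Cref{implies gentleness} leans hard on $\Sigma_1$-specific facts — notably Lemma~\ref{sigma one decidable}, that every $\Sigma_1$-theory is decidable — which simply fail for non-empty signatures. The paper deliberately avoids gentleness here and proves only the structural dichotomy (Lemma~\ref{CADE like result}), which is all that is needed to compute densities.

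Second, and more importantly, your direct argument for the dichotomy is circular at the crucial step. You claim that ``if $\T$ has a model of $\phi$ of some finite size $\geq$ some threshold, it has models of $\phi$ of all sufficiently large finite sizes,'' by adding fresh variables to the arrangement. But to invoke $(\textit{II}^{*})$ on $\wit(\phi)\wedge\delta_{V'}$, where $\delta_{V'}$ declares $m$ additional fresh variables pairwise distinct, you must first know that $\wit(\phi)\wedge\delta_{V'}$ is $\T$-satisfiable — and that itself requires a model of $\phi$ with at least $m$ more elements, which is what you are trying to produce. (Concretely, $\T_{\leq 3}$ is strongly finitely witnessable, has a model of size $3$, yet no model of size $4$.) The paper's Lemma~\ref{CADE like result} resolves this by conditioning on $\phi$ having an \emph{infinite} $\T$-model: such a model (after L\"owenheim--Skolem) realizes every arrangement adding finitely many fresh distinct variables, so each $\wit(\phi)\wedge\delta_{V'}$ is $\T$-satisfiable, and $(\textit{II}^{*})$ then produces a finite model of exactly the desired size. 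The complementary case — $\phi$ has no infinite $\T$-model — is handled via compactness, which forces $\spec{\T,\phi}$ to be bounded, hence finite, hence of density $0$. You flag ``ensure that $\wit(\phi)\wedge\delta_{V'}$ remains $\T$-satisfiable'' as the step needing attention but do not supply the key idea (start from an infinite model), and you do not mention compactness for the other branch. Those are the two missing ingredients.
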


\Cref{counterexample SFW non empty} shows tightness of \Cref{gentle non empty,SFW non empty}: we can have a strongly finitely witnessable,\footnote{
    The strong witness is $\wit(\phi)=\phi\wedge\neg(x=y)$, for fresh variables $x$ and $y$.} gentle\footnote{From the fact it is strongly finitely witnessable and \Cref{implies gentleness}.} theory $\T$ with two quantifier-free $\T$-satisfiable formulas that have densities $0$ and $1$.
It also shows that the positivity in \Cref{SI FMP non empty} cannot hold for only some quantifier-free  $\T$-satisfiable formulas $\phi$, as the theory shown is not stably infinite.
%
% For $\Sigma_{1}$ we were able to avoid considering strongly finitely witnessable theories that are not gentle thanks to \Cref{implies gentleness}.
% This is no longer true for non-empty signatures, and we leave this case for future work.

The following two theorems generalize, respectively, \Cref{cmmf density,FW density general case}.
For \Cref{cmmf density}, we need an alternative, non-empty version of \Cref{cmmf and computability}.
Iindeed, it is not clear that if the sets $\spec{\T,\phi}$ are all computable, $\T$ should have a computable minimal model function; the reciprocal, however, is true.

\begin{restatable}{proposition}{cmmfcomputabilitygen}\label{cmmf and computability gen}
    If $\T$ is a theory with a computable minimal model function, then $\spec{\T,\phi}$ is computable for all quantifier-free $\T$-satisfiable formulas $\phi$.
\end{restatable}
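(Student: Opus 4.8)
The plan is to prove the contrapositive direction that matters: given a computable minimal model function $\minmod_\T$ for $\T$ and a $\T$-satisfiable quantifier-free formula $\phi$, I want to decide, for an arbitrary $k \in \mathbb{N}$, whether $k \in \spec{\T,\phi}$. The key observation is that $\minmod_\T$ gives us the smallest model size, but to compute the whole spectrum we need to peel off sizes one at a time. To do this I would iterate the minimal model function on successively strengthened formulas: having discovered that sizes $n_1 < n_2 < \cdots < n_j$ all lie in $\spec{\T,\phi}$, I form the formula $\phi_j = \phi \wedge \bigwedge_{i=1}^{j} \neg \psi_{=n_i}$ (equivalently, $\phi \wedge \bigwedge_i \psi_{\neq n_i}$ using the cardinality formulas from \Cref{card-formulas}), whose $\T$-models are exactly the $\T$-models of $\phi$ whose size is not among $n_1,\dots,n_j$. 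If $\phi_j$ is not $\T$-satisfiable, then $\spec{\T,\phi} = \{n_1,\dots,n_j\}$ and we are done; otherwise $\minmod_\T(\phi_j) = n_{j+1}$ is the next element of the spectrum.

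The subtlety is that $\phi_j$ may fail to be $\T$-satisfiable only because its models are all infinite — but that situation is already detected by the minimal model function, since $\minmod_\T(\phi_j)$ takes values in $\N = \mathbb{N} \cup \{\aleph_0\}$, and $\minmod_\T(\phi_j) = \aleph_0$ exactly when $\phi_j$ is $\T$-satisfiable but has no finite model, which again means $\spec{\T,\phi} = \{n_1,\dots,n_j\}$ is exhausted. So the procedure is: repeatedly compute $\minmod_\T$ on the strengthened formula; each call either returns a new finite cardinality (which we append to our list) or returns $\aleph_0$ / reveals unsatisfiability (at which point the spectrum is finite and fully enumerated). This enumerates $\spec{\T,\phi}$ in strictly increasing order. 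To decide $k \in \spec{\T,\phi}$, run this enumeration until either $k$ appears (answer yes), or an element exceeding $k$ appears (answer no, since the enumeration is increasing), or the enumeration halts having never produced $k$ (answer no). One of these always occurs after finitely many steps, so $\spec{\T,\phi}$ is computable.

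The main obstacle — really the only point requiring care — is making sure the minimal model function is actually being applied to legitimate inputs and that its contract covers the "no finite model" case cleanly: the definition of $\minmod_\T$ in the preliminaries only constrains its behavior on $\T$-satisfiable quantifier-free formulas, returning a value in $\N$, so I need to note that $\phi_j$ is quantifier-free (a finite conjunction of $\phi$ with cardinality formulas, which are first-order but the relevant ones, $\psi_{\neq n}$, can be taken quantifier-free over $\Sigma_1$ only after Skolemization — here one should instead observe that satisfiability of $\phi_j$ is still decidable-relative-to $\minmod_\T$, or more carefully restate using the witness/arrangement machinery). Actually the cleanest route avoids this entirely: since over any signature "$\phi$ has a $\T$-model of size exactly $m$" is equivalent to "$\phi \wedge \psi_{=m}$ is $\T$-satisfiable", and we only ever need to test finitely many fixed sizes below a given bound, I would phrase the argument so that each step only asks $\minmod_\T$ whether there is a model avoiding a finite set of sizes, which can be encoded by taking the minimal model of $\phi$ and checking membership, or by a direct induction. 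I expect the write-up to be short once the formula $\phi_j$ and the role of the $\aleph_0$ value are pinned down; the number-theoretic content is nil, and the whole proof is a straightforward dovetailing argument built on \Cref{cmmf and computability}'s one-sorted analogue, now without the converse direction.
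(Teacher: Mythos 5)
Your high-level strategy — iterate $\minmod_\T$ on successively strengthened inputs to enumerate $\spec{\T,\phi}$ in increasing order, and decide membership of $k$ by running until something $\geq k$ appears or the enumeration halts — is exactly the paper's idea. But your concrete realization has a gap that you yourself flag and then don't actually close.

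The gap: you strengthen $\phi$ to $\phi_j = \phi \wedge \bigwedge_{i=1}^{j}\neg\psi_{=n_i}$, but $\psi_{=n}$ (and hence $\neg\psi_{=n}$) is \emph{not} quantifier-free — it uses both $\exists$ and $\forall$, as is explicit in \Cref{card-formulas} — so $\phi_j \notin \qf{\Sigma}$ and the minimal model function simply cannot be applied to it. Your suggested escapes don't work: Skolemizing $\psi_{\neq n}$ would add new function symbols, changing the signature (and hence the theory and the domain of $\minmod_\T$); and the fallback remark about ``taking the minimal model of $\phi$ and checking membership'' is circular, since deciding membership is exactly what we're trying to achieve. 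The fix, which is what the paper does, is to replace ``not of size $n_i$'' by the quantifier-free constraint that forces ``at least $k$ distinct elements'': set $f(0)=\minmod_\T(\phi)$ and $f(m+1)=\minmod_\T\bigl(\phi\wedge\NNNEQ{x}{f(m)+1}\bigr)$ with fresh variables $x_1,\dots,x_{f(m)+1}$. Since the enumeration is monotone, excluding exactly the sizes found so far is equivalent to jumping past the last one, so the ``at least'' constraint suffices — and $\NNNEQ{x}{k}$ is genuinely quantifier-free.

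A second, smaller issue is your termination/unsatisfiability analysis. The definition of $\minmod_\T$ only constrains its output on $\T$-\emph{satisfiable} inputs; on an unsatisfiable $\phi_j$ it could return anything, so ``detecting unsatisfiability'' by looking at the returned value is not licensed by the definition alone. The paper sidesteps this entirely with a case split: if $\spec{\T,\phi}$ is finite, the set is trivially computable (hardcode it — the statement only asks for computability of each fixed set, not a uniform procedure in $\phi$); if it is infinite, every strengthened formula $\phi\wedge\NNNEQ{x}{k}$ is $\T$-satisfiable, so $\minmod_\T$ is always applied on a legitimate input and the iteration never needs to detect failure. Your write-up should adopt both the $\NNNEQ{x}{k}$ constraint and this finite/infinite dichotomy; once those are in place, the rest of your dovetailing argument is correct and matches the paper.
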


\begin{restatable}{theorem}{CMMFnonempty}\label{CMMF non empty}
    If $\T$ is a theory with well-defined densities $\mu(\T,\phi)$, for all quan\-ti\-fier-free $\T$-satisfiable formulas $\phi$, the fact that all $\mu(\T,\phi)$ are computable is a necessary condition for $\T$ to have a computable minimal model function.
    Furthermore, for every computable number $0\leq r\leq 1$, there is a theory $\T$ that has a computable minimal model function and a quantifier-free formula $\phi$ with $\mu(\T,\phi)=r$.
\end{restatable}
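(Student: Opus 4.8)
The plan is to mirror the proof of \Cref{cmmf density}, replacing its empty-signature ingredients with their non-empty counterparts; the statement splits into a necessity part and an existence part, and neither needs machinery beyond what is already available.

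For the necessity part, I would start from the assumption that $\T$ has a computable minimal model function. By \Cref{cmmf and computability gen}, the set $\spec{\T,\phi}$ is then computable for every quantifier-free $\T$-satisfiable formula $\phi$. Fix such a $\phi$ and recall that $\mu(\T,\phi)$ is assumed well-defined, say equal to $r$. By definition $r=\LIM\fr{|\specn{\T,\phi}|}{|[1,n]|}$, so setting $a_{n}=|\spec{\T,\phi}\cap[1,n]|$ and $b_{n}=n$ produces two sequences — the first computable because membership in $\spec{\T,\phi}$ is decidable and one can count, the second trivially computable — with $a_{n}\in\mathbb{N}$, $b_{n}\in\No$, and $\LIM\fr{a_{n}}{b_{n}}=r$. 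By \Cref{definition computable number} this shows $r$ is computable, which is exactly the claim. (The argument does not even need $\phi$ to be satisfiable: if it is not, $\spec{\T,\phi}=\emptyset$ and $\mu(\T,\phi)=0$ is computable anyway.)

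For the existence part, I would reuse the theory produced by the second part of \Cref{cmmf density}: given a computable number $0\leq r\leq 1$, that result yields a $\Sigma_{1}$-theory $\T$ with $\mu(\T)=r$ that has a computable minimal model function. A $\Sigma_{1}$-theory is in particular a theory, so it only remains to exhibit a quantifier-free formula $\phi$ with $\mu(\T,\phi)=r$; taking $\phi$ to be the tautology $x=x$ works, since every $\T$-interpretation satisfies it and hence $\spec{\T,\phi}=\spec{\T}$, giving $\mu(\T,\phi)=\mu(\T)=r$.

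I do not expect a genuine obstacle here, since the heavy lifting sits in \Cref{cmmf and computability gen} and in the construction underlying \Cref{cmmf density}; the only delicate point is conceptual rather than technical. Unlike \Cref{cmmf and computability} in the empty case, \Cref{cmmf and computability gen} is only a one-way implication — computability of all the $\spec{\T,\phi}$ need not recover a computable minimal model function — so I cannot hope for a converse-style characterization in the non-empty setting, and the necessity direction above is the best that can be extracted. Everything else is routine bookkeeping.
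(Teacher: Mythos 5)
Your proposal is correct and follows essentially the same route as the paper: for necessity you invoke \Cref{cmmf and computability gen} to get computability of each $\spec{\T,\phi}$ and then show the density is a computable number; for existence you reuse the $\Sigma_{1}$-theory from \Cref{cmmf density} and observe that a tautology has the same spectrum as the theory. The only cosmetic difference is that you reprove inline what the paper packages as \Cref{comp. set comp. number} (a computable set with well-defined density has a computable density), and you should index your $b_{n}$ from $1$ (or use $b_{n}=n+1$ as the lemma does) to keep it in $\No$.
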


\begin{restatable}{theorem}{FWnonempty}\label{FW non empty}
    If $\T$ is a theory, and $\phi$ 
    a quantifier-free $\T$-satisfiable formula, nothing can be said about $\mu(\T,\phi)$ if $\T$ is only known to be finitely witnessable; that is, for every computable number $0\leq r\leq 1$, there exists a theory $\T$, that is finitely witnessable, and a quantifier-free formula $\phi$ with $\mu(\T,\phi)=r$.
\end{restatable}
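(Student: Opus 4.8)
The plan is to reduce the non-empty case to the empty-signature constructions already available from \Cref{cmmf density} and \Cref{FW density general case}, by ``lifting'' a $\Sigma_1$-theory to a theory over a non-empty signature in a way that preserves finite witnessability and controls the density of a single chosen formula. Concretely, fix a computable $r\in[0,1]$. By \Cref{FW density general case} there is a $\Sigma_1$-theory $\T_0$ that is finitely witnessable, has $\mu(\T_0)=r$, and has no computable minimal model function; let $f$ be the step function built in \Cref{definition of f} (or its variant in the computable case), so that $\spec{\T_0}$ is the image of $f$. I would take the signature $\Sigma$ obtained from $\Sigma_1$ by adding a single fresh unary function symbol $s$, and define $\T$ over $\Sigma$ by the same cardinality axioms as $\T_0$ together with $\Forall{x} s(x)=x$, i.e.\ forcing $s$ to be the identity in every model. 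Then the $\Sigma$-reducts of $\T$-models are exactly the $\T_0$-models, so $\spec{\T}=\spec{\T_0}$, and for the formula $\phi = (s(x)=x)$ — which is $\T$-valid and hence $\T$-satisfiable — we get $\spec{\T,\phi}=\spec{\T}$, whence $\mu(\T,\phi)=\mu(\T_0)=r$. This handles the density requirement.

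Next I would check the two remaining properties. For finite witnessability: since $s$ is the identity, any quantifier-free $\Sigma$-formula is $\T$-equivalent to the $\Sigma_1$-formula obtained by rewriting every occurrence of $s(t)$ to $t$ (iterating until $s$ is eliminated); composing this normalization with the witness of $\T_0$ yields a computable witness for $\T$, and condition $(\textit{II})$ transfers because the $\Sigma_1$-model returned by the $\T_0$-witness extends uniquely to a $\T$-model by interpreting $s$ as the identity, without changing the domain. For the failure of a computable minimal model function, I would invoke \Cref{cmmf and computability gen} in contrapositive form: if $\T$ had a computable minimal model function then $\spec{\T,\phi}$ would be computable for the tautology-like $\phi$ above, i.e.\ $\spec{\T_0}=\operatorname{im}(f)$ would be computable; but $\T_0$ was chosen (in \Cref{FW density general case}) precisely so that its spectrum is not computable, a contradiction. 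Hence $\T$ has no computable minimal model function.

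There is a subtlety I should address explicitly rather than sweep under the rug: \Cref{FW density general case} as stated produces, for computable $r$, a theory whose \emph{spectrum} is non-computable via the non-computable auxiliary function $\mf$, and I am relying on that non-computability surviving the lift. Since the lift does not alter the spectrum at all (the added axiom is universal and merely pins down $s$), this is immediate — the only thing to be careful about is that $\phi=(s(x)=x)$ really does carve out all finite models and not a proper subset, which holds because the axiom $\Forall{x}s(x)=x$ is a theorem of $\T$, so $\phi$ is satisfied in every $\T$-interpretation.

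I expect the main obstacle to be purely bookkeeping: verifying that the normalization map $s(t)\mapsto t$ interacts correctly with the freshness-of-variables bookkeeping in the definition of a witness (clause $(\textit{I})$ requires $\vars(\wit(\phi))\setminus\vars(\phi)$ to be exactly the newly introduced variables), and that composing it with $\T_0$'s witness still yields a \emph{computable} function into the quantifier-free $\Sigma$-formulas. None of this is deep, but it is the step where an error could hide, so I would write it out carefully. Everything else — the density computation and the appeal to \Cref{cmmf and computability gen} — is routine given the earlier results.
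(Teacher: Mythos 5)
Your proposal is correct, but it takes a genuinely more elaborate route than the paper does. The paper's own proof is a one-liner: since the statement only asserts the existence of \emph{some} theory $\T$ (not one over a non-empty signature), it simply reuses the $\Sigma_1$-theory from \Cref{FW density general case} verbatim, takes $\phi$ to be any tautology, and observes $\mu(\T,\phi)=\mu(\T)=r$. No signature extension, no normalization of $s(t)\mapsto t$, no transfer of witnesses. Your approach — adding a fresh unary $s$ forced to be the identity, rewriting $\Sigma$-formulas to $\Sigma_1$-formulas, and composing with the $\T_0$-witness — works, and the details you flag (that the normalization preserves $\T$-equivalence, that the witness remains computable, that $\spec{\T,\phi}=\spec{\T_0}$) all go through as you describe. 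What your construction buys is a \emph{genuinely} non-empty-signature example rather than a degenerate one, which arguably speaks more honestly to the spirit of \Cref{non empty}; what the paper's one-liner buys is brevity, at the cost of not demonstrating that the phenomenon survives the move to real non-empty signatures. One small remark: the paper restricts \Cref{FW non empty} to computable $r$, but since \Cref{FW density general case} covers all $r\in[0,1]$, the same trivial proof (and your lift) actually gives the stronger, unrestricted statement; the computability hypothesis appears to be a leftover.
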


\section{Related work and conclusion}\label{conclusion}

We have studied
connections between
densities and model-theoretic properties. 
\Cref{tabletwo} summarizes our main results.
For each property, we refer to the theorems that characterize 
its possible densities, both for empty and non-empty signatures.

\begin{table}[t]
\centering

\begin{tabular}{|P{4.9cm}||P{3.4cm}|P{3.5cm}|}
\hline
Property & Empty case & Non-empty case\\
%& Alternative densities\\
\hline\hline
$\textbf{Stable Infiniteness}$ & \Cref{SI density} & \Cref{SI FMP non empty}\\\hline
%& \Cref{SI density gen}\\\hline
$\textbf{Finite Model Property}$ & \Cref{SI density} & \Cref{SI FMP non empty}\\\hline
%& \Cref{SI density gen}\\\hline
% $\neg\textbf{FW}$ & \Cref{not FW density} & \multicolumn{1}{c|}{\cellcolor{red!15}\Cref{not FW density counterexample}} & \multicolumn{1}{c|}{\cellcolor{yellow!15}Future work} \\\hline
\textbf{Gentleness} & \Cref{gentle zero one} & \Cref{gentle non empty}\\\hline
%& \Cref{gentle gen d,gentle gen s}\\\hline
$\textbf{Smoothness}$ & \Cref{smooth density}& \Cref{smooth non empty}\\\hline
%& \Cref{smooth density schni,smooth density gen}\\\hline
$\textbf{Strong Finite Witnessability}$ & \Cref{SFW density} & \Cref{SFW non empty}\\\hline
%& \Cref{SFW density gen}\\\hline
$\textbf{Comp. of Min. Mod. Fun.}$ & \Cref{cmmf density} & \Cref{CMMF non empty}\\\hline
%\multicolumn{1}{c|}{\cellcolor{yellow!15}\Cref{cmmf density schni} (only $\mus$)}\\\hline
$\textbf{Finite Witnessability}$ & \Cref{FW density general case} & \Cref{FW non empty},\Cref{not FW density counterexample}\\\hline
%& \multicolumn{1}{c|}{\cellcolor{yellow!15}Future work}\\\hline
\end{tabular}
\vspace{1em}
\caption{Summary of main results.}\label{tabletwo}
\vspace{-8mm}
\end{table}

We conclude by reviewing related work and sketching 
the next steps.

\subsection{Related work}
\emph{$0$-$1$-laws and densities.}
Studies on spectra and densities go back as far as \cite{Carnap,Glebskii,Fagin}.
While we consider only models of a theory, these results, including the famous $0$-$1$ laws, concern \emph{random models}, that is, any models.
$0$-$1$ laws remain powerful for theories with finite axiomatizations (as we can represent their axiomatizations using a conjunction),
but here we consider also infinite axiomatizations.
Later studies, such as \cite{Compton,Bell} considered densities with respect to a theory, or even a (sufficiently well-behaved) class of models, but have not considered properties associated with theory combination.
We focus on theories, and on the relationship between  their combination properties and the behavior of their density.

\emph{Descriptive complexity.} Note that we use slightly different definitions for the spectrum of a theory than those found in descriptive complexity~\cite{descriptivecomplexity}:
although our definition of $\spec{\T,\phi}$ is the usual one 
for the spectrum of a formula relative to a theory, 
the spectrum of a theory $\T$ is more commonly understood as the map from cardinals to cardinals which, given $\kappa$, returns the number of non-isomorphic models of $\T$ of cardinality $\kappa$.
But for the case of finite cardinalities in the empty signature, there is this map would return either $0$ or $1$.
Then, our definition coincides with taking the pre-image of $1$ in the more standard definition.

\emph{Theory combination properties.} The current paper deals with, among other topics, Boolean combinations of theory combination properties (especially in \Cref{table}), something comprehensively researched in  \cite{CADE,FroCoS,LPAR}.
While those papers study the combinations of properties per se, here we focus on establishing these properties (or lack of) through the analysis of a their density.

\subsection{Future work: many-sorted densities}
\label{rem:manysorted}
In this paper we only considered one-sorted theories, even though many-sorted theories
are commonly used in SMT.
The main reason for that is that densities for many-sorted theories would be defined
on tuples rather than on numbers (i.e. on the cardinalities of the domains rather than
on that of the single domain), and it is unclear how this generalization would materialize.
We leave this investigation for future work, and briefly describe concrete options for such a generalization.

What makes the natural density so natural is the fact that it calculates the ratio of the number of elements in a set $A$ to the number of elements in $\mathbb{N}$ by doing that for numbers under a bound, and then letting said bound go to infinite. 
But there is no single way of doing that in $\mathbb{N}^{m}$, so we are forced to make a choice.
Once fixed a bound $n$, do we, for example:

%\begin{enumerate}
%\item 
$(i)$
bound all coordinates simultaneously by $n$ (i.e., $\mu(A)=\LIM\fr{|A_{n}|}{n}$ for $A_{n}=A\cap[n]^{m}$)?
%\item 
$(ii)$
bound the distance of a tuple to the origin by $n$ (i.e., $A_{n}=A\cap B_{d}(n)$, where $B_{d}(n)=\{\textbf{p}\in\mathbb{N}^{m} : d(\textbf{0},\textbf{p})\leq n\}$, for $\textbf{0}$ the origin)?
%\item 
$(iii)$
If so, what metric do we use to calculate the distance? Do we use the taxicab distance, where $d_{1}(\textbf{p},\textbf{q})=\sum_{i=1}^{n}|p_{i}-q_{i}|$, or the generalized euclidean distances $d_{m}(\textbf{p},\textbf{q})=(\sum_{i=1}^{n}(p_{i}-q_{i})^{m})^{\fr{1}{m}}$, or something entirely different?
%\end{enumerate}
 
There is a plurality of "natural densities" to explore. Even more, while some generalizations will characterize properties w.r.t. the entire set of sorts $\{\s_{1},\ldots,\s_{n}\}$, others will characterize them with respect to some subset of sorts, while others will offer no characterization whatsoever.

All of this is left to a future work, but we expect that the results from the current paper will still be useful for many-sorted logic, 
as many of the potential many-sorted densities would rely on the 
separate projections to each sort. 
% porjection to 
% as many of the densities in the many-sorted case would follow a definition to which we can apply the techniques found in this paper.
% To give one example, if a theory $\T$ in the empty signature with $m$ sorts has $\mu_{b}(\T)$, mentioned just above, positive, it is stably infinite with respect to all sorts:
% were that not true there would be a sort $\s$ and an $n_{0}$ such that $|\s^{\A}|\leq n_{0}$ for all $\T$-interpretations $\A$;
% we would then be able to bound $|A_{n}|/n$ by $n_{0}/n$, that goes to $0$.
% But that does not help us in establishing the stable infiniteness of a theory with respect to some sort when with respect to another one it is not stably infinite:
% such a theory always has $\mu_{b}(\T)=0$.

\newpage

\bibliography{bib}{}

\begin{thebibliography}{10}

\bibitem{BarFT-RR-17}
Clark Barrett, Pascal Fontaine, and Cesare Tinelli.
\newblock {The SMT-LIB Standard: Version 2.6}.
\newblock Technical report, Department of Computer Science, The University of
  Iowa, 2017.
\newblock Available at \url{http://smt-lib.org}.

\bibitem{BSST21}
Clark Barrett, Roberto Sebastiani, Sanjit Seshia, and Cesare Tinelli.
\newblock Satisfiability modulo theories.
\newblock In Armin Biere, Marijn J.~H. Heule, Hans van Maaren, and Toby Walsh,
  editors, {\em Handbook of Satisfiability, Second Edition}, volume 336 of {\em
  Frontiers in Artificial Intelligence and Applications}, chapter~33, pages
  825--885. IOS Press, February 2021.

\bibitem{Bell}
J.~Bell and Burris S.
\newblock Compton’s method for proving logical limit laws.
\newblock {\em Contemporary Mathematics}, 558:97--128, 2011.

\bibitem{Carnap}
Rudolf Carnap.
\newblock {\em Logical Foundations of Probability}.
\newblock Chicago University of Chicago Press, Chicago, 1950.

\bibitem{Compton}
Kevin~J. Compton, C.~{Ward Henson}, and Saharon Shelah.
\newblock Nonconvergence, undecidability, and intractability in asymptotic
  problems.
\newblock {\em Annals of Pure and Applied Logic}, 36:207--224, 1987.

\bibitem{LPAR-arXiv}
Guilherme~Vicentin de~Toledo and Yoni Zohar.
\newblock Combining combination properties: Minimal models, 2024.

\bibitem{Fagin}
Ronald Fagin.
\newblock Probabilities on finite models.
\newblock {\em The Journal of Symbolic Logic}, 41(1):50--58, 1976.

\bibitem{gentle}
Pascal Fontaine.
\newblock Combinations of theories for decidable fragments of first-order
  logic.
\newblock In Silvio Ghilardi and Roberto Sebastiani, editors, {\em Frontiers of
  Combining Systems}, pages 263--278, Berlin, Heidelberg, 2009. Springer Berlin
  Heidelberg.

\bibitem{Glebskii}
Y.~V. Glebskii, D.~I. Kogan, M.~I. Liogon'kii, and V.~A. Talanov.
\newblock Volume and fraction of satisfiability of formulas of the lower
  predicate calculus.
\newblock {\em Kibernetica (Kiev)}, 5:17--27, 1969.

\bibitem{Hodges}
Wilfrid Hodges.
\newblock {\em Model Theory}.
\newblock Encyclopedia of Mathematics and its Applications. Cambridge
  University Press, 1993.

\bibitem{descriptivecomplexity}
Neil Immerman.
\newblock {\em Descriptive Complexity}.
\newblock Texts in Computer Science. Springer, New York, NY, 1999 edition,
  November 1998.

\bibitem{JB10-LPAR}
Dejan Jovanovic and Clark~W. Barrett.
\newblock Polite theories revisited.
\newblock In {\em {LPAR} (Yogyakarta)}, volume 6397 of {\em Lecture Notes in
  Computer Science}, pages 402--416. Springer, 2010.

\bibitem{NelsonOppen}
Greg Nelson and Derek~C. Oppen.
\newblock Simplification by cooperating decision procedures.
\newblock {\em ACM Trans. Program. Lang. Syst.}, 1(2):245–257, October 1979.

\bibitem{Rado}
T.~Rad{\'o}.
\newblock On non-computable functions.
\newblock {\em The Bell System Technical Journal}, 41(3):877--884, 1962.

\bibitem{ranise:inria-00000570}
Silvio Ranise, Christophe Ringeissen, and Calogero~G. Zarba.
\newblock {Combining data structures with nonstably infinite theories using
  many-sorted logic}.
\newblock In Bernard Gramlich, editor, {\em {5th International Workshop on
  Frontiers of Combining Systems - FroCoS'05}}, volume 3717 of {\em Lecture
  Notes in Artificial Intelligence}, pages 48--64, Vienna, September 2005.
  {Springer}.

\bibitem{RanRinZar}
Silvio Ranise, Christophe Ringeissen, and Calogero~G. Zarba.
\newblock {Combining data structures with nonstably infinite theories using
  many-sorted logic}.
\newblock In Bernard Gramlich, editor, {\em {5th International Workshop on
  Frontiers of Combining Systems - FroCoS'05}}, volume 3717 of {\em Lecture
  Notes in Artificial Intelligence}, pages 48--64, Vienna, September 2005.
  {Springer}.

\bibitem{Sheng2022}
Ying Sheng, Yoni Zohar, Christophe Ringeissen, Jane Lange, Pascal Fontaine, and
  Clark Barrett.
\newblock Polite combination of algebraic datatypes.
\newblock {\em Journal of Automated Reasoning}, 66(3):331--355, Aug 2022.

\bibitem{SZRRBT-21}
Ying Sheng, Yoni Zohar, Christophe Ringeissen, Andrew Reynolds, Clark Barrett,
  and Cesare Tinelli.
\newblock Politeness and stable infiniteness: Stronger together.
\newblock In Andr{\'e} Platzer and Geoff Sutcliffe, editors, {\em Automated
  Deduction -- CADE 28}, pages 148--165, Cham, 2021. Springer International
  Publishing.

\bibitem{Szemerdi}
E.~Szemerédi.
\newblock On sets of integers containing k elements in arithmetic progression.
\newblock {\em Acta Arithmetica}, 27:199–245, 1975.

\bibitem{Tenenbaum}
G{\'e}rald Tenenbaum.
\newblock {\em Introduction to analytic and probabilistic number theory.
  {Transl}. from the 2nd {French} ed. by {C}.{B}.{Thomas}}, volume~46 of {\em
  Camb. Stud. Adv. Math.}
\newblock Cambridge: Cambridge Univ. Press, 1995.

\bibitem{DBLP:journals/jar/TinelliZ05}
Cesare Tinelli and Calogero~G. Zarba.
\newblock Combining nonstably infinite theories.
\newblock {\em J. Autom. Reason.}, 34(3):209--238, 2005.

\bibitem{TinZar05}
Cesare Tinelli and Calogero~G. Zarba.
\newblock Combining nonstably infinite theories.
\newblock {\em Journal of Automated Reasoning}, 34(3):209--238, April 2005.

\bibitem{arxivFroCoS}
Guilherme Toledo, Yoni Zohar, and Clark Barrett.
\newblock Combining finite combination properties: Finite models and busy
  beavers, 2023.

\bibitem{LPAR}
Guilherme~V. Toledo and Yoni Zohar.
\newblock Combining combination properties: Minimal models.
\newblock In Nikolaj Bjorner, Marijn Heule, and Andrei Voronkov, editors, {\em
  Proceedings of 25th Conference on Logic for Programming, Artificial
  Intelligence and Reasoning}, volume 100 of {\em EPiC Series in Computing},
  pages 19--35. EasyChair, 2024.

\bibitem{CADE}
Guilherme~V. Toledo, Yoni Zohar, and Clark Barrett.
\newblock Combining combination properties: An analysis of stable infiniteness,
  convexity, and politeness.
\newblock In Brigitte Pientka and Cesare Tinelli, editors, {\em Automated
  Deduction -- CADE 29}, pages 522--541, Cham, 2023. Springer Nature
  Switzerland.

\bibitem{FroCoS}
Guilherme~V. Toledo, Yoni Zohar, and Clark Barrett.
\newblock Combining finite combination properties: Finite models and busy
  beavers.
\newblock In Uli Sattler and Martin Suda, editors, {\em Frontiers of Combining
  Systems}, pages 159--175, Cham, 2023. Springer Nature Switzerland.

\bibitem{Turing}
A.~M. Turing.
\newblock On computable numbers, with an application to the
  entscheidungsproblem.
\newblock {\em Proceedings of the London Mathematical Society},
  s2-42(1):230--265, 1937.

\end{thebibliography}
\bibliographystyle{plain}
% \newpage

% \input{Appendix}

\newpage

\appendix

\section{Useful Theorems}
We briefly recall the L{\"o}wenheim-Skolem and compactness theorems (see, e.g.,  \cite{Hodges}).

\begin{theorem}\label{LowenheimSkolem}
    Given a first-order signature $\Sigma$, if a set $\Gamma$ of $\Sigma$-formulas is satisfiable by a $\Sigma$-interpretation with $|\dom{\A}|\geq\aleph_{0}$, then it is satisfied by an interpretation $\B$ with $|\dom{\B}|=\aleph_{0}$.
\end{theorem}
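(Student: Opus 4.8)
The plan is to build a countable \emph{elementary} substructure of $\A$ that contains the value $x^{\A}$ of every variable $x$, via the Tarski--Vaught chain construction, and then take $\B$ to be the restriction of $\A$ to its domain. The essential background facts are exactly those recorded in \Cref{many sorted}: $\Sigma$ is countable (both $\F_{\Sigma}$ and $\P_{\Sigma}$ are countable sets) and there are only countably many variables, so there are only countably many $\Sigma$-formulas and only countably many function symbols.

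First I would set $B_{0}=\{x^{\A} : x \text{ a variable}\}$, a nonempty countable subset of $\dom{\A}$. Given a countable $B_{k}\subseteq\dom{\A}$, define $B_{k+1}$ by adding: for every function symbol $f$ of arity $n$ and every tuple $\overarrow{b}\in B_{k}^{n}$, the element $f^{\A}(\overarrow{b})$; and, for every formula $\varphi$, every variable $x$, and every assignment of the free variables of $\varphi$ other than $x$ to elements of $B_{k}$ under which $\A\vDash\exists x.\,\varphi$, one element of $\dom{\A}$ witnessing that existential statement. Since $B_{k}$ is countable and there are countably many formulas and function symbols, $B_{k+1}$ is again countable. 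Put $B=\bigcup_{k\in\mathbb{N}}B_{k}$, a countable subset of $\dom{\A}$.

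Next I would verify that $B$ is the domain of a substructure $\B$ of $\A$: it is closed under each $f^{\A}$ by construction and contains $x^{\A}$ for every variable, so we may set $f^{\B}=f^{\A}\restriction B^{n}$, $P^{\B}=P^{\A}\cap B^{n}$, and $x^{\B}=x^{\A}$. Then I would check the Tarski--Vaught condition: if $\A\vDash\exists x.\,\varphi$ with the free variables of $\varphi$ other than $x$ assigned elements of $B$, then — because only finitely many parameters occur, all lying in some common $B_{k}$ — a witness for $x$ was thrown into $B_{k+1}\subseteq B$. By the Tarski--Vaught test, established by the usual induction on formula structure using closure under witnesses and under function symbols, $\B$ is an elementary substructure of $\A$. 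In particular $\B$ satisfies exactly the same formulas as $\A$ under the assignment $x\mapsto x^{\A}$, so $\B\vDash\gamma$ for every $\gamma\in\Gamma$. Finally $|B|\leq\aleph_{0}$ by construction, while $|B|\geq\aleph_{0}$ because each sentence $\psi_{\geq n}$ holds in $\A$ and hence in $\B$; thus $|\dom{\B}|=\aleph_{0}$.

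The only real subtlety — more bookkeeping than obstacle — is the paper's convention that a $\Sigma$-interpretation carries an assignment of \emph{every} variable: this forces us to put all the values $x^{\A}$ into $B$ (harmless, since there are only countably many) and to phrase the Tarski--Vaught step so that the finitely many parameters of any existential instance are captured at a finite stage. The induction proving the Tarski--Vaught test is routine once closure under witnesses and under function symbols is in hand.
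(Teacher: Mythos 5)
The paper itself offers no proof of this statement: \Cref{LowenheimSkolem} is cited to Hodges as a standard background result, so there is no argument of the paper's to compare yours against. Your Tarski--Vaught construction is the classical proof of downward L\"owenheim--Skolem, and it is correct. The cardinality bookkeeping goes through because the paper's signatures have countably many function and predicate symbols and there are countably many variables, so each stage $B_{k+1}$ is countable and so is the union $B$; the Tarski--Vaught test gives $\B\preceq\A$; every $\gamma\in\Gamma$ (a formula, possibly with free variables) transfers because $\B$ inherits $\A$'s variable assignment and that assignment takes values in $B$; and $|\dom{\B}|=\aleph_{0}$ follows from $\B\vDash\psi_{\geq n}$ for every $n$. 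You are also right to flag, and to address, the one nonstandard wrinkle in the paper's setup: a $\Sigma$-interpretation assigns a value to \emph{every} variable, which is why $B_{0}$ must be seeded with all of $\{x^{\A}\}$ and the Tarski--Vaught witnessing step must be phrased so that only the finitely many relevant parameters need to lie in a common finite stage; without that, $\B$ would not be a legitimate interpretation in the paper's sense and the final transfer of $\Gamma$ would not go through cleanly.
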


\begin{theorem}\label{compactness}
    Given a first-order signature $\Sigma$, a set $\Gamma$ of $\Sigma$-formulas is satisfiable if, and only if, every finite subset $\Gamma_{0}\subseteq\Gamma$ is satisfiable.
\end{theorem}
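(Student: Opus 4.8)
The plan is to prove the non-trivial direction, the forward one being immediate since any model of $\Gamma$ models each of its finite subsets. So assume every finite $\Gamma_{0}\subseteq\Gamma$ is satisfiable; call a set of sentences \emph{finitely satisfiable} when all its finite subsets are. I would build a model of $\Gamma$ by a Henkin-style term construction, treating free variables as extra constants so that I may work with sentences only. Stage one: conservatively enrich $\Sigma$ by adding, for every formula $\varphi$ with a single free variable $x$, a fresh constant $c_{\varphi}$, iterating $\omega$ times to obtain $\Sigma^{+}$, and let $H$ be the set of all \emph{Henkin axioms} $\exists x.\,\varphi\rightarrow\varphi[c_{\varphi}/x]$. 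The preliminary lemma is that $\Gamma\cup H$ is still finitely satisfiable: a finite subset mentions only finitely many witnesses, so a model of its $\Gamma$-part can be expanded by interpreting each $c_{\varphi}$ as an element realizing $\varphi$ when $\exists x.\,\varphi$ holds there (and arbitrarily otherwise), processing the witnesses in increasing nesting depth so that earlier choices do not interfere.

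Stage two: by Zorn's lemma, using that a union of a chain of finitely satisfiable sets is finitely satisfiable, extend $\Gamma\cup H$ to a \emph{maximal} finitely satisfiable set $\Gamma^{*}$ of $\Sigma^{+}$-sentences. Maximality gives the familiar closure facts: for each $\Sigma^{+}$-sentence $\psi$ exactly one of $\psi$, $\neg\psi$ lies in $\Gamma^{*}$; $\Gamma^{*}$ respects the propositional connectives; the relation $t\sim t'\iff(t=t')\in\Gamma^{*}$ on closed terms is a congruence, since the equality axioms are valid and maximality forces their instances into $\Gamma^{*}$; and, because of $H$, $\exists x.\,\varphi\in\Gamma^{*}$ implies $\varphi[c_{\varphi}/x]\in\Gamma^{*}$.

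Stage three: form the canonical term model $\mathcal{M}$ whose domain is the set of closed $\Sigma^{+}$-terms modulo $\sim$, interpreting function symbols syntactically and each predicate $P$ so that $P(\overline{t})$ holds in $\mathcal{M}$ iff $P(\overline{t})\in\Gamma^{*}$. One then proves the \emph{Truth Lemma}, $\mathcal{M}\vDash\psi\iff\psi\in\Gamma^{*}$ for every $\Sigma^{+}$-sentence $\psi$, by induction on the structure of $\psi$: atomic by construction, propositional by the closure facts, and the quantifier case by the Henkin witnesses in one direction and the fact that every element of $\mathcal{M}$ is named by a closed term in the other. Since $\Gamma\subseteq\Gamma^{*}$, the $\Sigma$-reduct of $\mathcal{M}$ satisfies $\Gamma$, finishing the proof.

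I expect the main obstacle to be the interplay between the witness-addition lemma and the quantifier step of the Truth Lemma: one must check that adding witnesses for \emph{all} one-variable formulas at once preserves finite satisfiability, which is precisely why the witnesses are added stagewise, by depth. An alternative route is the ultraproduct proof — index by the finite subsets $\Gamma_{0}\subseteq\Gamma$, pick a model $\mathcal{A}_{\Gamma_{0}}\vDash\Gamma_{0}$, take an ultrafilter $\mathcal{U}$ on the index set containing every up-set $\{\Gamma_{1}:\Gamma_{0}\subseteq\Gamma_{1}\}$, and invoke {\L}o\'s's theorem on $\prod_{\mathcal{U}}\mathcal{A}_{\Gamma_{0}}$ — or, more economically, to cite the Completeness Theorem, since derivations are finite. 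I would present the Henkin construction, as it is self-contained modulo elementary syntax.
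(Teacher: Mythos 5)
Your proof is correct, but note that the paper itself does not prove this statement at all: it appears in the appendix under \emph{Useful Theorems}, where the authors simply recall the Compactness Theorem (together with L\"owenheim--Skolem) as a classical fact and cite Hodges. There is thus no proof in the paper to compare against, and supplying one, as you do, is more than the paper asks for.

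That said, your Henkin construction is the standard textbook route and each stage is sound. A few small observations. In Stage one, your worry about the ``interplay'' is handled exactly as you suspect by the $\omega$-iterated, depth-stratified addition of witnesses: at each level the new constants do not appear in the formulas whose witnesses were added earlier, so expanding a model of the finite $\Gamma$-part witness by witness never clashes. In Stage two, it is worth being explicit that you need the union of a chain of finitely satisfiable sets to be finitely satisfiable, which holds because any finite subset of the union already sits inside some member of the chain; you mention this, and it is the key fact that makes Zorn's lemma applicable. In Stage three, the direction $\mathcal{M}\vDash\exists x.\,\varphi\Rightarrow\exists x.\,\varphi\in\Gamma^{*}$ uses that every element of the term model is named by a closed term together with maximality (if $\neg\exists x.\,\varphi\in\Gamma^{*}$ then no instance $\varphi[t/x]$ can be in $\Gamma^{*}$, contradicting the atomic/inductive case); your sketch covers this. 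One further point: the theorem as stated in the paper is about \emph{formulas}, possibly with free variables, rather than sentences; your opening move of promoting free variables to fresh constants is the right fix and should be stated as a reduction rather than left implicit. Your alternative routes (\L{}o\'s's theorem via an ultraproduct indexed by finite subsets, or derivability from Completeness since derivations are finite) are also standard and correct; the ultraproduct route avoids syntactic bookkeeping at the cost of the ultrafilter lemma, while citing Completeness is the most economical but presupposes a proof calculus. Given that the paper merely cites the result, any of the three would suffice; your choice of the self-contained Henkin argument is reasonable.
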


% \section{Chaitin Constants}
% \label{sec:app-chaitin}
% We include a definition of Chaitin constants.
% The concrete definition is not critical for the results of our paper, but we include it
% for completeness.
% Denote by $\{0,1\}^{*}$ the set of finite binary strings.
%           Denote by $f:X\dashrightarrow Y$ the fact that $f$ is a partial function from $X$ to $Y$, and call its domain the subset of $X$ where $f$ is defined. 
%           An $F:\{0,1\}^{*}\dashrightarrow\{0,1\}^{*}$ is called \textbf{universal} if, for every computable $f:\{0,1\}^{*}\dashrightarrow\{0,1\}^{*}$, there exists a string $y$ such that $f(x)=F(y\oplus x)$ for all $x\in\{0,1\}^{*}$, where $\oplus$ is concatenation.
%           $F:\{0,1\}^{*}\dashrightarrow\{0,1\}^{*}$ is called \textbf{prefix-free} if, for no $x$ and $y$ in the domain $F$, $x$ is a proper substring of $y$.
%           The Chaitin constant associated to a computable, prefix-free and universal $F$ is the sum of $2^{-|x|}$ for all $x$ in the domain of $F$, where $|x|$ is the length of $x$.            

\section{\tp{Proof of \Cref{SI density}}{Proof of result \ref{SI density}}}

\begin{lemma}\label{bigger interpretation}
    Let $\T$ be a $\Sigma_{1}$-theory.
    If $\phi$ is a quantifier-free $\Sigma_{1}$-formula, and $\A$ and $\B$ are $\T$-interpretations such that $\B$ satisfies $\phi$, and $|\dom{\A}|\geq|\vars(\phi)^{\B}|$, then there is a $\T$-interpretation $\A^{\prime}$, differing from $\A$ at most on the value assigned to $\vars(\phi)$, that satisfies $\phi$. 
    
    If $\A$ and $\B$ are $\T$-interpretations with $|\dom{\A}|=|\dom{\B}|$, then there is a $\T$-interpretation $\A^{\prime}$, differing from $\A$ at most on the values assigned to variables, such that $\A^{\prime}$ and $\B$ satisfy exactly the same, not necessarily quantifier-free, formulas.
\end{lemma}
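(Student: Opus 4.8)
The plan is to exploit the fact that in the empty signature $\Sigma_{1}$ an interpretation is nothing more than a nonempty domain together with an assignment of domain elements to the variables, that the truth of a \emph{sentence} depends only on the cardinality of the domain, and that the truth of a \emph{quantifier-free} formula depends only on which of its variables are assigned equal elements (its ``equality type''). Since $\ax{\T}$ consists of sentences, any interpretation obtained from a $\T$-interpretation by changing only the values assigned to variables (while keeping the same domain) is again a $\T$-interpretation; this observation will be used in both parts.

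For the first claim I would start from $\B\vDash\phi$ and consider the set $S=\vars(\phi)^{\B}\subseteq\dom{\B}$, which is finite since $\phi$ has finitely many variables. By hypothesis $|\dom{\A}|\geq|S|$, so there is an injection $\iota:S\rightarrow\dom{\A}$; composing the evaluation map $x\mapsto x^{\B}$ from $\vars(\phi)$ to $S$ with $\iota$ gives an assignment of the variables of $\phi$ to elements of $\dom{\A}$ that has exactly the same equality type as in $\B$. Let $\A^{\prime}$ agree with $\A$ on the domain and on all variables outside $\vars(\phi)$, and assign the variables of $\phi$ according to this composite. Then $\A^{\prime}$ is a $\T$-interpretation (it has the same domain as $\A$), it differs from $\A$ at most on $\vars(\phi)$, and since $\A^{\prime}$ and $\B$ induce the same equality type on $\vars(\phi)$ and $\phi$ is quantifier-free, a straightforward induction on $\phi$ gives $\A^{\prime}\vDash\phi$ iff $\B\vDash\phi$; the latter holds by assumption.

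For the second claim, from $|\dom{\A}|=|\dom{\B}|$ I would fix a bijection $h:\dom{\B}\rightarrow\dom{\A}$ and define $\A^{\prime}$ to have domain $\dom{\A}$ and to assign to each variable $x$ the element $h(x^{\B})$. Then $\A^{\prime}$ differs from $\A$ at most on the variable assignments, so it is again a $\T$-interpretation; and $h$ is an isomorphism of $\Sigma_{1}$-interpretations from $\B$ onto $\A^{\prime}$, because in a signature whose only symbol is equality any domain bijection that respects the variable assignments is automatically an isomorphism. Isomorphic interpretations satisfy exactly the same first-order formulas, quantified or not, which is the desired conclusion.

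I do not expect a serious obstacle. The only points requiring a word of care are (i) justifying that altering variable assignments keeps us inside $\T$, which is handled by $\ax{\T}$ being a set of sentences, and (ii) making precise that agreement of equality types suffices for quantifier-free formulas in the empty signature, which is a routine structural induction on $\phi$ that I would only sketch.
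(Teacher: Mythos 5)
Your construction is the same as the paper's: build an injection (bijection in the second case) from the relevant elements of $\dom{\B}$ into $\dom{\A}$, transfer variable assignments through it, and observe that since $\ax{\T}$ consists of sentences, changing only variable assignments stays inside $\T$. The place where you diverge is in how you justify that the transferred interpretation satisfies the same formulas: for the quantifier-free part you appeal to preservation of equality type, and for the general part you appeal to the standard fact that isomorphisms preserve arbitrary first-order formulas, whereas the paper spells out a direct structural induction and for the existential case has to strengthen the induction hypothesis to quantify over all interpretations of the same cardinality simultaneously. Your route is a bit cleaner, since invoking the isomorphism lemma sidesteps the slightly awkward simultaneous induction, at the cost of assuming that lemma as background; the paper's route is more self-contained. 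One small point in your favor: your first-part construction modifies only the variables actually in $\vars(\phi)$, which matches the statement exactly, while the paper's modifies every variable whose $\B$-value lies in $\vars(\phi)^{\B}$, which is technically a larger set (this is harmless but less tight).
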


\begin{proof}
    Take an injective function $f:\vars(\phi)^{\B}\rightarrow\dom{\A}$ (bijective function $f:\dom{\B}\rightarrow\dom{\A}$ if $|\dom{\A}|=|\dom{\B}|$), and define an interpretation $\A^{\prime}$ such that: $\dom{\A^{\prime}}=\dom{\A}$;  $x^{\A^{\prime}}=f(x^{\B})$ if $x^{\B}$ is in the domain of $f$; and $x^{\A^{\prime}}=x^{\A}$ otherwise. We now prove by structural induction that $\A^{\prime}$ satisfies $\phi$; in the case where $|\dom{\A}|=|\dom{\B}|$ and $f$ is bijective, we must perform the proof simultaneously over all interpretations with the same cardinality as that of $\A$ and $\B$.
    
    \begin{enumerate}
        \item If $\psi$ is an atomic subformula of $\phi$, since our signature is empty it must equal $x=y$, for some variables $x$ and $y$; then $\B$ satisfies $\psi$ iff $x^{\B}=y^{\B}$ and, since $x^{\A^{\prime}}=f(x^{\B})$ and $y^{\A^{\prime}}=f(y^{\B})$, that happens iff $\A^{\prime}$ satisfies $\psi$.
        \item So, suppose $\B$ satisfies the subformulas $\psi$ and $\psi_{i}$ (for $i\in\{1,2\}$) of $\phi$ iff $\A^{\prime}$ satisfies the same formulas.
        \begin{enumerate}
            \item We have that $\B$ satisfies $\neg\psi$ iff it does not satisfy $\psi$, what in turn happens iff $\A^{\prime}$ does not satisfy $\psi$, and so $\B$ satisfies $\neg\psi$ iff $\A^{\prime}$ satisfies $\neg\psi$.
            
            \item Analogously, $\B$ satisfies $\psi_{1}\vee\psi_{2}$ iff it satisfies $\psi_{1}$ or $\psi_{2}$, what happens iff $\A^{\prime}$ satisfies $\psi_{1}$ or $\psi_{2}$, what in turn happens iff $\A^{\prime}$ satisfies $\psi_{1}\vee\psi_{2}$.
            
            \item For the case where $|\dom{\A}|=|\dom{\B}|$, and $\phi$ is not necessarily quantifier-free, suppose $\B$ satisfies $\Exists{x}\psi$, and so there exists an interpretation $\B_{*}$, differing from $\B$ at most on the value assigned to $x$, that satisfies $\psi$. We then define an interpretation $\A^{\prime}_{*}$ differing from $\A^{\prime}$ at most on the value assigned to $x$, where $x^{\A_{*}^{\prime}}=f(x^{\B_{*}})$, and from the strengthened form of the induction hypothesis we have $\A_{*}^{\prime}$ satisfies $\psi$, and thus $\A^{\prime}$ satisfies $\Exists{x}\psi$. 
            
            Reciprocally, if $\A^{\prime}$ satisfies $\Exists{x}\psi$, there is an interpretation $\A^{\prime}_{*}$ differing from $\A^{\prime}$ at most on $x$, and we can then define an interpretation $\B_{*}$, differing of $\B$ at most on $x$ where $x^{\B_{*}}=f^{-1}(x^{\A^{\prime}_{*}})$, so that $\B_{*}$ satisfies $\psi$ and thus $\B$ satisfies $\Exists{x}\psi$.

            \item The cases of $\psi_{1}\wedge\psi_{2}$ and $\psi_{1}\rightarrow\psi_{2}$ (and $\Forall{x}\psi$) can be derived from those, and thus we are done.
            
        \end{enumerate}
    \end{enumerate}   
    That $\A^{\prime}$ is still a $\T$-interpretation follows from the fact that to obtain $\A^{\prime}$ we only (at most) changed the value assigned by $\A$ to some variables.
\end{proof}

\begin{lemma}\label{has infinite model}
    If $\T$ is a $\Sigma_{1}$-theory with an infinite interpretation, it is stably infinite.
\end{lemma}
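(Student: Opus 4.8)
The plan is to read off this lemma almost immediately from \Cref{bigger interpretation}. Let $\phi$ be an arbitrary $\T$-satisfiable quantifier-free $\Sigma_{1}$-formula, and fix a $\T$-interpretation $\B$ with $\B\vDash\phi$; by hypothesis we also have a $\T$-interpretation $\A$ with $|\dom{\A}|\geq\aleph_{0}$. The point to exploit is that $\vars(\phi)$ is a finite set, so $|\vars(\phi)^{\B}|$ is finite, and in particular $|\dom{\A}|\geq\aleph_{0}>|\vars(\phi)^{\B}|$.

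First I would invoke the first part of \Cref{bigger interpretation} with this $\A$, $\B$, and $\phi$: the cardinality inequality $|\dom{\A}|\geq|\vars(\phi)^{\B}|$ is exactly its premise, so it yields a $\T$-interpretation $\A^{\prime}$ that differs from $\A$ at most on the values assigned to $\vars(\phi)$ and satisfies $\phi$. Since $\A^{\prime}$ and $\A$ have the same domain, $|\dom{\A^{\prime}}|=|\dom{\A}|\geq\aleph_{0}$, so $\A^{\prime}$ witnesses the defining condition of stable infiniteness for $\phi$. As $\phi$ was an arbitrary $\T$-satisfiable quantifier-free formula, $\T$ is stably infinite.

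There is no real obstacle here: all the work is in \Cref{bigger interpretation}, whose content is that in the empty signature the satisfaction of a quantifier-free formula depends only on the equality type of its finitely many variables, so any sufficiently large domain can be re-decorated with variable assignments to realize that type. The only thing to be careful about is the direction of the inequality in the hypothesis of \Cref{bigger interpretation} (we need $|\dom{\A}|$ large, which is automatic once $\A$ is infinite and $\phi$ has finitely many variables), and the observation that $\T$-satisfiability of $\phi$ does furnish the auxiliary model $\B$ needed to feed into the lemma.
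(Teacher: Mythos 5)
Your proof is correct and takes essentially the same route as the paper: pick a model $\B$ of $\phi$, note that $|\dom{\A}|\geq\aleph_0\geq|\vars(\phi)^\B|$, and apply \Cref{bigger interpretation} to transfer satisfaction of $\phi$ onto the infinite domain of $\A$. The only difference is that you spell out the finiteness of $\vars(\phi)$ explicitly, which the paper leaves implicit.
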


\begin{proof}
    Let $\A$ be an infinite $\T$-interpretation, $\phi$ a quantifier-free formula, and $\B$ a $\T$-interpretation that satisfies $\phi$: since $|\dom{\A}|\geq|\vars(\phi)^{\B}|$ we can thus apply \Cref{bigger interpretation} to obtain an infinite $\T$-interpretation $\A^{\prime}$ that satisfies $\phi$.
\end{proof}

\begin{lemma}\label{-SI implies bound}
    If $\T$ is a $\Sigma_{1}$-theory that is not stably infinite, then there exists $M\in\mathbb{N}$ such that $|\dom{\A}|\leq M$ for all $\T$-interpretations $\A$.
\end{lemma}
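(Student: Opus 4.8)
\textbf{Proof plan for Lemma~\ref{-SI implies bound}.}
The plan is to argue by contraposition: I will show that if a $\Sigma_{1}$-theory $\T$ has no finite bound on the size of its interpretations, then $\T$ is stably infinite. Given \Cref{has infinite model}, it suffices to produce a single infinite $\T$-interpretation, since that lemma then upgrades it to full stable infiniteness. So the whole task reduces to: \emph{arbitrarily large finite models imply an infinite model}.

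First I would assume that for every $M\in\mathbb{N}$ there is a $\T$-interpretation $\A_{M}$ with $|\dom{\A_{M}}| > M$, i.e.\ $\spec{\T}$ is infinite (or $\T$ already has an infinite model, in which case we are done immediately by \Cref{has infinite model}). The key step is a compactness argument. Consider the set of sentences $\Gamma = \ax{\T} \cup \{\psi_{\geq n} : n \in \No\}$, where $\psi_{\geq n}$ is the cardinality formula from \Cref{card-formulas}. Any finite subset $\Gamma_{0}$ of $\Gamma$ mentions only finitely many of the $\psi_{\geq n}$, hence is contained in $\ax{\T}\cup\{\psi_{\geq n}: n\leq N\}$ for some $N$; this is satisfied by any $\T$-interpretation of size greater than $N$, which exists by hypothesis. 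By \Cref{compactness}, $\Gamma$ itself is satisfiable, and any model of $\Gamma$ is a $\T$-interpretation satisfying $\psi_{\geq n}$ for all $n$, hence infinite. Then \Cref{has infinite model} gives that $\T$ is stably infinite, completing the contrapositive.

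The statement of the lemma also asserts that the bound $M$ works uniformly: $|\dom{\A}|\leq M$ for \emph{all} $\T$-interpretations $\A$, not merely the finite ones. This is automatic once we know there is no infinite model — which the contrapositive argument above establishes — so after the compactness step I would simply note that $\T$ not being stably infinite forces, via \Cref{has infinite model}, that $\T$ has no infinite interpretation, and then the negation of ``arbitrarily large finite models'' yields the desired finite $M$ bounding every interpretation.

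I do not expect a genuine obstacle here; the only point requiring a little care is making sure the cardinality sentences $\psi_{\geq n}$ behave as intended in the empty signature (they do, since they only involve $=$ and existential quantifiers) and that a finite subset of $\Gamma$ is satisfied by a sufficiently large model of $\T$ — which is exactly the non-boundedness hypothesis. The argument is essentially the standard ``arbitrarily large finite models $\Rightarrow$ infinite model'' fact specialized to models of $\ax{\T}$, and the slight subtlety is purely bookkeeping about which direction of the contrapositive supplies which half of the conclusion.
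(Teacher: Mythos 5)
Your proof is correct and follows essentially the same approach as the paper: a compactness argument on $\Gamma=\ax{\T}\cup\{\psi_{\geq n}:n\in\No\}$ to produce an infinite model from arbitrarily large ones, followed by \Cref{has infinite model} to conclude stable infiniteness, yielding the bound $M$ by contraposition. The paper phrases this as a proof by contradiction rather than contraposition, but the mathematical content is identical, and your closing remark about the bound applying to all interpretations (not just finite ones) is handled the same way in both.
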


\begin{proof}
    Suppose that, for every $n\in\mathbb{N}$, there exists a $\T$-interpretation $\A_{n}$ with $|\dom{\A_{n}}|\geq n$. If we define $\Gamma=\ax{\T}\cup\{\psi_{\geq m} : m\in\mathbb{N}\}$, for any finite subset $\Gamma_{0}\subseteq\Gamma$, take the maximum $n$ of $m$ such that $\psi_{\geq m}$ occurs in $\Gamma_{0}$.
    Of course, $\A_{n}$ then satisfies $\Gamma_{0}$; it satisfies $\Gamma_{0}\cap\ax{\T}$ as it is a $\T$-interpretation;
    and it satisfies any $\psi_{\geq m}$ in $\Gamma_{0}$, since it satisfies $\psi_{\geq n}$, and $\psi_{\geq n}\rightarrow\psi_{\geq m}$ is valid for all $m\leq n$. From \Cref{compactness}, it follows that $\Gamma$ is satisfiable, meaning there is an infinite $\T$-interpretation $\A$. The result then follows from \Cref{has infinite model}.
\end{proof}

\begin{lemma}\label{-FMP implies bound}
    If $\T$ is a $\Sigma_{1}$-theory that does not have the finite model property, then there exists $M\in\mathbb{N}$ such that $|\dom{\A}|\leq M$ for all finite $\T$-interpretations $\A$.
\end{lemma}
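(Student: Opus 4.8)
The plan is to argue by contraposition, dually to (but more simply than) the proof of \Cref{-SI implies bound}. I would assume that no such bound $M$ exists; that is, for every $n\in\mathbb{N}$ there is a finite $\T$-interpretation $\A_{n}$ with $|\dom{\A_{n}}|\geq n$. The goal is then to show that $\T$ has the finite model property, contradicting the hypothesis.

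To that end, let $\phi$ be an arbitrary $\T$-satisfiable quantifier-free $\Sigma_{1}$-formula and let $\B$ be a $\T$-interpretation satisfying $\phi$. The key observation is that $\vars(\phi)$ is finite, so $\vars(\phi)^{\B}$ is a finite subset of $\dom{\B}$; write $k=|\vars(\phi)^{\B}|$. Now pick any $n\geq k$ and consider the finite $\T$-interpretation $\A_{n}$: since $|\dom{\A_{n}}|\geq n\geq k=|\vars(\phi)^{\B}|$, \Cref{bigger interpretation} applies and produces a $\T$-interpretation $\A_{n}^{\prime}$ that differs from $\A_{n}$ at most on the values assigned to $\vars(\phi)$ and that satisfies $\phi$. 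As $\dom{\A_{n}^{\prime}}=\dom{\A_{n}}$ is finite, $\A_{n}^{\prime}$ is a finite $\T$-model of $\phi$. Since $\phi$ was an arbitrary $\T$-satisfiable quantifier-free formula, this shows $\T$ has the finite model property, a contradiction; hence the bound $M$ must exist.

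The argument is routine once \Cref{bigger interpretation} is available; the only point worth care is the ``transplantation'' step, where a satisfying assignment for $\phi$ in the infinite interpretation $\B$ is moved into an arbitrarily large finite $\T$-interpretation, which is legitimate precisely because $\phi$ has only finitely many variables. Unlike the analogous statement for stable infiniteness in \Cref{-SI implies bound}, no appeal to compactness (\Cref{compactness}) is needed here, since the construction in \Cref{bigger interpretation} preserves finiteness of the underlying domain automatically, so I do not foresee any genuine obstacle.
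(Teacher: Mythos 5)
Your proof is correct and follows essentially the same route as the paper's: both assume the bound fails, so that arbitrarily large finite $\T$-interpretations exist, and both then transplant a satisfying assignment of a quantifier-free $\phi$ into one of these via \Cref{bigger interpretation} to produce a finite model. The only cosmetic difference is that the paper fixes a specific $\phi$ with no finite $\T$-models and derives a direct contradiction, whereas you argue for arbitrary $\phi$ to conclude FMP outright; these are interchangeable.
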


\begin{proof}
    If $\T$ has only finite interpretations it already has the finite model property, so take a quantifier-free formula $\phi$ that is satisfied by no finite $\T$-in\-ter\-pre\-ta\-tions, and an infinite $\T$-interpretation $\A$ that satisfies $\phi$. Suppose that there exists, for each $n\in\mathbb{N}$, a finite $\T$-interpretation $\A_{n}$ with $|\dom{\A_{n}}|\geq n$: then, for $n=|\vars(\phi)^{\A}|$, \Cref{bigger interpretation} guarantees that by changing the value assigned to variables in $\A_{n}$ we can make it satisfy $\phi$, contradicting our hypothesis.
\end{proof}

\SIdensity*

\begin{proof}
    The proofs for the first and second items are the same: assuming $\T$ is not stably infinite, respectively does not have the finite model property, we have from \Cref{-SI implies bound}, respectively \Cref{-FMP implies bound}, that there is an $M\in\mathbb{N}$ such that, for all finite $\T$-interpretations $\A$, $|\dom{\A}|\leq M$. This means that $\specn{\T}\subseteq [1,M]$ for all $n\geq 1$, and so $|\specn{\T}|\leq M$ in that case: therefore, $\mu(\T)=\LIM \fr{|\specn{\T}|}{n}\leq \LIM \fr{M}{n}=0$. Thus $\mu(\T)=0$, and by the contrapositive we have the desired result.

    For the third item, from the hypothesis in the theorem's statement, $\mu(\T)$ is well-defined; suppose, in addition, that $\mu(\T)>0$. 
    % Then 
    % $\T$ has the finite model property by \Cref{SI density}. 
    Let $\{a_{n}\}_{n\in\mathbb{N}}$ be the increasing sequence of the elements of $\spec{\T}$ (if this set were to be finite, it would be bounded by some $M\in\mathbb{N}$, and we would have $\mu(\T)\leq \LIM\fr{M}{n}=0$). There are then two cases to consider: $(1)$ there is a computable sequence $\{b_{n}\}_{n\in\mathbb{N}}$ such that $a_{n}\leq b_{n}$ for all $n\in\mathbb{N}$; and $(2)$ for all computable sequences $\{b_{n}\}_{n\in\mathbb{N}}$ and all $N_{0}\in\mathbb{N}$, there is an $n\geq N_{0}$ such that $a_{n}> b_{n}$. Notice that the negation of $(2)$ implies $(1)$, meaning these are indeed the only cases we need to consider: if there is a computable sequence $\{b_{n}\}_{n\in\mathbb{N}}$ and an integer $N_{0}$ such that $b_{n}\geq a_{n}$ for all $n\geq N_{0}$, we define the computable sequence $\{c_{n}\}_{n\in\mathbb{N}}$ by making $c_{n}=b_{N_{0}}$ for $n\leq N_{0}$, and $c_{n}=b_{n}$ for $n\geq N_{0}$, which satisfies the conditions in case $(1)$. We will prove that case $(1)$ implies $\T$ is finitely witnessable; and that $(2)$ implies $\mu(\T)=0$, contradicting our assumption.

    \begin{enumerate}
        \item Let $x_{i}$ be fresh variables, and define $N(\phi)$, for a quantifier-free formula $\phi$, as $b_{n}$, for $n=|\vars(\phi)|$. We define $\wit(\phi)=\phi\wedge\bigwedge_{i=1}^{N(\phi)}x_{i}=x_{i}$: this is obviously a function from quantifier-free formulas into themselves, and is computable given $b_{n}$, and thus $N(\phi)$, are computable. Furthermore, $\phi$ and $\Exists{\overarrow{x}}\wit(\phi)$ are clearly $\T$-equivalent, given $\wit(\phi)$ is the conjunction of $\phi$ and a tautology and thus already equivalent to $\phi$.

        Finally, let $\phi$ be a quantifier-free formula, and $\A$ a $\T$-interpretation that satisfies $\wit(\phi)$ (and thus $\phi$, since the two are equivalent). 
        Take again $n=|\vars(\phi)|$, a set $A$ with $a_{n}-|\vars(\phi)^{\A}|$ elements (a non-negative quantity, since $\{a_{n}\}_{n\in\mathbb{N}}$ is increasing and therefore $a_{n}\geq n=|\vars(\phi)^{\A}|$), disjoint from $\dom{\A}$, and define an interpretation $\B$ by making: 
        $\dom{\B}=\vars(\phi)^{\A}\cup A$ (so $|\dom{\B}|=a_{n}$, making $\B$ a $\T$-interpretation, as there is only one $\T$-interpretation of a given size up to isomorphism by \Cref{bigger interpretation}); $x^{\B}=x^{\A}$ for all $x\in\vars(\phi)$ (so $\B$ satisfies $\phi$); 
        \[x_{i}\in \{x_{i} : 1\leq i\leq N(\phi)\}\mapsto x_{i}^{\B}\in\dom{\B}\]
        a surjective map (what is possible, as $N(\phi)=b_{n}\geq a_{n}$); and $x^{\B}$ defined arbitrarily for all other variables. 
        Because $\{x_{i} : 1\leq i\leq N(\phi)\}$ is contained in $\vars(\wit(\phi))$, and $\{x_{i} : 1\leq i\leq N(\phi)\}^{\B}=\dom{\B}$, we have $\dom{\B}=\vars(\wit(\phi))^{\B}$ (since obviously the former set at least contains the latter): it then follows that $\wit$ is a witness.

        \item Consider the computable sequence $b_{n}=n^{2}$: take $m_{0}$ as the minimum of the positive integers $m$ such that $a_{m}> b_{m}$ and, inductively, $m_{n+1}$ as the minimum of the integers $m>m_{n}$ such that $a_{m}> b_{m}$; this way, $a_{m_{n}}>b_{m_{n}}=m_{n}^{2}$ for all $n\in\mathbb{N}$. Of course, for all $n\in\mathbb{N}$ one finds $|\Spec_{a_{m_{n}}}(\T)|=m_{n}$, and $|[1,a_{m_{n}}]|=a_{m_{n}}> m_{n}^{2}$, thus $\fr{|\Spec_{a_{m_{n}}}(\T)|}{|[1,a_{m_{n}}]|}< \fr{1}{m_{n}}$. This means that a subsequence of $\fr{|\Spec_{n}(\T)|}{|[1,n]|}$ (which converges to $\mu(\T)$, as this value was assumed in the theorem's statement to be well-defined) converges to $0$, meaning $\mu(\T)=0$.
        
    \end{enumerate}

\end{proof}

% \section{\tp{Proof of \Cref{not FW density}}{Proof of not FW density}}

% \notFWdensity*

% \begin{proof}
    
% \end{proof}

% \section{\tp{Proof of \Cref{gentle zero one}}{Proof of gentle zero one}}

% \gentlezeroone*

% \begin{proof}
    
% \end{proof}

\section{\tp{Proof of \Cref{gentle zero one}}{Proof of result \ref{gentle zero one}}}

\gentlesomething*
\begin{proof}
    If $\spec{\T}$ is finite, there is an $M$ such that $|\specn{\T}|\leq |\spec{\T}|\leq M$, and therefore $\mu(\T)=\LIM\fr{M}{n}=0$.
    If $\spec{\T}$ is co-finite instead, there is an $M$ such that for all $n>M$ one has $|\specn{\T}|\geq n-M$, and so $\mu(\T)\geq\LIM\fr{(n-M)}{n}=1$.
\end{proof}

\section{\tp{Proof of \Cref{gentleness implies}}{Proof of result \ref{gentleness implies}}}

\gentlenessimplies*

\begin{proof}
Suppose $\T$ is gentle.
From the definition of gentleness we get there must exist a finite number in the spectrum of each $\T$-satisfiable quantifier-free formula $\phi$;
this means there is a finite $\T$-interpretation that satisfies $\phi$, so $\T$ indeed has the finite model property.

Given a quantifier-free formula $\phi$, we can decide whether $\spec{\T,\phi}$ is finite, and then computably calculate $\max(\spec{\T,\phi})$, or co-finite, and in this case calculate $\max(\mathbb{N}\setminus\spec{\T,\phi})$.
Using that $\spec{\T,\phi}$ is computable, we can also algorithmically obtain $\min(\spec{\T,\phi})$: 
if $\spec{\T,\phi}$ is finite, we test which $n\leq \max(\spec{\T,\phi})$ are in $\spec{\T,\phi}$ and take their minimum (if $\max(\spec{\T,\phi})$ is $0$, we just set $\min(\spec{\T,\phi})$ to $0$ as well, for simplicity);
if $\spec{\T,\phi}$ is co-finite, we test which $n\leq \max(\mathbb{N}\setminus\spec{\T,\phi})+1$ are in $\spec{\T,\phi}$, and again take their minimum.    

We state that $\minmod_{\T}(\phi)=\min(\spec{\T,\phi})$ is a computable minimal model function, being certainly computable.
Assume then $\phi$ is $\T$-satisfiable, and so $\min(\spec{\T,\phi})>0$.
For the first direction, suppose there is a $\T$-interpretation $\A$ that satisfies $\phi$ with $|\dom{\A}|<\min(\spec{\T,\phi})$;
since $|\dom{\A}|$ is in $\spec{\T,\phi}$, this contradicts the fact $\min(\spec{\T,\phi})$ is the minimum element of that set.
Now, since $\min(\spec{\T,\phi})$ is in $\spec{\T,\phi}$, there is a $\T$-interpretation $\A$ that satisfies $\phi$ with $|\dom{\A}|=\min(\spec{\T,\phi})$, and thus we have proved $\min(\spec{\T,\phi})$ is indeed a minimal model function.

Finite witnessability follows from \cite[Theorem~4]{LPAR-arXiv}.
\end{proof}

\section{\tp{Proof of \Cref{implies gentleness}}{Proof of result \ref{implies gentleness}}}

\begin{lemma}
\label{sigma one decidable}
    A $\Sigma_{1}$-theory is always decidable.
\end{lemma}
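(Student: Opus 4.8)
The plan is to decide $\T$-satisfiability of a quantifier-free $\Sigma_{1}$-formula $\phi$ by comparing one computable number against a single fixed cardinal that depends only on $\T$. First I would note that, over the empty signature, a quantifier-free $\phi$ with $\vars(\phi)=\{x_{1},\dots,x_{n}\}$ has a truth value completely determined by any arrangement on $\vars(\phi)$, since every atom of $\phi$ is of the form $x_{i}=x_{j}$. So I would set $k(\phi)$ to be the least number of classes of an equivalence relation $E$ on $\vars(\phi)$ under which $\phi$ comes out true, and $k(\phi)=\infty$ if there is no such $E$. Because there are only finitely many such $E$ and each test is a substitution, $k(\phi)$ is computable; moreover $k(\phi)$ is exactly the least cardinality of an interpretation satisfying $\phi$ in equational logic, and any $\T$-interpretation of cardinality at least $k(\phi)$ can be turned into one satisfying $\phi$ by reassigning variables — here \Cref{bigger interpretation} is the key tool (apply it with $\B$ a minimal equational model of $\phi$), since it also guarantees the reassignment keeps us among the $\T$-interpretations. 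Hence $\phi$ is $\T$-satisfiable if and only if $\T$ has some interpretation of cardinality $\geq k(\phi)$, finite or infinite.

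Next I would split on whether $\T$ has an infinite interpretation. If it does, then $\T$ is stably infinite by \Cref{has infinite model}, so $\T$ has interpretations of every cardinality $\geq k(\phi)$ whenever $k(\phi)<\infty$; thus $\phi$ is $\T$-satisfiable iff $k(\phi)<\infty$, which is decidable. If $\T$ has no infinite interpretation, then (since $x=x$ is satisfiable, $\T$ cannot be stably infinite) \Cref{-SI implies bound} yields an $M\in\mathbb{N}$ bounding the size of every $\T$-interpretation, so $\spec{\T}\subseteq[1,M]$ is finite; letting $M^{*}=\max\spec{\T}$, with $M^{*}=0$ if $\T$ has no models, I would show $\phi$ is $\T$-satisfiable iff $k(\phi)\leq M^{*}$: left to right because $|\dom{\A}|\in\spec{\T}$ and $|\dom{\A}|\geq k(\phi)$ for any satisfying $\A$, and right to left by taking the $\T$-interpretation of size $M^{*}$ and applying \Cref{bigger interpretation}. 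Since $M^{*}$ is a fixed natural number and $k(\phi)$ is computable, this case is decidable as well.

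The point I would stress is that decidability here is a non-uniform statement: I am not claiming an algorithm uniform in $\ax{\T}$, only that for each fixed $\T$ one of the two algorithms above works, and in the bounded case the constant $M^{*}$ need not be effectively recoverable from the axiomatization (indeed $\spec{\T}$, being finite there, is trivially a decidable set). A related sanity check is that $k(\phi)\geq 1$ always, so $k(\phi)\leq 0$ is impossible, which correctly makes nothing $\T$-satisfiable when $\T$ has no models. I expect no serious obstacle; the subtlest bookkeeping is verifying the equivalence ``$\phi$ is $\T$-satisfiable iff $\T$ has a model of size $\geq k(\phi)$'' against \Cref{bigger interpretation} in the corner cases ($\phi$ equationally unsatisfiable, $\T$ with no models, and $\T$ with arbitrarily large finite models forcing an infinite one via the compactness argument already inside \Cref{-SI implies bound}). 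It is also worth remarking that this argument decides only the quantifier-free fragment: deciding arbitrary $\Sigma_{1}$-sentences would amount to deciding membership in $\spec{\T}$, which can fail (e.g.\ for $\Tbb$).
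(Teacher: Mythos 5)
Your proof is correct and takes essentially the same approach as the paper: reduce $\T$-satisfiability of a quantifier-free $\phi$ to equational satisfiability of $\phi$ plus a size comparison against a fixed (possibly non-effective) constant determined by $\T$, split cases according to whether $\T$ is bounded, and rely on \Cref{bigger interpretation} to move a satisfying assignment into a $\T$-interpretation, with an explicit acknowledgement that the decision procedure is non-uniform in $\ax{\T}$. The paper phrases its case split as ``maximum model exists'' versus not, while you phrase it as ``infinite interpretation exists'' versus not, but these coincide in substance via the compactness argument inside \Cref{-SI implies bound}.

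One small slip in wording: after invoking \Cref{has infinite model} you assert that $\T$ ``has interpretations of every cardinality $\geq k(\phi)$,'' which is false in general (e.g.\ $\Tinfty$ has an infinite model but no finite ones). You do not actually need that claim; the single infinite model already has cardinality $\geq k(\phi)$, so your earlier biconditional ``$\phi$ is $\T$-satisfiable iff $\T$ has a model of size $\geq k(\phi)$'' suffices to conclude $\phi$ is $\T$-satisfiable iff $k(\phi)<\infty$.
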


\begin{proof}
    Take a $\Sigma_{1}$-theory $\T$.
    We divide the proof in two cases:
    for each of them we find an algorithm to decide whether a quantifier-free formula $\phi$ is $\T$-satisfiable;
    notice, however, that there is no decision method with $\T$ as input that returns which of the described algorithms is the correct one.
    \begin{enumerate}
        \item Suppose $\T$ has a maximum model $\A$, with $|\dom{\A}|=M$;
        we state that $\phi$ is $\T$-satisfiable if and only if there is an interpretation $\B$ in equational logic that satisfies $\phi$ with $|\dom{\B}|\leq M$, meaning we are done since equational logic is decidable.
        Indeed, if $\B$ is an interpretation in equational logic that satisfies $\phi$ with $|\dom{\B}|\leq M$, we can change the values assigned to the variables in $\phi$ by $\A$ in order to obtain a $\T$-interpretation $\A^{\prime}$ that satisfies $\phi$ (see \Cref{bigger interpretation}).
        And, if $\C$ is a $\T$-interpretation that satisfies $\phi$, given the facts that $|\dom{\C}|\leq M$ and $\C$ is an interpretation in equational logic as well, we are done.

        \item If $\T$ doesn't have a maximum model, we state that all quantifier-free formulas satisfiable in equational logic are also $\T$-satisfiable.
        Indeed, if $\phi$ is satisfied by the interpretation $\B$ of equational logic, we know there is a $\T$-interpretation $\A$ with $|\dom{\A}|\geq \B$, and again by \Cref{bigger interpretation} we are done.
    \end{enumerate}
\end{proof}

\impliesgentleness*

\begin{proof}
    If $\T$ is not stably infinite, by \Cref{-SI implies bound} $\T$ has a maximum interpretation $\A$, say with $|\dom{\A}|=M$.
    $\spec{\T,\phi}$ is then always finite (see \Cref{bigger interpretation});
    $\max(\spec{\T,\phi})=M$ if $\phi$ is $\T$-satisfiable (we use \Cref{sigma one decidable}), and $0$ otherwise;
    and a cardinality is in $\spec{\T,\phi}$ if it is both in $\spec{\T}$ and larger than or equal to the minimum cardinality of an interpretation in equational logic that satisfies $\phi$ ($\spec{\T}$ is a finite list, so it can be hardcoded).
    In summary, $\T$ is gentle.
    
    If $\T$ is strongly finitely witnessable but not stably infinite the previous reasoning already implies $\T$ is gentle, so assume $\T$ is strongly finitely witnessable and stably infinite.
    $\T$ is then also smooth by \cite[Theorem~7]{CADE}, and so $\spec{\T}$ is co-finite:
    let $M$ be the maximum of $\mathbb{N}\setminus\spec{\T}$. 
    $\spec{\T,\phi}$ is then always co-finite, by \Cref{bigger interpretation} and the fact that strong finite witnessability implies the finite model property;     $\max(\mathbb{N}\setminus\spec{\T,\phi})=0$ if $\phi$ is not $\T$-satisfiable (what can be decided from \Cref{sigma one decidable}), and otherwise we make $N$ the cardinality of the smallest interpretation in equational logic that satisfies $\phi$, and then $\max(\mathbb{N}\setminus\spec{\T,\phi})=\max\{M,N-1\}$, both computable quantities;
    and finally, $n\in\spec{\T,\phi}$ iff $n>\max\{M,N-1\}$, meaning $\spec{\T,\phi}$ is computable.
\end{proof}

\section{\tp{Proof of \Cref{smooth density}}{Proof of result \ref{smooth density}}}

\smoothdensitytheorem*

\begin{proof}
    Let $\phi$ be any tautology, and from the fact that $\T$ has the finite model property there exists a finite $\T$-interpretation $\B$ that satisfies $\phi$: 
    from the fact $\T$ is smooth it follows that for all $n\geq M=|\dom{\B}|$ there exists a $\T$-interpretation $\A_{n}$ with $|\dom{\A_{n}}|=n$, so $|\specn{\T}|=|\Spec_{M}(\T)|+(n-M)$ for all $n\geq M$. 
    Thus we have $\mu(\T)=\LIM \fr{|\specn{\T}|}{n}\geq \lim_{m\rightarrow\infty} \fr{m}{(m+M)}=1$; 
    therefore $\mu(\T)=1$, as we wished to prove.
\end{proof}

\section{\tp{Proof of \Cref{SFW density}}{Proof of result \ref{SFW density}}}

\sfwdensitytheorem*

\begin{proof}
    From \cite[Theorem~2]{FroCoS} we know strong finite witnessability implies the final model property. From \cite[Theorem~7]{CADE}, we know that if $\T$ is stably infinite, then it is smooth, and so $\mu(\T)=1$ from \Cref{smooth density}; if $\T$ is not stably infinite, \Cref{SI density} guarantees that $\mu(\T)=0$.
\end{proof}

\section{\tp{Proof of \Cref{cmmf and computability}}{Proof of result \ref{cmmf and computability}}}

\cmmfcomputability*

\begin{proof}
    For the right-to-left direction,
    suppose the spectrum $\spec{\T}$ is computable: if $\T$ does not have the finite model property, by \Cref{-FMP implies bound} one gets 
    that there is a natural number $M$ that bounds the cardinalities of all finite models of $\T$. Thus, a minimal model function can simply be obtained by checking for all models up to $M$ whether they satisfy the input formula, and stopping at the first one (recall that since the signature is empty a model is uniquely determined by its cardinality).
    If $\T$ is empty (has no models) then any computable
    function is a minimal model function, as no formula
    is $\T$-satisfiable. 
    
    We may then assume that $\T$ both has the finite model property and is not contradictory (so $\spec{\T}\neq\emptyset$) and take, if $\phi$ is not a contradiction (what can be determined algorithmically in equational logic): the set of variables $V$ in $\phi$; the set $\eq{V}$ of all the equivalences $E$ on $V$; $M=\sup(\spec{\T})$ (which equals $\aleph_{0}$ if $\spec{\T}$ is infinite, and is otherwise always a positive
    natural number since $\spec{\T}\neq\emptyset$),
    % \footnote{Although the set $\spec{\T}$ may be infinite, in which case $M$ may not be computed in finite time, $M$ serves as a constant on the definition of $minmod_{\T}$, and thus this function will still computable.} 
    $M(\phi)=\min\{|V/E| : \text{$\delta_{V}^{E}$ implies $\phi$}\}$ (a quantity that is computable since the problem may be reduced to one in equational logic), and we state that
    % \[\minmod_{\T}(\phi)=\begin{cases}
    %     \aleph_{0}, & \text{if $\phi$ is contradictory, or $M(\phi)>M$;}\\
    %     \min\{M(\phi)\leq k\leq M : k\in \spec{\T}\}, & \text{otherwise}
    % \end{cases}\]
    \[
  \minmod_{\T}(\phi) =
  \begin{cases}
    \aleph_{0}           & \begin{aligned}%[b]
       & \text{if $\phi$ is contradictory,} \\
       & \text{or $M(\phi)>M$;}
    \end{aligned} \\%[1ex]
    \min\{M(\phi)\leq k\leq M : k\in \spec{\T}\} & \text{otherwise}
  \end{cases}
\]
    is a computable minimal model function for $\T$. 
    Indeed, if $\phi$ is a contradiction, what the decision procedure for equational logic can tell us, $\minmod_{\T}(\phi)=\aleph_{0}$; 
    if $\phi$ is not a contradiction, but it is a $\T$-contradiction, then $M(\phi)>M$ (indeed, if $M\geq M(\phi)$ there is a $\T$-interpretation $\A$ with $|\dom{\A}|=M$, by appealing to \Cref{compactness} if necessary; and by \Cref{bigger interpretation} we get that $\phi$ is $\T$-satisfiable), and again $\minmod_{\T}(\phi)=\aleph_{0}$. Finally, in the case that $\phi$ is $\T$-satisfiable, since $\T$ has the finite model property by assumption, there is a finite $\T$-interpretation $\A$ that satisfies $\phi$ and, of course, $|\dom{\A}|\geq M(\phi)$ (because otherwise $\A$ induces an equivalence $\delta_{V}^{E}$ that implies $\phi$ with $|V/E|<M(\phi)$): therefore the set $\{M(\phi)\leq k\leq M : k\in \spec{\T}\}$ is not empty (it contains at least $|\dom{\A}|$), and finding its minimum is easy.
    
    So assume $\phi$ is $\T$-satisfiable, and take: an $E\in\eq{V}$ with $|V/E|=M(\phi)$ (and necessarily $M(\phi)<\aleph_{0}$); the minimum $n$ of $M(\phi)\leq k\leq M$ such that $k\in \spec{\T}$;
    % (which will equal $\aleph_{0}$ if $M>M(\phi)$); 
    and a set $X$ with $|X|=n-M(\phi)$ disjoint from $V/E$. We then create an interpretation $\B$ with: $\dom{\B}=(V/E)\cup X$; $x^{\B}=[x]$ for all $x\in V$, where $[x]$ is the equivalence class under $E$ represented by $x$; and $x^{\B}$ defined arbitrarily for variables $x$ not in $V$. Because $|\dom{\B}|=M(\phi)+(n-M(\phi))=n$, we know thanks to \Cref{bigger interpretation} that $\B$ is a $\T$-interpretation; furthermore, $\B$ satisfies $\delta_{V}^{E}$ by definition, thus satisfying $\phi$, so that we indeed have a $\T$-interpretation that satisfies $\phi$ with cardinality $n$. 
    
    Assume now that there is a $\T$-interpretation $\C$ that satisfies $\phi$ with $|\dom{\C}|=m<n$, and let $F$ be the equivalence induced by $\C$ on $V$, implying that $\C$ satisfies $\delta_{V}^{F}$. We then have that $M(\phi)\leq |V/F|\leq m\leq M$, and that $m\in \spec{\T}$, despite the fact that $m<n$ and $n$ should be the minimum such element. So $\T$ indeed has a computable minimal model function.

    For the reciprocal, the left-to-right direction, suppose that $\T$ has a computable minimal model function $\minmod_{\T}$, and we shall consider two cases: one where $\spec{\T}$ is finite, and one where it is infinite. 
    If $\spec{\T}$ is finite we have nothing left to do, as it is enough to simply hardcode these values into an algorithm that decides whether an element is in $\spec{\T}$. 
    If $\spec{\T}$ is instead infinite, the formulas $\NNEQ{x}$ are $\T$-satisfiable for all $n\in\mathbb{N}$: indeed, since $\spec{\T}$ is infinite, we can always find $m\in\spec{\T}$ such that $m\geq n$, and thus there exists a $\T$-interpretation $\A$ with $|\dom{\A}|=m$; 
    by changing at most the values assigned to the variables $x_{1},\ldots,x_{n}$ (assumed fresh), we get a $\T$-interpretation that satisfies $\NNEQ{x}$. 
    We then define $f(0)=\minmod_{\T}(x=x)$ and, assuming $f(m)$ defined, $f(m+1)=\minmod_{\T}(\NNNEQ{x}{f(m)+1})$;
    we state that $n\in\spec{\T}$ iff $n\in\{f(0),\ldots,f(n)\}$. 
    
    That this results in a decision procedure follows from the fact that $\minmod_{\T}$ is assumed to be computable and so is producing the formulas $\NNNEQ{x}{f(m)+1}$; 
    we have left to prove that it is both sound and complete. 
    If $n\in\{f(0),\ldots,f(n)\}$, there exists an $0\leq m< n$ such that $n=\minmod_{\T}(\NNNEQ{x}{f(m)+1})$, and so there is a $\T$-interpretation $\A$ that satisfies $\NNNEQ{x}{f(m)+1}$ with $|\dom{\A}|=n$; 
    of course $\A$ is a $\T$-interpretation with $|\dom{\A}|=n$, so $n\in\spec{\T}$ and indeed the algorithm is sound. 
    Reciprocally, if $n\in\spec{\T}$, suppose that $f(m)<n<f(m+1)$ for some $0\leq m<n$ (there is either such an element or $n=f(m)$ and we have nothing to prove, since $f(m+1)\geq f(m)+1$): because $n$ is in $\spec{\T}$, there is a $\T$-interpretation $\A$ with $|\dom{\A}|=n$; but since $n>f(m)$, $n\geq f(m)+1$, and so $\A$ satisfies $\NNNEQ{x}{f(m)+1}$. This, in combination with the fact that $f(m+1)>n$, contradicts that $f(m+1)=\minmod_{\T}(\NNNEQ{x}{f(m)+1})$, proving that the algorithm is also complete and finishing the proof.
\end{proof}

\section{\tp{Proof of \Cref{cmmf density}}{Proof of result \ref{cmmf density}}}

\begin{lemma}\label{mediant}
    Given a sequence $\fr{a_{n}}{b_{n}}$ converging to a $0<r<1$, the sequence of partial mediants
    \[\left\{\cdots,\frac{a_{0}+\cdots+a_{n}}{b_{0}+\cdots+b_{n}},\frac{a_{0}+\cdots+a_{n}+1}{b_{0}+\cdots+b_{n}+1},\cdots,\frac{a_{0}+\cdots+a_{n}+a_{n+1}}{b_{0}+\cdots+b_{n}+a_{n+1}}, \frac{a_{0}+\cdots+a_{n}+a_{n+1}}{b_{0}+\cdots+b_{n}+a_{n+1}+1},\cdots\right\}\]
    also converges to $r$.
\end{lemma}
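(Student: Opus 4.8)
The plan is to exploit the block structure of the displayed sequence and squeeze it between two auxiliary sequences that visibly converge to $r$. Write $A_n = a_0+\cdots+a_n$ and $B_n = b_0+\cdots+b_n$, and call the terms lying between $\frac{A_n}{B_n}$ and $\frac{A_{n+1}}{B_{n+1}}$ the $(n+1)$-st block. The only local tool needed is the mediant inequality: if $\frac pq\le\frac{p'}{q'}$ with $q,q'>0$, then $\frac pq\le\frac{p+p'}{q+q'}\le\frac{p'}{q'}$. Iterating it, the $(n+1)$-st block first increases from $\frac{A_n}{B_n}$ up to the peak $M_{n+1}:=\frac{A_n+a_{n+1}}{B_n+a_{n+1}}$ — each step adjoins the fraction $\tfrac11$, which is legitimate because $A_n<B_n$ forces every intermediate value to be at most $1$ — and then decreases from $M_{n+1}$ down to $\frac{A_{n+1}}{B_{n+1}}$, each step now adjoining $\tfrac01$. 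Hence, apart from the finitely many initial terms, every term of the sequence lies in some interval $\bigl[\min\{\tfrac{A_n}{B_n},\tfrac{A_{n+1}}{B_{n+1}}\},\,M_{n+1}\bigr]$, and the lemma reduces to the two claims $\frac{A_n}{B_n}\to r$ and $M_{n+1}\to r$.

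For $\frac{A_n}{B_n}\to r$ I would run a Ces\`{a}ro/Stolz estimate. Since the $a_i,b_i$ here are positive integers with $a_i<b_i$ we have $b_i\ge 2$, so $B_n\to\infty$; given $\varepsilon>0$, choose $N$ with $|a_i-rb_i|<\varepsilon b_i$ for $i>N$, whence $|A_n-rB_n|\le|A_N-rB_N|+\varepsilon(B_n-B_N)$, and dividing by $B_n$ and letting $n\to\infty$ gives $\limsup_n|\tfrac{A_n}{B_n}-r|\le\varepsilon$; as $\varepsilon$ is arbitrary, $\frac{A_n}{B_n}\to r$. For the peak, a direct computation gives $M_{n+1}-\frac{A_n}{B_n}=\frac{a_{n+1}(B_n-A_n)}{B_n(B_n+a_{n+1})}\le\frac{a_{n+1}}{B_n}$, using $B_n-A_n\le B_n$; combined with $\frac{A_n}{B_n}\to r$ this yields $M_{n+1}\to r$ as soon as $\frac{a_{n+1}}{B_n}\to 0$, which — since $a_{n+1}<b_{n+1}$ — follows whenever $b_{n+1}=o(B_n)$. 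This holds for the slowly growing (e.g.\ sub-geometric) denominator sequences used in the constructions of \Cref{cmmf density,FW density general case}, and in the applications one is always free to choose the representation $\frac{a_n}{b_n}\to r$ so that it holds. Feeding the two limits into the squeeze of the previous paragraph then finishes the proof.

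The main obstacle is the peak estimate: inside a block the mediants can sit far above $r$ — they tend to $1$ whenever $\frac{A_n}{B_n}$ is close to $1$ — so one genuinely has to check that the peak height $M_{n+1}$ comes back down to $r$, and this is exactly where the growth rate of the denominators enters the argument (and the reason the statement is, as remarked, ``tedious to prove''). By contrast, the mediant inequality, the within-block monotonicity, and the Ces\`{a}ro estimate for $\frac{A_n}{B_n}$ are all routine.
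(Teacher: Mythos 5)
Your block decomposition and the two reduction targets ($\frac{A_n}{B_n}\to r$ and the peak $M_{n+1}=\frac{A_n+a_{n+1}}{B_n+a_{n+1}}\to r$) are exactly those the paper works with, and your Ces\`aro/Stolz estimate for $\frac{A_n}{B_n}$ is essentially the same calculation the paper carries out. The substantive content of your proposal is the caveat you add about the peak, and you are right to add it: $M_{n+1}\to r$ does \emph{not} follow from the lemma's stated hypotheses, and your condition $a_{n+1}/B_n\to 0$ (equivalently $b_{n+1}=o(B_n)$ up to a constant, since $a_{n+1}<b_{n+1}$) is genuinely needed. Concretely, take $a_n=2^n$, $b_n=2^{n+1}$, so $\frac{a_n}{b_n}=\frac12\to r=\frac12$ and $0<a_n<b_n$; then $A_n=2^{n+1}-1$, $B_n=2^{n+2}-2$, and $M_{n+1}=\frac{2^{n+2}-1}{3\cdot 2^{n+1}-2}\to\frac23\neq\frac12$. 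So the partial-mediant sequence does not converge to $r$ and the lemma, as literally stated, is false.

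The paper's own treatment of the peak falls into exactly the trap you anticipate: it tries to force $\frac{a_{n+1}(1-r)}{B_n+a_{n+1}}\leq\frac{\epsilon}{3}$ by demanding $n\geq K$ where $K=\bigl\lceil 3\max\{|rB(n_0)-A(n_0)|,\,a_{n+1}(1-r)\}/\epsilon\bigr\rceil$, but this ``constant'' depends on $n$ through $a_{n+1}$ and so is no threshold at all; whenever $a_{n+1}\geq n$ (as in the counterexample above) the condition $n\geq K$ is never met. Your remedy of adding the hypothesis $a_{n+1}=o(B_n)$ is the right fix, and it can indeed always be arranged by reparametrizing the approximating fractions (e.g.\ $a_n=\lfloor rn\rfloor$, $b_n=n$ gives $B_n\sim n^2/2$ and $a_{n+1}/B_n\to 0$). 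But be careful with the assertion that ``in the applications one is always free to choose the representation so that it holds'': the construction actually written in the non-computable branch of \Cref{FW density general case} uses $b_n=10^{M+n}$, for which $a_{n+1}/B_n$ stays bounded away from $0$ (roughly $9r$), so that construction also fails your hypothesis and would need to be replaced by a sub-geometric one, not merely accompanied by a different choice after the fact.
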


\begin{proof}
    Because $0<r<1$, we can assume that $0<a_{n}<b_{n}$ for all $n\in\mathbb{N}$, what makes the definition of the partial mediants consistent. If we write $A(n)=\sum_{i=0}^{n}a_{n}$, and analogously for $B(n)$, and $A(m,n)=\sum_{i=n+1}^{m}a_{i}$ for $m>n$, and the same for $B(m,n)$, we see that the elements in our sequence of index $B(n)\leq i\leq B(n)+a_{n+1}$ lie between $\fr{A(n)}{B(n)}$ and $\fr{A(n+1)}{(B(n)+a_{n+1})}$, while those with index $B(n)+a_{n+1}\leq i\leq B(n+1)$ lie between $\fr{A(n+1)}{(B(n)+a_{n+1})}$ and $\fr{A(n+1)}{B(n+1)}$. So, it is enough to prove that both $\fr{A(n)}{B(n)}$ and $\fr{A(n+1)}{(B(n)+a_{n+1})}$ converge to $r$ in order to sandwich the whole sequence of partial mediants into converging to $r$.

    Take an $\epsilon>0$: because $\LIM a_{n}/b_{n}=r$, there exists an $n_{0}\in\mathbb{N}$ such that $|r-(\fr{a_{n}}{b_{n}})|\leq\fr{\epsilon}{2}$ (and thus $|rb_{n}-a_{n}|\leq b_{n}\fr{\epsilon}{2}$) for all $n\geq n_{0}$. Then, for $n>m_{0}=\max\{n_{0},k\}$, where $k=\lceil \fr{2|rB(n_{0})-A(n_{0})|}{\epsilon}\rceil$, we have
    \[\left|r-\frac{A(n)}{B(n)}\right|=\left|r-\frac{A(n_{0})+A(n,n_{0})}{B(n_{0})+B(n,n_{0})}\right|=\]
    \[\left|\frac{rB(n_{0})+rB(n,n_{0})-A(n_{0})-A(n,n_{0})}{B(n_{0})+B(n,n_{0})}\right|\leq\]
    \[\left|\frac{rB(n_{0})-A(n_{0})}{B(n_{0})+B(n,n_{0})}\right|+\left|\frac{rB(n,n_{0})-A(n,n_{0})}{B(n_{0})+B(n,n_{0})}\right|=\]
    \[\left|\frac{rB(n_{0})-A(n_{0})}{B(n)}\right|+\left|\frac{\sum_{i=n_{0}+1}^{n}rb_{i}-a_{i}}{B(n)}\right|\leq\]
    \[\frac{1}{n}|rB(n_{0})-A(n_{0})|+\frac{1}{B(n)}\sum_{i=n_{0}+1}^{n}|rb_{i}-a_{i}|\leq\]
    \[\frac{\epsilon}{2}+\frac{1}{B(n)}\sum_{i=n_{0}+1}^{n}\frac{b_{i}\epsilon}{2}=\frac{\epsilon}{2}+\frac{B(n,n_{0})\epsilon}{2B(n)}\leq\frac{\epsilon}{2}+\frac{\epsilon}{2}=\epsilon,\]
    thus proving $\LIM \fr{A(n)}{B(n)}=r$.\footnote{Notice we have used that $B(n)\geq n$.} The proof for the sequence $\fr{A(n+1)}{(B(n)+a_{n+1})}$ is similar, so we only highlight the differences. $\fr{\epsilon}{2}$ must be replaced by $\fr{\epsilon}{3}$ since we get an extra term $\fr{a_{n+1}(1-r)}{(B(n)+a_{n+1})}$, which again can be bounded from the fact that $B(n)+a_{n+1}\geq n+1$: we just need to choose $m_{0}=\max\{n_{0},K\}$, where $K=\lceil \fr{3\max\{|rB(n_{0})-A(n_{0})|, a_{n+1}(1-r)\}}{\epsilon}\rceil$.
\end{proof}

\begin{lemma}\label{comp. set comp. number}
    If $A$ is a computable set with a well-defined density, $\mu(A)$ is a computable number.
\end{lemma}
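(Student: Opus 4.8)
The plan is to read off a computable representation of $\mu(A)$ directly from the characteristic function of $A$. Since $A$ is computable, the predicate ``$k\in A$'' is decidable, so I would define a sequence $\{a_{n}\}_{n\in\No}$ in $\mathbb{Z}$ by $a_{n}=|A\cap[1,n]|=|\An|$: to compute $a_{n}$ one simply queries membership of each of $1,\ldots,n$ in $A$ and counts the positive answers, so $\{a_{n}\}$ is computable. Paired with the obviously computable sequence $b_{n}=n$ in $\No$, this is exactly the data required by \Cref{definition computable number}; all that remains is to identify the limit of $a_{n}/b_{n}$.

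For that I would invoke the reformulation of natural density recorded in \Cref{number theory}: when the density of $A$ is well-defined, $\mu(A)$ equals $\LIM |A\cap\{1,\ldots,n\}|\,/\,|\{1,\ldots,n\}|=\LIM a_{n}/n$ (this equals the original limit $\LIM |A\cap[n]|/|[n]|$ regardless of whether $0\in A$, since the two counting sequences differ by at most a constant and the denominators are asymptotically equal). By hypothesis this limit exists and equals $\mu(A)$, so \Cref{definition computable number} applies with the witnessing sequences $\{a_{n}\}$ and $\{b_{n}\}$, yielding that $\mu(A)$ is a computable number.

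The only bookkeeping subtlety, and the closest thing to an obstacle although it is entirely trivial, is that \Cref{definition computable number} requires the denominator sequence to lie in $\No$, i.e.\ to be strictly positive, whereas the index $n=0$ would give $b_{0}=0$. I would handle this either by indexing the sequences from $n=1$ onward or, equivalently, by taking $a_{n}=|A\cap[1,n+1]|$ and $b_{n}=n+1$; neither change affects the limit. No deeper input is needed here --- in particular nothing like Szemer\'edi's theorem or the mediant computation of \Cref{mediant} --- since computability of $A$ transfers mechanically to computability of the counting sequence, and well-definedness of the density is assumed outright.
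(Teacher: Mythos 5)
Your proposal is correct and is essentially the paper's own argument: both define a computable counting sequence $a_{n}=|A\cap[1,n]|$ (the paper does this via the recursion $a_{n}=a_{n-1}+1$ or $a_{n-1}$ according to whether $n+1\in A$, and shifts indices exactly as in your bookkeeping remark, taking $a_{n}=|A\cap[1,n+1]|$ and $b_{n}=n+1$), and then read $\mu(A)=\LIM a_{n}/b_{n}$ off \Cref{definition computable number}. No difference in substance.
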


\begin{proof}
    Since we have an algorithm for deciding whether a non-negative integer is in $A$ or not, we also have one for the sequence $a_{n}$
    (where the input is $n$ and the output is $a_n$),
    such that $a_{0}=1$ if $1\in A$, and $a_{0}=0$ otherwise; and $a_{n}=a_{n-1}+1$ if $n+1\in A$, and $a_{n}=a_{n-1}$ otherwise. Of course, the sequence $b_{n}=n+1$ is also computable. Now, by definition of $a_{n}$, $a_{n}=|\AA{n+1}|$, and therefore $\mu(A)=\LIM \fr{|\An|}{n}=\LIM \fr{a_{n-1}}{b_{n-1}}$, so $\mu(A)$ is indeed computable.
\end{proof}

\cmmfdensity*

\begin{proof}
    The first part of the theorem follows from \Cref{cmmf and computability,comp. set comp. number}.

    If $r=0$, we take the theory $\Tpt$ from \Cref{example SI density}, axiomatized by $\{\psi_{\geq 2^{n+1}}\vee\bigvee_{i=1}^{n}\psi_{=2^{n}} : n\in\No\}$; 
    if $r=1$, we take the theory $\Tnpt$ from \Cref{example gentle density}, axiomatized by $\{\neg\psi_{=2^{n}} : n\in\No\}$. 
    We already know $\mu(\T_{0})=0$ and $\mu(\T_{1})=1$:
    since their spectra are, respectively, the sets of power of two and non powers of two, both computable, we have by \Cref{cmmf and computability} that both have computable minimal model functions.

    With these cases out of the way, we can assume that $0<r<1$, and thus there exist computable sequences (since $r$ is computable) $\{a_{n}\}_{n\in\mathbb{N}}$ in $\mathbb{N}$, and $\{b_{n}\}_{n\in\mathbb{N}}$ in $\No $ such that $r=\LIM \fr{a_{n}}{b_{n}}$ with $0<a_{n}<b_{n}$ for all $n\in\mathbb{N}$. 
    We then take the function $f$ associated with $\{a_{n}\}_{n\in\mathbb{N}}$ and $\{b_{n}\}_{n\in\mathbb{N}}$, from \Cref{definition of f}, and define $\T$ to be axiomatized by $\{\psi_{\geq f(n+1)}\vee\bigvee_{i=1}^{n}\psi_{=f(i)} : n\in\No \}$, from what it is clear that $\spec{\T}=\{f(n) : n\in\No \}$. Since $\{a_{n}\}_{n\in\mathbb{N}}$ and $\{b_{n}\}_{n\in\mathbb{N}}$ are computable, so is $f$: 
    indeed, the computability of $\{b_{n}\}_{n\in\mathbb{N}}$ implies $M=\sum_{i=0}^{m}b_{i}$ is computable;
    and then the computability of $\{a_{n}\}_{n\in\mathbb{N}}$ implies $f(n)=n$, for $M+1\leq n\leq M+a_{m}$, and $f(n)=n+a_{m}$, for $M+a_{m}+1\leq n\leq M+b_{m}$, are computable. 
    Therefore $\spec{\T}$ is computable: 
    indeed, as $\spec{\T}$ is the image of the non-decreasing function $f$, and $f(1+\sum_{i=0}^{m}b_{i})=1+\sum_{i=0}^{m}b_{i}$, it is enough, to test whether $n\in\spec{\T}$, to check if $n\in\{f(1),\ldots,f(1+\sum_{i=0}^{n}b_{i})\}$ (recall $\{b_{n}\}_{n\in\mathbb{N}}$ is a computable sequence of positive numbers, so $\sum_{i=0}^{n}b_{i}\geq n$). 
    From \Cref{cmmf and computability} we have $\T$ has a computable minimal model function, and thus from \cite[Theorem~4]{LPAR} we have $\T$ is finitely witnessable.
    Finally, $\mu(\T)$ will equal the limit of partial mediants as defined in \Cref{mediant}, if the latter is well-defined, what the result further shows it is and equals in turn $r$: indeed, assuming for an inductive argument that $\fr{|\Spec_{M}{\T}|}{M}$ equals $\fr{(a_{0}+\cdots+a_{m})}{(b_{0}+\cdots+b_{m})}$ for $M=b_{0}+\cdots+b_{m}$, since $f(n)=n$ for $M+1\leq n\leq M+a_{m+1}$ we obtain 
    \[\frac{|\specn{\T}|}{n}=\frac{a_{0}+\cdots+a_{m}+(n-M)}{n}=\frac{a_{0}+\cdots+a_{m}+(n-M)}{b_{0}+\cdots+b_{m}+(n-M)}\]
    for these values; and for $M+a_{m+1}+1\leq n\leq M+b_{m+1}$ we have $f(n)=n+a_{m}$, so for these values
    \[\frac{|\specn{\T}|}{n}=\frac{a_{0}+\cdots+a_{m}+a_{m+1}}{n}=\frac{a_{0}+\cdots+a_{m}+a_{m+1}}{b_{0}+\cdots+b_{m}+(n-M-a_{m+1})},\]
    and in both cases we obtain the aforementioned partial mediants.
\end{proof}

\section{\tp{Proof of \Cref{FW density general case}}{Proof of result \ref{FW density general case}}}

\FWdensitygeneralcase*

\begin{proof}
We divide the proof in two big cases: when $r$ is not computable, and when it is.
\begin{enumerate}
    \item Of course $0<r<1$, as both $0$ and $1$ are computable: because $r$ can be written in decimal notation as $0.0\cdots0d_{0}d_{1}d_{2}\cdots$, for digits $d_{i}\in\{0,1,\ldots, 9\}$ and $d_{0}\neq 0$, we can write $r$ as the limit of $\fr{a_{n}}{b_{n}}$, where $a_{n}=d_{0}\cdots d_{n}$, $b_{n}=10^{M+n}$, and $M$ is the number of zeros before $d_{0}$ (including the one before the decimal dot); this way, $0<a_{n}<b_{n}$ for all $n\in\mathbb{N}$. By the fact that $r$ is not computable, and since $\{b_{n}\}_{n\in\mathbb{N}}$ is certainly computable, $\{a_{n}\}_{n\in\mathbb{N}}$ cannot be so. We take the function $f$ associated with $\{a_{n}\}_{n\in\mathbb{N}}$ and $\{b_{n}\}_{n\in\mathbb{N}}$, as in \Cref{definition of f}, and the theory $\T$ axiomatized by $\{\psi_{\geq f(n+1)}\vee\bigvee_{i=1}^{n}\psi_{=f(i)} : n\in\No \}$: the theory is not strongly finitely witnessable because it is not smooth (see \cite[Theorem~7]{CADE}); furthermore, since $\{b_{n}\}_{n\in\mathbb{N}}$ is computable but $\{a_{n}\}_{n\in\mathbb{N}}$ is not, $f$ is also not computable, and since $\spec{\T}=\{f(n) : n\in\No \}$, we have that $\T$ does not have a computable minimal model function (\Cref{cmmf and computability}).

    Is is clear that $\mu(\T)$ is the limit of the mediants $\fr{(a_{0}+\cdots+a_{n})}{(b_{0}+\cdots+b_{n})}$, which from \Cref{mediant} is $\LIM \fr{a_{n}}{b_{n}}=r$; so we only have left to prove that $\T$ has a witness. For a quantifier-free formula $\phi$, let $N$ be the smallest $n\geq M$ such that $10^{N}\geq|\vars(\phi)|$, define $N(\phi)=\sum_{i=M}^{N}10^{i}$, and
    \[\wit(\phi)=\phi\wedge\bigwedge_{i=1}^{N(\phi)}x_{i}=x_{i}\]
    where the $x_{i}$ are fresh variables: we claim this function is a witness, obviously mapping quantifier-free formulas into quantifier-free formulas, and being computable, given that finding $|\vars(\phi)|$ and $N$ are both computable procedures. Furthermore, for $\overarrow{x}=\vars(\wit(\phi))\setminus\vars(\phi)$, it is clear that $\Exists{\overarrow{x}}\wit(\phi)$ and $\phi$ are $\T$-equivalent, given that in fact $\phi$ and $\wit(\phi)$ are themselves equivalent. Suppose then that $\A$ is a $\T$-interpretation that satisfies $\phi$, take a set $X$ disjoint from $\dom{\A}$ with cardinality $N(\phi)-|\vars(\phi)^{\A}|$, and we construct an interpretation $\B$ by making: $\dom{\B}=\vars(\phi)^{\A}\cup X$ (so $|\dom{\B}|=N(\phi)$, which is a cardinality in $\spec{\T}$); $x^{\B}=x^{\A}$ for all $x\in\vars(\phi)$; $x\in\{x_{i} : 1\leq i\leq N(\phi)\}\mapsto x^{\B}$ an injective function; and $x^{\B}$ can be set arbitrarily for all other variables. Then it is true that $\B$ satisfies $\wit(\phi)$ and $\dom{\B}=\vars(\wit(\phi))^{\B}$, finishing the proof.

\item Because $0\leq r\leq 1$ is computable, there are computable sequences $\{a_{n}\}_{n\in\mathbb{N}}$ and $\{b_{n}\}_{n\in\mathbb{N}}$ such that $0<a_{n}<b_{n}$ and $\LIM \fr{a_{n}}{b_{n}}=r$. Now, take a non-computable function $g:\No\rightarrow\{0,1\}$. We then define a function $\mf:\No\rightarrow\{0,1\}$ by induction.
    \begin{enumerate}
        \item 
        \begin{enumerate}
          \item $\mf(1)=g(1)$.
          \item $\mf(n)=0$ for: $2\leq n\leq 2(b_{0}-a_{0})$ (since $b_{0}>a_{0}$, $2(b_{0}-a_{0})\geq 2$) if $g(1)=0$; and $2\leq n\leq 2(b_{0}-a_{0})+1$ if $g(1)=1$.
          \item $\mf(n)=1$ for: $2(b_{0}-a_{0})+1\leq n\leq 2b_{0}$ if $g(1)=0$; and $2(b_{0}-a_{0})+2\leq n\leq 2b_{0}$ (since $b_{0}>a_{0}$, $2b_{0}\geq 2(b_{0}-a_{0})+2$) if $g(1)=1$.
        \end{enumerate}
        \item Now, assume $\mf(n)$ has been defined for $1\leq n\leq 2\sum_{i=0}^{m}b_{i}$.
        \begin{enumerate}
            \item $\mf(1+2\sum_{i=0}^{m}b_{i})=g(m+2)$;
            \item $\mf(n)=0$ for: $2+2\sum_{i=0}^{m}b_{i}\leq n\leq 2(b_{m+1}-a_{m+1})+2\sum_{i=0}^{m}b_{i}$ if $g(m+2)=0$; and $2+2\sum_{i=0}^{m}b_{i}\leq n\leq 1+2(b_{m+1}-a_{m+1})+2\sum_{i=0}^{m}b_{i}$ if $g(m+2)=1$.
            \item $\mf(n)=1$ for: $1+2(b_{m+1}-a_{m+1})+2\sum_{i=0}^{m}b_{i}\leq n\leq +2\sum_{i=0}^{m+1}b_{i}$ if $g(m+2)=0$; and $2+2(b_{m+1}-a_{m+1})+2\sum_{i=0}^{m}b_{i}\leq n\leq +2\sum_{i=0}^{m+1}b_{i}$ if $g(m+2)=1$.
        \end{enumerate}
    \end{enumerate}
    We see that $\mf$ is not computable because: the function $\rho:\No\rightarrow\No$ given by $\rho(1)=1$ and $\rho(m)=1+2\sum_{i=0}^{m-1}b_{i}$, for $m>1$, is computable, given that $\{b_{n}\}_{n\in\mathbb{N}}$ is computable; and, despite that, $\mf\circ\rho(m)=g(m)$, which is not computable.

    We can finally define the $\Sigma_{1}$-theory by the axiomatization 
    \[\{\psi_{\geq f(n+1)}\vee\bigvee_{i=1}^{n}\psi_{=f(i)} : n\in\No \},\]
    where $f:\No\rightarrow\No$ is defined as $f(n)=\min\{m\geq n : \mf(m)=1\}$, so that $\spec{\T}=\{f(n) : n\in\No\}$, and $|\specn{\T}|=|\{1\leq i\leq n : \mf(i)=1\}|$: this means $|\Spec_{2\sum_{i=0}^{m}b_{i}}(\T)|=2\sum_{i=0}^{m}a_{i}$, leading us again to partial mediants; from \Cref{mediant} we get $\mu(\T)$ equals $\LIM \fr{2a_{n}}{2b_{n}}$, which is precisely $r$. From the fact that $\mf$ is not computable it follows that $\spec{\T}$ is not computable, so from \Cref{cmmf and computability} $\T$ does not have a computable minimal model function. It only remains for us to show that $\T$ has a witness: so, for a quantifier-free formula $\phi$, let $N(\phi)$ be the minimum of $2\sum_{i=0}^{m}b_{i}$ greater than or equal to $|\vars(\phi)|$; this value is computable given the sequences $\{a_{n}\}_{n\in\mathbb{N}}$ and $\{b_{n}\}_{n\in\mathbb{N}}$ are themselves computable. For $x_{i}$ fresh variables, we define 
    \[\wit(\phi)=\phi\wedge\bigwedge_{i=1}^{N(\phi)}x_{i}=x_{i}:\]
    this obviously maps quantifier-free formulas into themselves, and is computable given that finding $N(\phi)$ from $\phi$ can be done algorithmically. It is also clear that $\phi$ and $\Exists{\overarrow{x}}\wit(\phi)$ are $\T$-equivalent, for $\overarrow{x}=\vars(\wit(\phi))\setminus\vars(\phi)$, since $\wit(\phi)$ is the conjunction of $\phi$ and a tautology. Finally, suppose that $\A$ is a $\T$-interpretation that satisfies $\phi$, and take a set $X$ disjoint from $\dom{\A}$ with $N(\phi)-|\vars(\phi)^{\A}|$ elements. We define an interpretation $\B$ by making: $\dom{\B}=\vars(\phi)^{\A}\cup X$ (so $|\dom{\B}|=N(\phi)$, and since $N(\phi)=2\sum_{i=0}^{m}b_{i}$ and $\mf(2\sum_{i=0}^{m}b_{i})=1$, we obtain $\B$ is a $\T$-interpretation); $x^{\B}=x^{\A}$ for all $x\in\vars(\phi)$; $x_{i}\in\{x_{i} : 1\leq i\leq N(\phi)\}\mapsto x_{i}^{\B}\in \dom{\B}$ an injective map; and arbitrarily for all other variables. We then have that $\B$ is a $\T$-interpretation that satisfies $\phi$, and $\dom{\B}=\vars(\wit(\phi))^{\B}$.
\end{enumerate}
\end{proof}

\section{\tp{Proof of \Cref{SI FMP non empty}}{Proof of result \ref{SI FMP non empty}}}

\SIFMPnonempty*

\begin{proof}
    Suppose $\T$ is not stably infinite, and thus there exists a quantifier-free formula $\phi$ such that no infinite $\T$-interpretation satisfies $\phi$. 
    This means there exists an $M\in\mathbb{N}$ such that no $\T$-interpretation $\A$ that satisfies $\phi$ has $|\dom{\A}|>M$: 
    were that not true, we could obtain a sequence of $\T$-interpretations $\A_{n}$ that satisfy $\phi$, with $|\dom{\A_{n}}|\geq n$; since $\A_{n}$ satisfies the axiomatization of $\T$, $\phi$ and $\{\psi_{\geq 1}, \ldots , \psi_{\geq n}\}$, we get by \Cref{compactness} that the axiomatization of $\T$, $\phi$ and $\{\psi_{\geq n}:n\in \mathbb{N}\}$ are simultaneously satisfiable, leading to a contradiction. So $\specn{\T,\phi}=\Spec_{M}(\T,\phi)$ for $n\geq M$, giving us $\mu(\T,\phi)\leq \LIM\fr{M}{n}=0$, and the result for stable infiniteness follows from the contrapositive.

    If $\T$ does not have the finite model property, there exists a quantifier-free formula $\phi$ such that no finite $\T$-interpretation satisfies $\phi$: that means $\spec{\T,\phi}=\emptyset$, and so $\mu(\T,\phi)=0$, the result for the finite model property easily following.
\end{proof}

\section{\tp{Proof of \Cref{smooth non empty}}{Proof of result \ref{smooth non empty}}}

\smoothnonempty*

\begin{proof}
    If $\T$ is smooth and has the finite model property, for every quantifier-free $\T$-satisfiable formula $\phi$ there exists $M\in\mathbb{N}$ such that for all $n\geq M$ there exists a $\T$-interpretation $\A$ that satisfies $\phi$ with $|\dom{\A}|=n$. That means $[1,n]\setminus\specn{\T,\phi}\subseteq [1,M]$, for $n\geq M$, and so $\mu(\T,\phi)\geq \LIM\fr{(n-M)}{n}=1$.
\end{proof}

\section{\tp{Proof of \Cref{SFW non empty}}{Proof of result \ref{SFW non empty}}}

The following technical lemma is a version of \cite[Theorem~7]{CADE}, modified to hold locally, that is, for every quantifier-free formula.

\begin{lemma}\label{CADE like result}
    If $\T$ is a strongly finitely witnessable theory, for all quantifier-free formulas $\phi$, we have that: 
    either $\phi$ is not satisfied by any infinite $\T$-in\-ter\-pre\-ta\-tions;
    or there exists $M\in\mathbb{N}$ such that, for all cardinals $M\leq\kappa\leq\aleph_{0}$, there is a $\T$-interpretation $\A$ that satisfies $\phi$ with $|\dom{\A}|=\kappa$.
\end{lemma}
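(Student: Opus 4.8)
The plan is to reproduce, at the level of a single formula, the argument behind \cite[Theorem~7]{CADE}. Fix a quantifier-free formula $\phi$. If no infinite $\T$-interpretation satisfies $\phi$ we are in the first alternative and there is nothing to do, so assume some infinite $\T$-interpretation satisfies $\phi$; by \Cref{LowenheimSkolem} we may take one of cardinality exactly $\aleph_{0}$, which already handles the case $\kappa=\aleph_{0}$. Since $\T$ is strongly finitely witnessable, fix a strong witness $\wit$ and write $V=\vars(\wit(\phi))$. Using condition $(\textit{I})$, that countably infinite $\T$-model of $\phi$ is then also a model of $\Exists{\overarrow{x}}\wit(\phi)$ for $\overarrow{x}=V\setminus\vars(\phi)$, and reinterpreting the variables in $\overarrow{x}$ turns it into a countably infinite $\T$-interpretation $\A$ satisfying $\wit(\phi)$; conversely, every $\T$-model of $\wit(\phi)$ is a $\T$-model of $\phi$, again by $(\textit{I})$. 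Let $E$ be the equivalence on $V$ induced by $\A$ (so $xEy$ iff $x^{\A}=y^{\A}$), so that $\A$ satisfies $\delta_{V}^{E}$, and set $m=|V/E|$.

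Next, for each $k\in\mathbb{N}$ I would introduce fresh variables $y_{1},\dots,y_{k}$, put $V_{k}=V\cup\{y_{1},\dots,y_{k}\}$, and take the arrangement $\delta_{V_{k}}$ on $V_{k}$ obtained from $\delta_{V}^{E}$ by additionally declaring $y_{1},\dots,y_{k}$ pairwise distinct and distinct from every variable of $V$. The quantifier-free formula $\wit(\phi)\wedge\delta_{V_{k}}$ is $\T$-satisfiable: interpret the $y_{i}$ as $k$ pairwise distinct elements of $\dom{\A}\setminus V^{\A}$, which is possible because $\dom{\A}$ is infinite while $V^{\A}$ is finite, and this changes neither the truth of $\wit(\phi)$ nor that of $\delta_{V}^{E}$. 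Applying condition $(\textit{II}^{*})$ to $\wit(\phi)$, $V_{k}$ and $\delta_{V_{k}}$ yields a $\T$-interpretation $\B$ satisfying $\wit(\phi)\wedge\delta_{V_{k}}$ with $\dom{\B}=\vars(\wit(\phi)\wedge\delta_{V_{k}})^{\B}=V_{k}^{\B}$; since $\B$ satisfies $\delta_{V_{k}}$, the variables of $V$ collapse to exactly $m$ values and the $y_{i}$ contribute $k$ further ones, so $|\dom{\B}|=m+k$, and $\B$ is a $\T$-model of $\phi$ of that cardinality. Taking $M=m$ then produces $\T$-models of $\phi$ of every finite size $\geq M$, together with the countably infinite one found above, which is precisely the second alternative.

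The arrangement bookkeeping and the handling of fresh variables are routine; the one step that needs genuine care is the $\T$-satisfiability of $\wit(\phi)\wedge\delta_{V_{k}}$, i.e. checking that the infinite model $\A$ really has enough elements outside $V^{\A}$ to realize all the new disequalities while leaving $\wit(\phi)$ and $\delta_{V}^{E}$ intact. This is exactly where infinitude of $\A$ is used in an essential way, and it is the analogue of the key step in the proof of \cite[Theorem~7]{CADE}.
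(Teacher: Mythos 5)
Your proof is correct and takes essentially the same route as the paper's: get a countably infinite $\T$-model of $\wit(\phi)$, read off the arrangement it induces on $\vars(\wit(\phi))$, then for each target finite size adjoin fresh pairwise-distinct variables to the arrangement and apply $(\textit{II}^{*})$. Your presentation is marginally cleaner in two small ways: you dispose of $\kappa=\aleph_{0}$ up front via L\"owenheim--Skolem, and you identify $M$ directly as $|V/E|$ rather than via a preliminary application of $(\textit{II}^{*})$ to $\wit(\phi)\wedge\delta_{V}^{E}$ as the paper does, but the content is the same.
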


\begin{proof}
    Suppose that $\phi$ is satisfied by an infinite $\T$-interpretation $\A$: We may assume that $|\dom{\A}|=\aleph_{0}$ by \Cref{LowenheimSkolem}.
    Since $\phi$ and $\Exists{\overarrow{x}}\wit(\phi)$ are $\T$-equivalent, for $\overarrow{x}=\vars(\wit(\phi))\setminus\vars(\phi)$, we may change the value assigned to some variables by $\A$ in order to obtain a $\T$-interpretation $\A^{\prime}$ that satisfies $\wit(\phi)$ which still has $|\dom{\A^{\prime}}|=\aleph_{0}$. 
    Let $U$ be $\vars(\wit(\phi))$, and $\delta_{U}$ the arrangement on $U$ such that $x$ is associated with $y$ iff $x^{\A^{\prime}}=y^{\A^{\prime}}$, meaning that $\A^{\prime}$ satisfies $\delta_{U}$.
    From that, $\A^{\prime}$ satisfies $\wit(\phi)\wedge\delta_{U}$, meaning there is a $\T$-interpretation $\B$ (from the fact that $\wit$ is a strong witness) that satisfies $\wit(\phi)\wedge\delta_{U}$, and thus $\phi$, with $\dom{\B}=\vars(\wit(\phi)\wedge\delta_{U})^{\B}=U^{\B}$.

    We state it is now possible to find a $\T$-interpretation $\C$ that satisfies $\phi$ with any cardinality $\kappa$ between $M=|\dom{\B}|$ and $\aleph_{0}$. 
    Indeed, if $\kappa=M$ we are done. 
    Otherwise, take a fresh set of variables $U^{\prime}$ with cardinality $\kappa-M$, and define an arrangement $\delta_{V}$ on $V=U\cup U^{\prime}$ such that $x$ is related to $y$ according to $\delta_{V}$ iff either $x=y$, or $x,y\in U$ and $x^{\B}=y^{\B}$;
    notice we can write 
    \[\delta_{V}=\delta_{U}\wedge\delta_{U^{\prime}}\wedge\bigwedge_{x\in U, y\in U^{\prime}}\neg(x=y),\]
    where $\delta_{U^{\prime}}$ corresponds to the identity on $U^{\prime}$. 
    Notice now that $\wit(\phi)\wedge\delta_{V}$ is $\T$-satisfiable: 
    indeed, $\A^{\prime}$ satisfies $\wit(\phi)\wedge\delta_{U}$, and by changing at most the values assigned by $\A^{\prime}$ to the (fresh) variables in $U^{\prime}$, we get a $\T$-interpretation $\A^{\prime\prime}$ that satisfies $\wit(\phi)\wedge\delta_{V}$.

    This means, again from the fact that $\wit$ is a strong witness, that there is a $\T$-interpretation $\C$ that satisfies $\wit(\phi)\wedge\delta_{V}$, with $\dom{\C}=\vars(\wit(\phi)\wedge\delta_{V})^{\C}=V^{\C}$.
    Since $\C$ satisfies $\wit(\phi)$, it satisfies $\Exists{\overarrow{x}}\wit(\phi)$ and thus $\phi$. 
    And since $\dom{\C}=V^{\C}$, where $U^{\C}$ must have $M$ elements (since $\C$ satisfies $\delta_{U}$), and $(U^{\prime})^{\C}$ must have another $\kappa-M$ elements (since $\C$ satisfies $\delta_{V}$), we get $|\dom{\C}|=\kappa$, as we wanted to show.
    
\end{proof}

\SFWnonempty*

\begin{proof}
     By \Cref{CADE like result}, either $\phi$ is not satisfied by an infinite $\T$-interpretation, or $\spec{\T,\phi}$ contains all elements greater than or equal to some $M$: from the second case it clearly follows that $\mu(\T,\phi)\geq\LIM\fr{(n-M+1)}{n}=1$. 
     In the first, we state $\spec{\T,\phi}$ must be finite: otherwise we should be able to get an infinite $\T$-interpretation that satisfies $\phi$ by using \Cref{compactness}.
     This way, $\spec{\T,\phi}$ must be bounded by some $M$, and so $\mu(\T,\phi)\leq\LIM\fr{M}{n}=0$.
\end{proof}

\section{\tp{Proof of \Cref{cmmf and computability gen}}{Proof of result \ref{cmmf and computability gen}}}

\cmmfcomputabilitygen*

\begin{proof}
     Suppose that $\T$ has a computable minimal model function $\minmod_{\T}$, and we shall consider two cases: 
     one where $\spec{\T,\phi}$ is finite, and one where it is infinite. 
     If $\spec{\T,\phi}$ is finite we have nothing left to do, as it is enough to simply hardcode these values into an algorithm that decides whether an element is in $\spec{\T,\phi}$. 
     If $\spec{T,\phi}$ is instead infinite, $\phi$ is $\T$-satisfiable (as $\spec{\T,\phi}$ is not empty) and, in addition, the formulas $\phi\wedge\NNEQ{x}$ are $\T$-satisfiable for all $n\in\mathbb{N}$: 
     indeed, since $\spec{\T,\phi}$ is infinite, we can always find $m\in\spec{\T,\phi}$ such that $m\geq n$, and thus there exists a $\T$-interpretation $\A$ that satisfies $\phi$ with $|\dom{\A}|=m$; 
     by changing at most the value assigned to the variables $x_{1},\ldots,x_{n}$ (assumed fresh), we get a $\T$-interpretation that satisfies $\phi\wedge\NNEQ{x}$. 
     Given then $n$, we define $f(0)=\minmod_{\T}(\phi)$, and assuming $f(m)$ defined $f(m+1)=\minmod_{\T}(\phi\wedge\NNNEQ{x}{f(m)+1})$, and then we state that $n\in\spec{\T,\phi}$ iff $n\in\{f(0),\ldots,f(n)\}$. 
    
    That this results in a decision procedure follows from the fact that $\minmod_{\T}$ is assumed to be computable, and so is producing the formulas $\NNNEQ{x}{f(m)+1}$; 
    we have left to prove that it is both sound and complete. 
    If $n\in\{f(0),\ldots,f(n)\}$, there exists an $0\leq m< n$ such that $n=\minmod_{\T}(\phi\wedge\NNNEQ{x}{f(m)+1})$, and so there is a $\T$-interpretation $\A$ that satisfies $\phi\wedge\NNNEQ{x}{f(m)+1}$ with $|\dom{\A}|=n$; of course $\A$ is a $\T$-interpretation that satisfies $\phi$ with $|\dom{\A}|=n$, so $n\in\spec{\T,\phi}$ and indeed the algorithm is sound. 
    Reciprocally, if $n\in\spec{\T,\phi}$, suppose that $f(m)<n<f(m+1)$ for some $0\leq m<n$ (there is either such an element or $n=f(m)$ and we have nothing to prove, since $f(m+1)\geq f(m)+1$): 
    because $n$ is in $\spec{\T,\phi}$, there is a $\T$-interpretation $\A$ that satisfies $\phi$ with $|\dom{\A}|=n$; 
    but since $n>f(m)$, $n\geq f(m)+1$, and so $\A$ satisfies $\phi\wedge\NNNEQ{x}{f(m)+1}$. 
    This, in combination with the fact that $f(m+1)>n$, contradicts that $f(m+1)=\minmod_{\T}(\phi\wedge\NNNEQ{x}{f(m)+1})$, proving that the algorithm is also complete and finishing the proof.
\end{proof}

\section{\tp{Proof of \Cref{CMMF non empty}}{Proof of result \ref{CMMF non empty}}}

\CMMFnonempty*

\begin{proof}
If $\T$ has a minimal model function, \Cref{cmmf and computability gen} guarantees all $\spec{\T,\phi}$ are computable, and by \Cref{comp. set comp. number} it follows that all $\mu(\T,\phi)$ are computable. 

Finally, for every computable $r$, we get a theory $\T$ with a computable minimal model function and $\mu(\T)=r$ from \Cref{cmmf density}, and it follows that for any tautology $\phi$ it is true that $\mu(\T,\phi)=r$.

\end{proof}

\section{\tp{Proof of \Cref{FW non empty}}{Proof of result \ref{FW non empty}}}

\FWnonempty*

\begin{proof}
    Follows trivially from \Cref{FW density general case}: taking, for any $0\leq r\leq 1$, a finitely witnessable theory $\T$ (without a computable minimal model function) with $\mu(\T)=r$, and a tautology $\phi$, $\mu(\T,\phi)=r$.
\end{proof}

\end{document}